\newlength{\Oldarrayrulewidth}
\newcommand{\W}{{\mathbf{W}}}
\newtheorem{theorem}{Theorem}
\newtheorem{lemma}{Lemma}
\newtheorem{remark}{Remark}
\newtheorem{proposition}[theorem]{Proposition}
\newtheorem{corollary}{Corollary}
\newenvironment{proof}[1]{\medskip\par\noindent
{\bf Proof:\,}\,#1}{{\mbox{\,$\blacksquare$}\par}}
\title{Asymmetric Leaky  Private Information Retrieval}
\author{Islam Samy \quad  Mohamed A. Attia \quad Ravi Tandon \quad Loukas Lazos\\
Department of Electrical and Computer Engineering\\
University of Arizona, Tucson, AZ, USA. \\
Email: \{\textit{islamsamy, madel, tandonr, llazos}\}@email.arizona.edu\footnote{The work of I. Samy and L. Lazos was supported by NSF grant CNS 1813401. The work of M. Attia and R. Tandon was supported by NSF Grants CAREER 1651492, CNS 1715947, and the 2018 Keysight Early Career Professor Award. This work was presented in part at the 2019 IEEE International Symposium on Information Theory.}}
\begin{document}

\maketitle
\newcommand\blfootnote[1]{%
  \begingroup
  \renewcommand\thefootnote{}\footnote{#1}%
  \addtocounter{footnote}{-1}%
  \endgroup
}

%\affiliation{
	%	Department of Electrical and Computer Engineering, The University of Arizona}
%\email{Email: \{islamsamy@email, tandonr@email, and llazos@ece, \}.arizona.edu}

\maketitle
\vspace{-15pt}
\begin{abstract}
Information-theoretic formulations of the private information retrieval (PIR) problem have been investigated  under a variety of scenarios. Symmetric private information retrieval (SPIR) is a variant where a user is able to privately retrieve one out of $K$ messages  from $N$ non-colluding replicated databases without learning anything about the remaining $K-1$ messages. However, the goal of perfect privacy can be too taxing for certain applications. In this paper, we investigate if the information-theoretic capacity of SPIR (equivalently, the inverse of the minimum download cost) can be increased by relaxing both user and DB privacy definitions. Such relaxation is relevant in applications where  privacy can be traded for communication efficiency.

We introduce and investigate the Asymmetric Leaky PIR (AL-PIR) model with different privacy leakage budgets in each direction.
For user privacy leakage, we bound the probability ratios between all possible realizations of DB queries by a function of a non-negative constant $\epsilon$. For DB privacy, we bound the mutual information between the undesired messages, the queries, and the answers, by a function of a non-negative constant $\delta$.
We propose a general AL-PIR scheme that achieves an upper bound on the optimal download cost for arbitrary  $\epsilon$ and $\delta$. We show that the optimal download cost of AL-PIR is upper-bounded as $D^{*}(\epsilon,\delta)\leq 1+\frac{1}{N-1}-\frac{\delta e^{\epsilon}}{N^{K-1}-1}$. 
  Second, we obtain an information-theoretic lower bound on the download cost as $D^{*}(\epsilon,\delta)\geq 1+\frac{1}{Ne^{\epsilon}-1}-\frac{\delta}{(Ne^{\epsilon})^{K-1}-1}$.  The gap analysis between the two bounds shows that our AL-PIR scheme is optimal when $\epsilon =0$, i.e., under perfect user privacy and it is optimal within a maximum multiplicative gap of $\frac{N-e^{-\epsilon}}{N-1}$ for any $\epsilon>0$ and $\delta>0$.
%For $\epsilon>0$, we explore the opportunities offered for reducing the download cost.  
%Our bounds match for extreme values of $\epsilon$, and are within a constant multiplicative factor of $2$, for any $\epsilon$.  

\end{abstract}
\vspace{-18pt}
\section{Introduction}
In the era of big data and data analytics, users who access a plethora of online services face serious privacy risks. Their online behavior and data access patterns can be analyzed to reveal sensitive personal information and breach their privacy \cite{chor1995private}. One possible solution to such data leakages is to retrieve information privately by executing a private information retrieval (PIR) protocol. In a PIR protocol, the identity of the message retrieved by the user remains secret from the database(s). This is typically achieved at the expense of an increased communication cost to ensure that the desired message remains hidden among others.  
In the pioneering work by Chor {\em et al.} \cite{chor1995private}, the authors considered one-bit long messages. The overhead was calculated as the sum of the queries sent by the user (upload cost) and the answers provided by the database (download cost). Under arbitrarily large messages, the download cost becomes the dominant factor of the PIR overhead. This allows the PIR rate to be defined as the ratio of the message size to the number of downloaded bits. The maximum of these rates is referred to as the PIR capacity and its reciprocal as the download cost.

% The main idea of the DSS paradigm is to replicate repositories over multiple non-colluding databases. A user can request parts of the desired data from each database, so that no single database can identify the contents retrieved by the user. 

Since the introduction of the PIR problem in \cite{chor1995private}, an extensive body of works have investigated efficient PIR schemes that yield either computational \cite{yekhanin2012locally,gasarch2004survey,song2000practical,ostrovsky2007survey} or information-theoretic privacy guarantees \cite{shah2014one,sun2017capacity,sun2017optimal,sun2019capacity,sun2018multiround,sun2018capacity,tajeddine2018private,banawan2018capacity,wang2017symmetric,freij2017private,sun2018private,jia2019x,lin2018mds,sun2019breaking,zhou2019capacity,banawan2018multi,zhang2017private,banawan2019capacity,tajeddine2018robust,banawan2018private,wang2018secure,wang2018capacity,tandon2017capacity,wei2018fundamental,shariatpanahi2018multi,Heidarzadeh_2018, Heidarzadeh_2019,attia2018capacity,tian2018capacity,wang2019symmetric,tajeddine2019private,yang2018private,jia2019cross,kumar2019achieving,raviv2018private,banawan2018private2,banawan2019capacity2}.
The former achieves privacy assuming  computational limitations at the DBs, where cryptographic assumptions are invoked to preserve privacy such that NP-hard computations are required to reveal the requested message identity.
 In information-theoretic PIR, the DBs are assumed to be computationally unbounded, thus achieving a higher level of assurance.  Perfect privacy is guaranteed if the queries do not reveal any information about the desired message (privacy) and the answers are sufficient to recover it (decodability).  An intuitive PIR solution  is to download all $K$ messages from a database. In fact, this is the only way to guarantee perfect privacy in the single database case. However, privacy comes at an impractical communication overhead. 

\textbf{Review of Recent Progress on Information-Theoretic PIR}: 
A practical way to increase the PIR capacity is to consider a distributed storage system (DSS) of $N$ databases. Shah. {\em et al.}  \cite{shah2014one} proposed a PIR scheme that achieves a rate of $1-\frac{1}{N}$ when $K$ messages are  replicated across $N$ non-colluding databases.  Later, Sun and Jafar \cite{sun2017capacity} characterized the PIR capacity for any $N$ and $K$ as $(1+1/N+1/N^{2}+\cdot \cdot \cdot+1/N^{K-1})^{-1}$. The original scheme introduced in \cite{sun2017capacity} achieves capacity when the message size $L$ is allowed to grow as a function of $N$ and $K$. Subsequently, they characterized the PIR capacity for a fixed message size \cite{sun2017optimal}. Since the appearance of the fundamental result of Sun and Jafar \cite{sun2017capacity}, numerous important and practically relevant variations of PIR have been considered.

Multi-round PIR allows multiple rounds of communication between the user and databases. While  interaction does not increase capacity, it can reduce the storage overhead at each database \cite{sun2018multiround}. Sun and Jafar \cite{sun2018capacity} considered the robust PIR problem where $M-N$ out of a total of $M>N$ databases fail to respond to  user queries. Additionally, they characterized the capacity when $T<N$ databases collude and  share the received queries. Tajeddine {\em et al.} \cite{tajeddine2018private} considered MDS-PIR  for coded databases  where each message is separately coded using an ($N,M$) MDS code.  Banawan and Ulukus \cite{banawan2018capacity} derived the coded PIR capacity  for arbitrary $N, M$, and $K$. Wang and Skogland \cite{wang2017symmetric} showed that the PIR capacity remains $1-\frac{M}{N}$ even if each message is coded. In \cite{freij2017private} and \cite{sun2018private}, the scenario of $N$ MDS-coded databases with $T$ colluding ones was presented. However, the capacity of this case is still an open problem (for other variants of MDS-PIR, see \cite{jia2019x,lin2018mds,sun2019breaking,zhou2019capacity}). In \cite{banawan2018multi} and \cite{zhang2017private}, the case of multi-message PIR, where the user can use one query to request more than one messages, was investigated. Banawan and Ulukus \cite{banawan2019capacity} characterized the PIR capacity with Byzantine databases where any subset of databases can be adversarial and respond untruthfully. In \cite{tajeddine2018robust}, Tajeddine {\em et al.} studied the same model but in the presence of colluding databases.  Banawan and Ulukus \cite{banawan2018private} studied PIR through a wiretap channel, where an eavesdropper tries to decode the content sent through the channel. Other variants of PIR in the presence of eavesdroppers are studied in \cite{wang2018secure},\cite{wang2018capacity}. 
 
 The problem of PIR was also studied when the user has a cache or side-information, which can be useful in increasing PIR capacity \cite{tandon2017capacity, wei2018fundamental,shariatpanahi2018multi, Heidarzadeh_2018, Heidarzadeh_2019}.
 PIR from storage-constrained databases was studied in  \cite{attia2018capacity,wei2019private,banawan2019private}, where capacity was characterized under the assumption of uncoded storage across databases. Recently, Tian {\em et al.} \cite{tian2018capacity} proposed a new capacity-achieving scheme with an optimal message size of $N-1$ and a minimum upload cost. 
 Other lines of work considered different privacy requirements from the original PIR model in \cite{sun2017capacity}. The problem of symmetric PIR (SPIR) was studied in \cite{sun2019capacity}, where the user must be able to retrieve the message of interest privately (user privacy), while at the same time the databases must avoid any information leakage  about the remaining $K-1$ messages (DB privacy).
The SPIR optimal download cost was characterized as $\frac{N}{N-1}$ with common randomness at least $\alpha = \frac{1}{N-1}$ bits per desired message bits. 
Latent-variable PIR was considered recently in \cite{samy2020latent}, where privacy is required for a latent variable describing a predefined user attribute. 
Additional interesting variants of PIR  can be found in \cite{wang2019symmetric,tajeddine2019private,yang2018private,jia2019cross,kumar2019achieving,raviv2018private,banawan2018private2,banawan2019capacity2}.

The novel coding schemes and fundamental ideas developed in the above works have also helped in advancing other problems beyond PIR. For instance, an interesting connection between blind interference alignment (BIA) and PIR was studied in \cite{sun2016blind} showing that a good BIA scheme translates to a good PIR protocol.
Secure and private distributed matrix multiplication has been considered in \cite{chang2018capacity,jia2019capacity,d2019gasp,chang2019upload,aliasgari2019distributed
} addressing the problem of computing a product of two matrices with some constraints on the identity of the product matrices and/or the information content in the matrices. Jia and Jafar \cite{jia2019x}  showed the connection of the secure and private distributed matrix multiplication to one variation of the MDS-PIR problem. Recently, the problem of private set intersection (PSI) was studied in \cite{wang2019private} from a PIR perspective and capacity results were obtained. 

%\ravi{Include connections of PIR to index coding, and secure distributed computing; Mohamed: see Tim's paper on upload-download cost of private and secure matrix multiplication, and a follow up work of Jafar which improves the result}. 
 
% \ravi{Move this paragraph on SPIR to previous section; this section should start with the line that the above works have all focused on perfect privacy (either for the user (in PIR problems) or for the message (as in SPIR formulations)).. 

\textbf{Relaxing Privacy Metrics for PIR}: The above works have all focused on perfect privacy, either for the user (as in PIR), or for both the user and the DBs (as in SPIR). The perfect privacy requirement usually comes at the expense of  high download cost and does not allow tuning the PIR efficiency and privacy according to the application requirements. In scenarios of frequent message retrieval, trading user or DB privacy for communication efficiency could be desirable. Ideally, one would select a desired leakage level and then design a leakage-constrained PIR scheme that guarantees such privacy while maximizing the PIR capacity.

A few previous works have introduced privacy definitions that relax the notion of perfect privacy.  Asonov {\em et al.} replaced privacy with the concept of repudiation \cite{asonov2002repudiative}. The repudiation property is achieved if some uncertainty remains about the desired message. However, this metric does not provide any information-theoretic privacy guarantees, as repudiation is satisfied even if the retrieved message can be identified with almost certainty.  
Recently, Toledo {\em et al.} \cite{toledo2016lower} adopted a game-based differential privacy definition to increase the PIR capacity at the expense of bounded privacy loss. However, their privacy definition only captures the privacy of the submitted queries. The authors propose several schemes that hide the query identity and study their cost. Although the query privacy can be thought of as functional equivalent to information-theoretic PIR in some cases, it does not satisfy the perfect privacy definition.
  
In our prior work, we introduced the Leaky PIR (L-PIR) where a bounded amount of leakage is allowed about the message identity \cite{samy2019capacity}. We adopted a concept similar to differential privacy  to bound the leakage as a function of a non-negative constant $\epsilon$. The leakage in privacy is achieved by constructing multiple biased ``retrieval paths'' across databases where each path realizes one query per database. Lin {\em et al.} \cite{lin2019weakly,lin2020capacity} relaxed user privacy by allowing bounded 
mutual information between the queries and the corresponding requested message index. Unlike \cite{lin2019weakly,lin2020capacity}, which deal with the average leakage (measured by mutual information), the L-PIR model in \cite{samy2019capacity} satisfies the privacy leakage constraints strictly for all possible  query/message index combinations, and thus provides stronger privacy guarantees. 

In another recent work, Guo {\em et al.} \cite{guo2019information} considered the problem of SPIR with perfect user privacy and relaxed DB privacy. DB privacy was relaxed by allowing a bounded mutual information (no more than $\delta$) between the undesired messages, the queries, and the answers received by the user. Similar to the original work on SPIR in \cite{sun2019capacity},  SPIR with relaxed DB privacy in \cite{guo2019information} requires sharing common randomness among DBs and comes at the expense of a loss in the PIR capacity.

{\bf Summary of contributions--} 
 We investigate a three-way tradeoff between  user privacy, DB privacy, and the communication efficiency of PIR.
We study the problem of \textit{Asymmetric Leaky PIR} (AL-PIR) where some information about the identity of the desired message is allowed to leak to the DBs, and some information about the undesired messages is allowed to leak to the user. The goal is to trade privacy in both directions for achieving gains in PIR capacity, thus making PIR more communication-efficient. 
For user privacy, we adopt the  metric introduced in our prior work \cite{samy2019capacity}, where the privacy bound is determined as a function of a non-negative constant $\epsilon$. For bounding DB privacy, we adopt a mutual information-based leakage metric %(studied in \cite{guo2019information})
to be  bounded by a non-negative constant $\delta$.
We next summarize the main contributions:

\begin{itemize}
\item We propose an AL-PIR scheme that satisfies  the leakage budgets in both directions for arbitrary values of $(\epsilon, \delta)$, an arbitrary number of $K$ messages,  and an arbitrary number of $N$ databases. The achievable download cost of this scheme is given by $D(\epsilon,\delta)= 1+\frac{1}{N-1}-\frac{\delta e^{\epsilon}}{N^{K-1}-1}$. This cost also represents an upper bound on the optimal download cost (lower bound on the capacity) of the AL-PIR.   We use an alternate perfect privacy PIR scheme that follows a \textit{path-based approach}, where a user's query is equivalent to selecting one of several possible paths across databases. %\textcolor{red}{LL: The idea of the path is not an established concept. Need to explain what a path is.} 
A path is defined as a set of queries, one per database,  that achieves decodability, however different paths incur different download costs. We leverage this cost imbalance to introduce leakage through the idea of biasing the path selection probabilities. A path giving a lower download cost can be used more frequently compared to higher download cost paths. This biasing introduces user privacy leakage. The path selection probabilities are chosen to minimize the download cost while satisfying the privacy budget, measured by $\epsilon$. To achieve DB privacy, our scheme requires sharing common randomness among databases. We combine the path-based approach  with the ideas of the scheme presented in \cite{guo2019information} to arrive at our general AL-PIR scheme. In particular, achieving a DB privacy leakage of no more than $\delta L$ bits, requires common randomness given by $\big(\frac{1}{N-1}-\frac{e^{\epsilon}+N^{K-1}-1}{N^{K-1}-1} \ \delta\big)L$ bits, which represents an upper bound on the optimal common randomness size.

\item We present a converse proof to obtain a lower bound on the optimum download cost (upper bound on capacity). This bound is characterized by $D^{*}(\epsilon,\delta)\geq 1+\frac{1}{Ne^{\epsilon}-1}-\frac{\delta}{(Ne^{\epsilon})^{K-1}-1}$.
The upper and lower bounds are shown to match each other at extreme values of epsilon ($\epsilon=0$; $\epsilon\to\infty$) and for any  $\delta$.
Moreover, we show through gap analysis that our upper and lower bounds are within a maximum multiplicative gap of $\frac{N-e^{-\epsilon}}{N-1}$ for any $\epsilon > 0$ and $\delta\geq 0$. 

\item We derive a lower bound on the optimal required common randomness at the databases.  This bound characterizes that achieving a DB privacy leakage of no more than $\delta L$ bits, requires shared randomness of size no less than $(\frac{1}{Ne^{\epsilon}-1} \ -\frac{(Ne^{\epsilon})^{K-1}}{(Ne^{\epsilon})^{K-1}-1} \delta)L$ bits.

\item We investigate the tradeoffs AL-PIR variations in both sides of leakage as special cases
of our general $(\epsilon, \delta)$ AL-PIR scheme.  In particular, we show a three-way tradeoff between download cost, user privacy, and DB privacy, such that enhancing one of them would be at the expense of the other two. We also show matching results for the following special cases for our derived bounds on the AL-PIR model: a) perfect user privacy (original PIR) \cite{sun2017capacity}, b) perfect user and DB privacy (SPIR) \cite{sun2019capacity}, c) Leaky user privacy (L-PIR) \cite{samy2019capacity}, and d) perfect user privacy and leaky DB privacy \cite{guo2019information}.
\end{itemize}

\begin{figure}[t]
\begin{center}
\includegraphics[width=0.9\linewidth]{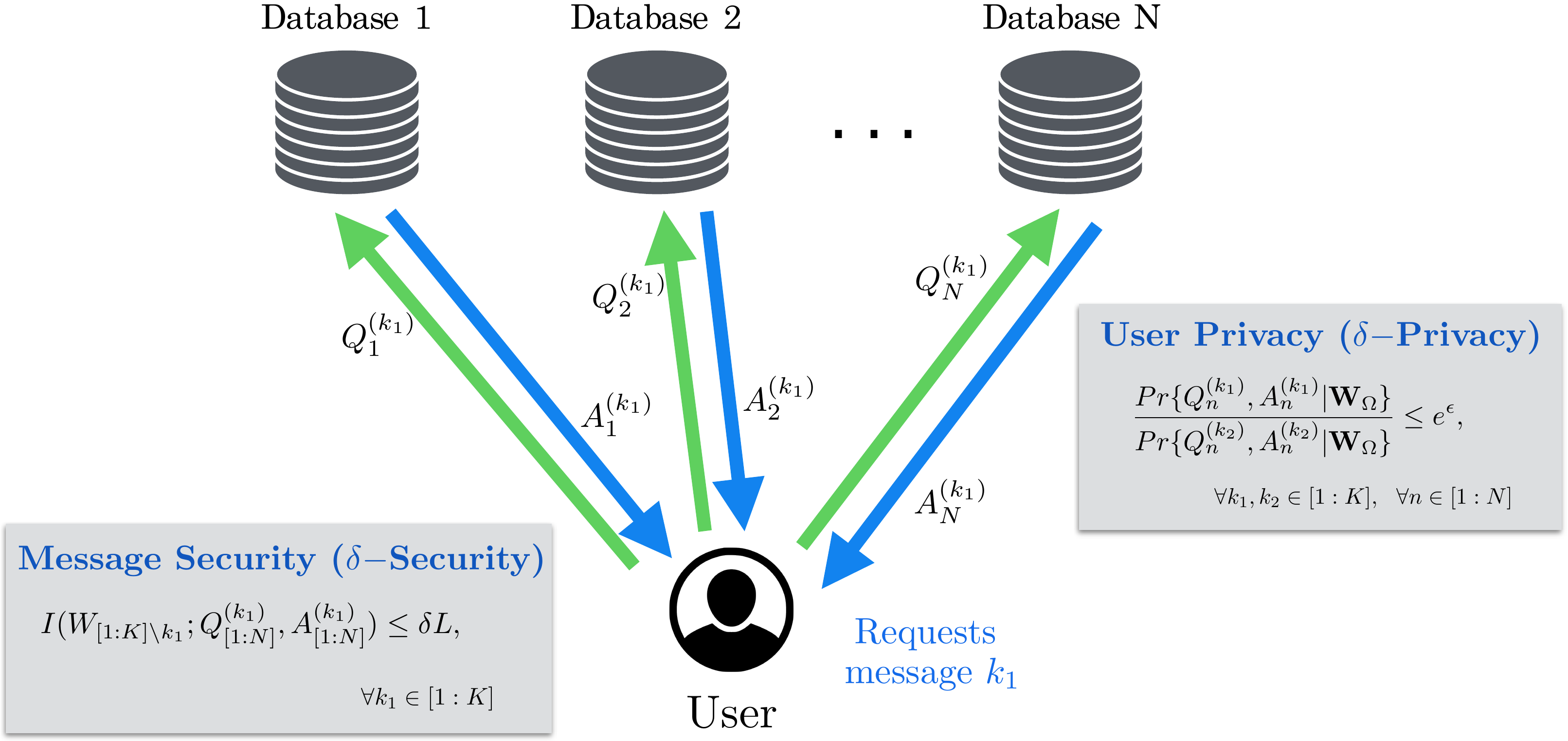}
\end{center}
\vspace{-0.2in}
\caption{Asymmetric leaky private information retrieval (AL-PIR) problem.}
\label{fig:sys}
\vspace{-0.2in}
\end{figure}

\vspace{-18pt}
\section{System Model: Asymmetric Leaky PIR}
\label{PF}
\vspace{-0.1in}
 We study the PIR problem illustrated in Figure \ref{fig:sys}. We consider $N$ databases  DB$_1$, DB$_2$,$\ldots$, DB$_N$  and $K$ independent messages $W_{1},W_{2},\ldots,W_{K}$, each of size $L$ bits, such that 
 \begin{equation}
 \label{eq:indepp}
 H(W_{1},W_2,\cdots,W_{K}) =\sum_{k=1}^{K} H(W_{k}),
 \end{equation}
  \begin{equation}
  \label{eq:size}
H(W_{1}) =H(W_{2})= \cdots = H(W_{K})=L.
 \end{equation}
 A user interested in privately retrieving $W_k$, $k \in [1:K]$\footnote{{\bf Notation:}  Through this work, we use the notation $[1:X]$ to represent the set of integers from $1$ to $X$.}
 sends $N$ separate queries $Q^{(k)}_{1},\cdots,Q^{(k)}_{N}$ to each of the $N$ DBs, where $Q^{(k)}_{n}$ denotes the query sent to the $n$th database (DB$_n$), $n\in[1:N]$, when retrieving message $W_k$. 
The $N$ DBs are assumed to be replicated and non-colluding, i.e., they store all the $K$ messages and they do not share the queries received from the user.
We also assume the DBs are interested in achieving privacy, i.e., the user must only decode the requested message subject to a leakage constraint.
To achieve DB privacy, the $N$ DBs are allowed to share common randomness denoted by a random variable $S$ of size $\alpha L$ bits, i.e., $H(S)=\alpha L$. Moreover, $S$ is not known to the user. 

Upon receiving $Q^{(k)}_{n}$, the  $n$th database generates the corresponding answer $A^{(k)}_{n}$ as a deterministic function of the query $Q^{(k)}_{n}$, the $K$ messages, and the shared common randomness $S$, i.e., 
\begin{equation}\label{answerfunction}
H\left(A^{(k)}_{n}
 |Q^{(k)}_{n},W_{1},\ldots,W_{K},S\right) = 0.
\end{equation}
The user must be able to decode the desired message $W_k$ upon receiving the answers  from the $N$ databases.  
Formally, the AL-PIR scheme must satisfy the following correctness, user privacy, and DB privacy constraints. 
\medskip

{\bf Correctness:}   Given queries $Q^{(k)}_{[1:N]}\triangleq \{Q^{(k)}_{1},\cdots,Q^{(k)}_{N}\}$, the user must be able to decode the desired message $W_{k}$, with  probability of error $P_e$, by collecting  the corresponding answers  $A^{(k)}_{[1:N]}\triangleq\{A^{(k)}_{1},\cdots,A^{(k)}_{N}\}$  from the $N$ DBs, i.e., 
\begin{equation}
\label{eq:corre}
H\left(W_{k}|Q^{(k)}_{[1:N]},A^{(k)}_{[1:N]}\right)=o(L)L,
\end{equation}
where $o(L)$ is any function that approaches zero as $L\rightarrow \infty$. $o(L)$ is set to zero if $P_e$ is required to be exactly zero.

{\bf $\delta-$DB privacy:}
In the original SPIR formulation \cite{sun2019capacity}, the authors assume no leakage to the user about the undesired messages. For a desired message $W_k$, perfect DB privacy is satisfied if
\begin{equation}
    I\left(\W_{[1:K]\setminus k};Q^{(k)}_{[1:N]},A^{(k)}_{[1:N]}\right)  = 0, \quad \forall k \in [1:K],
\end{equation}
where  $\W_{[1:K]\setminus k} \triangleq (W_1, \dots, W_{k-1},W_{k+1},\dots,W_K)$ is the set of all messages except $W_k$.
In this work, we relax this condition by assuming a general leaky DB privacy constraint.
The leaked information about the undesired messages must be bounded as,
\begin{equation}
\label{eq:MPri}
    I(\W_{[1:K]\setminus k};Q^{(k)}_{[1:N]},A^{(k)}_{[1:N]})\leq \delta L, \quad \forall k \in [1:K],
\end{equation}
where $\delta\geq 0$ is  a non-negative constant.

{\bf $\epsilon-$user privacy:} 
Under perfect user privacy, the privacy constraints are expressed as,
\begin{align}
\label{eq:DP0}
(A^{(k_1)}_{n},Q^{(k_1)}_{n},W_{1},\cdots,W_{K}) \sim (A^{(k_2)}_{n},Q^{(k_2)}_{n},W_{1},\cdots,W_{K}), \quad \forall k_1,k_2 \in [1:K].
 \end{align}
 This guarantees that the submitted queries are always independent of the message index. The previous constraint can be alternatively expressed as,
 \begin{align}
\label{eq:DP0'}
(A^{(k_1)}_{n},Q^{(k_1)}_{n}|\W_{\Omega}) \sim (A^{(k_2)}_{n},Q^{(k_2)}_{n}|\W_{\Omega}), \quad \forall k_1,k_2 \in [1:K],
 \end{align}
where $\W_{\Omega}$ is any subset of the $K$ messages, i.e.,  $\W_{\Omega}\subseteq \{W_1,\ldots,W_K\}$.
In this work,  the privacy constraint is relaxed such that given any subset $\mathbf{W}_{\Omega}$ of the $K$ messages,  the following likelihood ratio is bounded as follows: 

\begin{equation}
%\nonumber
\label{eq:DP}
\frac{Pr\{Q^{(k_1)}_{n}=\pi,A^{(k_1)}_{n}=\gamma| \mathbf{W}_{\Omega}\}}{Pr\{Q^{(k_2)}_{n}=\pi,A^{(k_2)}_{n}=\gamma| \mathbf{W}_{\Omega}\}} \leq e^{\epsilon}, \quad \forall k_1,k_2 \in [1:K],~~\forall n \in [1:N],
 \end{equation}
where $\pi$ and $\gamma$ represent any possible realizations for the queries and answers, respectively and $\epsilon$ is a non-negative constant. Unlike  perfect user privacy constraint which ensures  that queries and answers are independent of the message index,  the leaky privacy definition allows some queries and answers to be used more frequently when certain messages are retrieved. By setting $\epsilon=0$, the $\epsilon-$user privacy definition in \eqref{eq:DP} becomes equivalent to the perfect privacy constraint in \eqref{eq:DP0'}.

\medskip

\noindent \textbf{Other leaky user privacy definitions:} 
%\textcolor{red}{This paragraph does not explain or motivate in any way the asymmetric nature of the definition. If all you want to do is compare with other weaker  metrics for user privacy do so and leave the DB privacy out. If the intend is to explain why the asymmetry, do that, don't just state it .}
To  relax DB privacy, we adopt the mutual information metric  in \cite{guo2019information}.
On the other hand,  we use the probability metric we introduced in \cite{samy2019capacity} to bound the leakage of user privacy. The latter metric strictly satisfies the privacy  constraint for all possible query/message index combinations. 
We note that there are other weaker metrics one can use for relaxing user privacy.  In \cite{lin2019weakly}, Lin {\it et al.}  proposed a metric $\eta$ that gives a bound on the average privacy leakage over all databases for a desired message index given by a random variable $\theta\in[1:K]$ such that,
\begin{equation}
\begin{split}
   \frac{1}{N} \sum_{n=1}^N I(\theta;Q^{(\theta)}_n)\leq \eta.
    \end{split}
\end{equation}
% \ravi{How about the following?
% \begin{align}
%     I(\theta; Q_n^{(\theta)})\leq \rho
% \end{align}
% where $\theta$ is the random variable denoting the requested message index. It may be uniform, or many not be.. 
% }
Jia  {\it et al.} introduced the following privacy constraint \cite{jia2019capacity},
\begin{equation}
    H(A_n^{(k+1)}|W_1,\dots,W_k)-H(A_n^{(k)}|W_1,\dots,W_k)=\rho L, \quad \forall k \in [1:K], \quad \forall n \in [1:N],
\end{equation}
where   parameter $\rho$  controls the leakage budget, with $0\leq \rho \leq \frac{1}{N}$.  
In contrast to our $\epsilon$-user privacy definition in \eqref{eq:DP}, both of the  metrics provide average  privacy guarantees, i.e., they bound the average privacy leakage over all possible retrieval schemes. This means that the privacy leakage is allowed to exceed the required bound in the case of individual message retrievals.  In this work, we extend the definition in \eqref{eq:DP0} to  investigate the scenario when the distribution of the sent queries and the corresponding answers is allowed to depend on the requested message index within predefined limits. Also, the  AL-PIR model satisfies the $\epsilon-$user privacy definition strictly over all possible realizations of answers and queries. This ensures that leakage is always within the allowed budget $\epsilon$ for all individual message retrievals.

\medskip
\noindent {\bf Communication Cost:} To evaluate the performance of the AL-PIR scheme, 
%we consider the total amount of communication between the user and the DBs for retrieving the desired message. Similar to prior works, 
we adopt the Shannon theoretic formulation where the message size is assumed to be arbitrarily long and therefore, the upload cost is negligible compared to the download cost \cite{sun2017capacity}. In this case, the AL-PIR rate is the reciprocal of the download cost $D(\epsilon,\delta)$, which characterizes the total information bits  the user has to download to retrieve one desired message bit.
Let $D_{\epsilon,\delta}$ be the total number of downloaded bits to retrieve message $W_k$, for some  $\epsilon$ and $\delta$, and $L$ be the size of the desired message. The normalized download cost is given by,
\begin{equation}
D(\epsilon,\delta)=\frac{{D_{\epsilon,\delta}}}{L}=\frac{\sum_{n} H(A^{(k)}_{n})}{H(W_{k})}.
\end{equation}

We say that the pair $(L,D_{\epsilon,\delta})$ is achievable if there exists an AL-PIR scheme that satisfies the correctness, DB privacy, and user privacy  conditions in \eqref{eq:corre}, \eqref{eq:MPri}, and \eqref{eq:DP}, respectively, and can retrieve a message of size $L$ bits by downloading a total of $D_{\epsilon,\delta}$ bits.
 Our goal is to find the optimal download cost $D^{*}(\epsilon,\delta)$ such that
\begin{equation}
    D^{*}(\epsilon,\delta)=\min\{D_{\epsilon,\delta}/L:(L,D_{\epsilon,\delta}) \  \text{is achievable}\}.
    \end{equation} 
The capacity of the AL-PIR $C^{*}(\epsilon,\delta)$ is the reciprocal of $D^{*}(\epsilon,\delta)$, 
\begin{equation}
C^{*}(\epsilon,\delta)=\max\{L/D_{\epsilon,\delta}:(L,D_{\epsilon,\delta}) \  \text{is achievable}\}.
\end{equation}

\noindent {\bf Optimal common randomness size:} We are also interested in characterizing the fundamental limits of common randomness $S$  needed to be stored at the databases. In general, the common randomness size $\alpha$ is a function of the privacy budget parameters $(\epsilon,\delta)$. Therefore, in the following discussion, we use the notation $H(S)=\alpha(\epsilon,\delta)L$. We define $\alpha^*(\epsilon,\delta)$ as the minimum common randomness size that satisfies the correctness, DB privacy, and user privacy  conditions in \eqref{eq:corre}, \eqref{eq:MPri}, and \eqref{eq:DP}, respectively, i.e.,
\begin{align}
\label{eq:alpha-opt}
    \alpha^{*}(\epsilon,\delta) = \min\{\alpha(\epsilon,\delta): \text{\eqref{eq:corre}, \eqref{eq:MPri}, and \eqref{eq:DP} are satisfied}\}.
\end{align}

%The relaxed privacy definition shows us that any pair of answer and query can be biased towards a desired message.  

\section{Main Results and Discussion}
\label{MR}

%\begin{figure}[t]
%\begin{center}
% \setlength{\tabcolsep}{-0.037in}
% \begin{tabular}{cc}
%\includegraphics[trim=0.3in 0in 0.0in 0in, width=2.82in]{PIR1.eps} ~&\includegraphics[trim=0.3in 0in 0.0in 0in, width=2.82in]{PIR2.eps} \\
%(a) & (b) 
%\end{tabular}  
%\end{center} 
%\caption{(a) Lower and upper bounds of AL-PIR for $N=2$, and $K=2$. The download cost of perfect privacy is obtained when $\epsilon =0$, and (b) Upper bound of AL-PIR and the download cost of perfect privacy as $K$ grows.}
%\label{fig:UL}
%\end{figure}

In this section, we present our main results on the optimal download cost and the  required amount of shared randomness for AL-PIR. Given desired privacy budgets $\epsilon$ and $\delta$ for the user and DB privacy leakage, respectively, we state our main results in the following Theorems.

%\begin{theorem}
%\label{Thm:Rndm}
%To satisfy the DB privacy definitions, the required amount of shared randomness is
%\begin{equation}
%\label{eq:Rndm}
%    H(S)=\alpha L= L(\frac{1}{N-1}-\frac{e^{\epsilon}+N^{K-1}-1}{N^{K-1}-1} \ \delta).
%\end{equation}
%\end{theorem}

\begin{theorem}
\label{Thm:UB}
Define $d_1(\epsilon,\delta):= 1+\frac{1}{N-1}-\frac{\delta e^{\epsilon}}{N^{K-1}-1}$. For $N\geq 2$
and  shared randomness $S$ with  size $H(S)\geq \alpha_{1}(\epsilon,\delta) L,$ where
\begin{equation}
\label{eq:alpha-min}
   \alpha_{1}(\epsilon,\delta)= %\max\left(0,\
   %   \frac{1}{N-1}-\frac{e^{\epsilon}+N^{K-1}-1}{N^{K-1}-1} \ \delta\right)=   
   \begin{cases}
      \frac{1}{N-1}-\frac{e^{\epsilon}+N^{K-1}-1}{N^{K-1}-1} \ \delta, & 0\leq \delta < \delta_{1}(\epsilon),\\
      0, &  \delta > \delta_{1}(\epsilon),
      \end{cases}
\end{equation}
the optimal download cost of  AL-PIR, satisfying both the $\epsilon-$user privacy and $\delta-$DB privacy definitions, is upper-bounded  by
\begin{equation}
\label{eq:UB}
    D^{*}(\epsilon,\delta)\leq D^{\text{UB}}(\epsilon,\delta)= \begin{cases} d_1(\epsilon,\delta), & 0\leq \delta < \delta_{1}(\epsilon),\\
    d_1(\epsilon,\delta_{1}(\epsilon)), & \delta\geq\delta_{1}(\epsilon).
    \end{cases}
\end{equation}
In \eqref{eq:alpha-min} and \eqref{eq:UB},  $\delta_{1}(\epsilon)$ is the maximum DB privacy leakage (when no common randomness is required, i.e., $ \alpha_{1}(\epsilon,\delta)=0$) which is a function of the allowed  user privacy leakage $\epsilon$, and is given by,
\begin{align}
\label{eq:delta-max-lb}
\delta_{1}(\epsilon)=\frac{N^{K-1}-1}{(N-1)(e^{\epsilon}+N^{K-1}-1)}.
\end{align}
% Using this value of $\delta_{1}(\epsilon)$, the expression for  $d_1(\epsilon,\delta_{1}(\epsilon))$ is given as
% \begin{align}
% \label{eq:d1_e-d}
%     d_1(\epsilon,\delta_{1}(\epsilon))= 1+\frac{N^{K-1}}{e^{\epsilon}+N^{K-1}-1}\Bigg (\frac{1}{N}+\dots+\frac{1}{N^{K-1}} \Bigg ).
% \end{align}
\end{theorem}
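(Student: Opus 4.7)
The plan is to construct an explicit AL-PIR scheme by combining two ingredients: (i) a \emph{path-based} query structure in which each query tuple decodes the desired message, but different tuples cost different numbers of downloaded symbols, and (ii) a common-randomness mixing layer in the spirit of \cite{guo2019information} that is tuned to leak exactly $\delta L$ bits about the undesired messages. I would bias the probability of selecting cheap paths when the desired message is $W_k$, subject to the $\epsilon$-likelihood-ratio budget in \eqref{eq:DP}, which lowers the expected download cost; and I would size the common randomness $S$ down to $\alpha_1(\epsilon,\delta)L$ so that just enough of the answers remain unmasked to reveal $\delta L$ bits, which further reduces the required download.

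First I would instantiate the path set. Building on a Sun--Jafar-style retrieval block (which exists for any $N\geq 2$ and $K$), I enumerate retrieval paths $\pi\in\Pi$, where a path specifies one query $Q^{(k)}_n$ per database $n\in[1:N]$ and satisfies \eqref{eq:corre} for the desired index. The paths partition into a set of ``cheap'' paths (download cost $1$) and ``expensive'' paths (download cost $1+\tfrac{1}{N-1}$), with fixed cardinalities that depend on $N$ and $K$; this is exactly the structure exploited in our earlier L-PIR construction \cite{samy2019capacity}. Second I would optimize the conditional path probabilities $\Pr(\pi\mid k)$ by solving the linear program that minimizes expected download cost subject to $\Pr(\pi\mid k_1)/\Pr(\pi\mid k_2)\leq e^{\epsilon}$ for all $k_1,k_2$. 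The optimal solution assigns the maximum allowed ratio $e^{\epsilon}$ to the cheap paths that decode $W_k$ against those that do not, which directly yields the $\epsilon$-dependence in the bound.

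Third I would overlay the DB-privacy layer. I inject the shared randomness $S$ into the answers by the Guo-et-al.\ template so that a controllable fraction of the answer symbols are un-masked (contributing to both $\delta$-leakage and download savings), while the remaining symbols are masked by fresh linear combinations of $S$. The size $\alpha_1(\epsilon,\delta)L$ is chosen as the smallest value that, together with the biased path distribution from Step~2, still keeps $I(\mathbf{W}_{[1:K]\setminus k};Q^{(k)}_{[1:N]},A^{(k)}_{[1:N]})\leq \delta L$; the non-trivial accounting is that biasing paths toward cheap patterns creates correlated exposure across databases, so the entropy of $S$ that must be hidden scales with $e^{\epsilon}$ through the factor $\frac{e^{\epsilon}+N^{K-1}-1}{N^{K-1}-1}$ appearing in $\alpha_1(\epsilon,\delta)$. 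Once $\alpha_1$ is set, the expected download cost follows by linearity and gives exactly $d_1(\epsilon,\delta)$.

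Finally I would verify the three constraints and handle the regime split. Correctness is immediate from decodability of each path, $\epsilon$-user privacy follows from the LP design since the mixing by $S$ is independent of $k$, and $\delta$-DB privacy follows from the calibrated common-randomness accounting. For the boundary $\delta_1(\epsilon)$, I set $\alpha_1(\epsilon,\delta)=0$ and solve, which yields $\delta_1(\epsilon)=\frac{N^{K-1}-1}{(N-1)(e^{\epsilon}+N^{K-1}-1)}$; beyond this threshold, $S$ is unnecessary and additional DB leakage cannot be converted into further download savings, so the cost saturates at $d_1(\epsilon,\delta_1(\epsilon))$. The main obstacle is Step~3: designing the common-randomness injection so it remains consistent across all biased paths and so the induced DB-privacy leakage is \emph{exactly} $\delta L$ (not merely upper-bounded by it) when $\alpha_1(\epsilon,\delta)L$ bits of $S$ are used; pinning down the coefficient $\frac{e^{\epsilon}+N^{K-1}-1}{N^{K-1}-1}$ requires carefully tracking how the biased path frequencies weight each database's answer entropy against $H(S)$.
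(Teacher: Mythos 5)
Your proposal follows essentially the same route as the paper: a path-based scheme where all $N^K$ retrieval paths are biased so that the $N$ cheap paths are chosen with probability $p = e^{\epsilon}/(Ne^{\epsilon}+N^K-N)$ (pinned down jointly by the $\epsilon$-ratio constraint from above and the $\delta$-leakage constraint from below), combined with a Guo-et-al.-style common-randomness mask sized at $\alpha_1(\epsilon,\delta)L$. One concrete correction to your cost accounting: in the paper's construction each message is split into a masked part $W_k^{(1)}$ of size $(N-1)\alpha_1 L$ and an unmasked part $W_k^{(2)}$, and the cheap paths must still download the key $S$ itself from the otherwise-idle database, so their cost is $(1+\alpha_1(\epsilon,\delta))L$, not $L$ as you wrote; it is precisely this $\alpha_1 L$ surcharge that injects the $\delta$-dependence into the expected download and makes the calculation land on $d_1(\epsilon,\delta)$ rather than the L-PIR value $d_1(\epsilon,\delta_1(\epsilon))$. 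With that fix, your outline matches the paper's Lemmas 1--3 and the download-cost computation in Section~4.
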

The  proof of Theorem~\ref{Thm:UB} is presented in Section~\ref{Pr1}. As a result of Theorem~\ref{Thm:UB}, we have the following remark.

\begin{remark}
The required size of shared randomness for our achievability scheme, as given by $\alpha_{1}(\epsilon,\delta)$ in \eqref{eq:alpha-min}, yields an upper bound on the optimal size of minimum shared randomness $\alpha^*(\epsilon,\delta)$ as defined in \eqref{eq:alpha-opt}, i.e., 
$
   \alpha^*(\epsilon,\delta)\leq \alpha_1(\epsilon,\delta).$ %= \max\left(0,\
     % \frac{1}{N-1}-\frac{e^{\epsilon}+N^{K-1}-1}{N^{K-1}-1} \ \delta\right).
%\end{equation}
Moreover, $\alpha_1(\epsilon,\delta)$ is also sufficient to satisfy  $(\epsilon', \delta')$ privacy constraints, such that $\epsilon'\geq \epsilon$ and $\delta'\geq\delta$. In other words, if a given amount of common randomness is sufficient to satisfy $(\epsilon, \delta)$ privacy, then it is also sufficient if the privacy budgets are increased. 
\end{remark}

%\begin{remark}[\textbf{Bounds for $\delta\geq\delta_{1}(\epsilon)$}]
%\label{remark2}
%Substituting any  $\delta>\delta_{1}(\epsilon)$ in \eqref{eq:alpha-min} yields $ \alpha_{1}(\epsilon,\delta)=0$. This means that no shared randomness is required. For such $\delta$, the download cost is fixed to
%\begin{align}
 %  D^{\text{UB}}(\epsilon,\delta)=d_1(\epsilon,\delta_{1}(\epsilon))&= 1 +\delta_{1}(\epsilon)= 1+\frac{N^{K-1}}{e^{\epsilon}+N^{K-1}-1}\Bigg (\frac{1}{N}+\dots+\frac{1}{N^{K-1}} \Bigg ),\quad\forall\delta\geq\delta_{1}(\epsilon). \label{eq:d1_e-d}
%\end{align}
%\end{remark}

In Figure~\ref{fig:delta-var}, we show the effect of $\epsilon$ and $\delta$ on the download cost for the case when $N=K=2$. We can observe the following: 
a) the download cost is a monotonically decreasing function of the privacy budgets $\epsilon$ and $\delta$;
b) as $\epsilon $ approaches infinity, which corresponds to no user privacy, the achieved download cost approaches $1$; c) for $\epsilon =0$ (perfect user privacy) and as $\delta$ approaches zero (perfect DB privacy), the achieved download cost is $2$ which matches the case of SPIR studied in \cite{sun2019capacity} where the optimal download cost is $\frac{N}{N-1}=2$; and d) for $\delta \geq \delta_{1}(\epsilon)=\nicefrac{1}{(e^{\epsilon}+1)}$ (or $\epsilon\geq \ln(\nicefrac{1}{\delta}-1)$), the download cost is only a function of $\epsilon$ (the line corresponding to $\delta=0.4$).

% {\color{red} \ravi{This bullet is very confusing to read; why does it approach 1 ?? what is $\delta_2$; }
% d) the download cost approaches $1$ according to the relation in \eqref{eq:d1_e-d} when $\delta > \delta_{1}$, i.e., DB privacy leakage is maximum, as defined in \eqref{eq:delta-max}. If we define $\epsilon_{\delta}$ as the inverse function of \eqref{eq:delta-max}
% \begin{align}
% \epsilon_{\delta} = \ln \left((N^{K-1}-1)\left(\frac{1}{\delta(N-1)}-1\right)\right),
% \end{align}
% then when $\epsilon > \epsilon_{\delta}$,
% %the trade-off is given by the relation in \eqref{eq:d1_e-d}.
%  we obtain the same download cost vs. user privacy tradeoff shown in \eqref{eq:d1_e-d}. For instance, when $\epsilon \geq \epsilon_{\delta_2}$ the two tradeoff curves coincide for both $\delta = 0.4$ and $\delta = 0.04$. 
%  }

\begin{figure}[t]
\begin{center}
\includegraphics[width=0.65\linewidth]{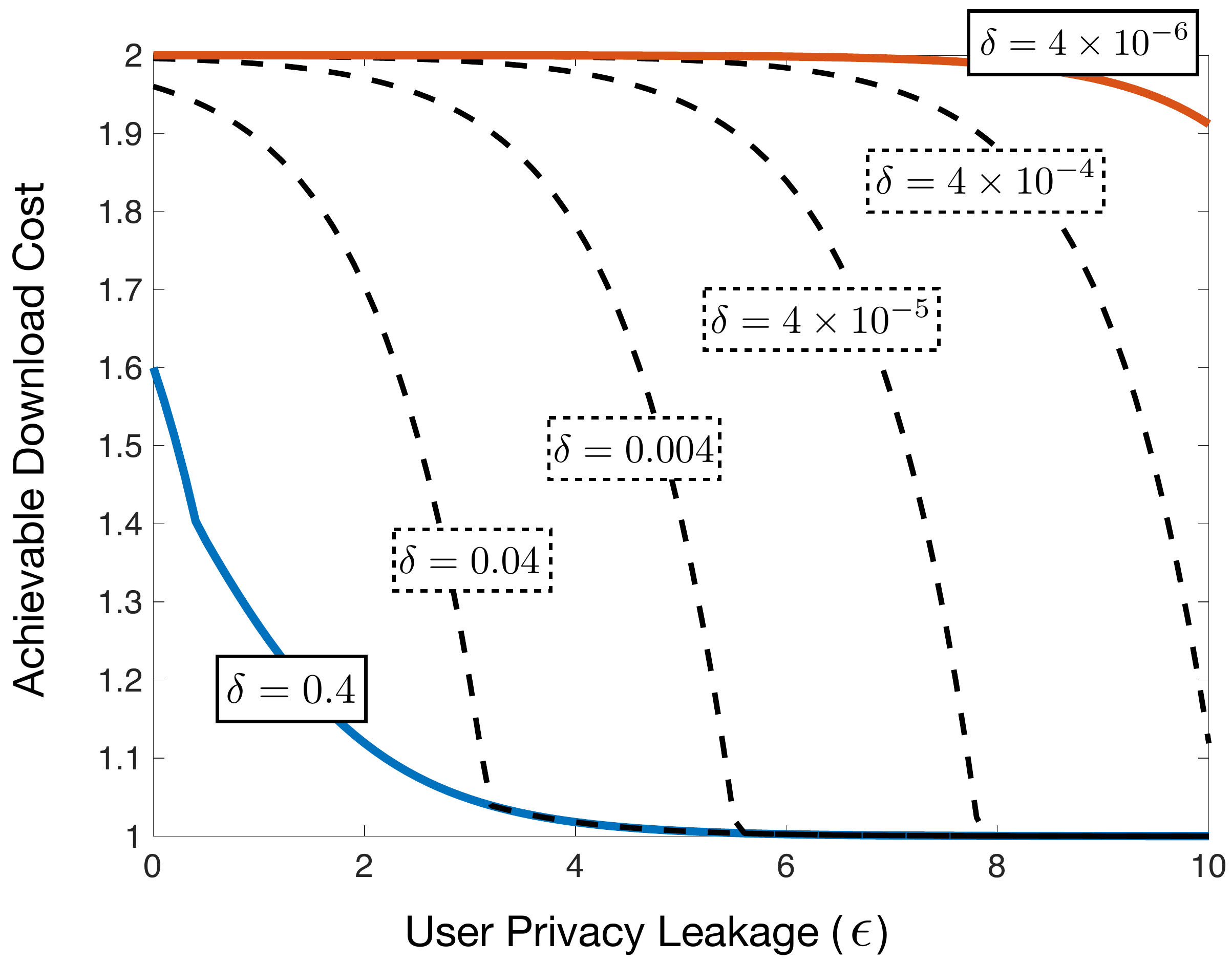}
\end{center}
%\vspace{-0.1in}
\caption{The achievable download cost for our AL-PIR scheme when $N=K=2$  as  a function of $\epsilon$ for different values of $\delta$.}
\label{fig:delta-var}
%\vspace{-0.1in}
\end{figure}

%\textcolor{red}{I know we have worked in the two theorems but the presentation is inconsistent. Theorem 1 presents the common randomness first, followed by the download cost followed be $delta_1$. Theorem 2 presents the common randomness last. Shall we unify to one style?}

\vspace{10pt}
\begin{theorem}
\label{Thm:LB}
Define $d_2(\epsilon,\delta):= 1+\frac{1}{Ne^{\epsilon}-1}-\frac{\delta}{(Ne^{\epsilon})^{K-1}-1}$.
For $N\geq 2$, and  shared randomness $S$ with  size $H(S)\geq \alpha_{2}(\epsilon,\delta) L,$ where
\begin{align}
     \alpha_2(\epsilon,\delta)=% \max\left(0,\  \frac{1}{Ne^{\epsilon}-1} \ -\frac{(Ne^{\epsilon})^{K-1}}{(Ne^{\epsilon})^{K-1}-1} \delta\right).
       \begin{cases}
     \frac{1}{Ne^{\epsilon}-1} \ -\frac{(Ne^{\epsilon})^{K-1}}{(Ne^{\epsilon})^{K-1}-1} \ \delta, & 0\leq \delta < \delta_{2}(\epsilon),\\
      0, &  \delta > \delta_{2}(\epsilon),
      \end{cases}
\end{align}
the optimal download cost of AL-PIR subject to $\epsilon-$user privacy and $\delta-$DB privacy  is lower-bounded by
\begin{equation}
\label{eq:LB}
    D^{*}(\epsilon,\delta)\geq D^{LB}(\epsilon,\delta)= \begin{cases} d_2(\epsilon,\delta), &0\leq \delta < \delta_{2}(\epsilon),\\
    d_2(\epsilon,\delta_{2}(\epsilon)), & \delta\geq\delta_{2}(\epsilon),
    \end{cases}
\end{equation}
where 
\begin{align}
    &\delta_{2}(\epsilon)=\frac{(Ne^{\epsilon})^{K-1}-1}{(Ne^{\epsilon}-1)(Ne^{\epsilon})^{K-1}}.    \label{eq:delta-max-ub}%\\
    %&d_2(\epsilon,\delta_{2}(\epsilon))=1+\delta_{2}(\epsilon)=1+ \frac{1}{Ne^{\epsilon}}+\dots+\frac{1}{(Ne^{\epsilon})^{K-1}}.\label{eq:d2_e-d}
\end{align}
Furthermore, the optimal size of common randomness satisfying $\epsilon-$user privacy and $\delta-$DB privacy is lower-bounded by $\alpha^*(\epsilon,\delta) \geq \alpha_2(\epsilon,\delta)$.
%\begin{align}
%\label{eq:LB-alpha}
 %   \alpha^*(\epsilon,\delta) \geq \alpha_2(\epsilon,\delta)=% \max\left(0,\  \frac{1}{Ne^{\epsilon}-1} \ -\frac{(Ne^{\epsilon})^{K-1}}{(Ne^{\epsilon})^{K-1}-1} \delta\right).
   %    \begin{cases}
    % \frac{1}{Ne^{\epsilon}-1} \ -\frac{(Ne^{\epsilon})^{K-1}}{(Ne^{\epsilon})^{K-1}-1} \ \delta, & 0\leq \delta < \delta_{2}(\epsilon),\\
 %    5 0, &  \delta > %\delta_{2}(\epsilon).
      %\end{cases}
%\end{align}

\end{theorem}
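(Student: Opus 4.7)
The plan is to prove the converse by extending the Sun--Jafar PIR converse and the SPIR-with-$\delta$-leakage converse of \cite{guo2019information} so as to accommodate the pointwise $\epsilon$-user privacy constraint \eqref{eq:DP}. A single inductive chain of inequalities will simultaneously yield both the download-cost bound $D^{*}(\epsilon,\delta)\geq d_2(\epsilon,\delta)$ and the common-randomness bound $\alpha^{*}(\epsilon,\delta)\geq \alpha_2(\epsilon,\delta)$, with the clipping at $\delta\geq\delta_2(\epsilon)$ following because $D^{*}\geq 1$ trivially and $d_2(\epsilon,\delta_2(\epsilon))=1$ by construction.

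First I would establish a ``leaky interchangeability'' lemma: the likelihood-ratio bound in \eqref{eq:DP} implies a conditional-entropy inequality roughly of the form $H(A_n^{(k_2)}\mid Q_n^{(k_2)},\mathbf{W}_\Omega) \leq e^{\epsilon}\,H(A_n^{(k_1)}\mid Q_n^{(k_1)},\mathbf{W}_\Omega)$ (up to vanishing terms), mirroring the analogous lemma in our L-PIR work \cite{samy2019capacity}. This is the leaky counterpart of the fact that, under perfect user privacy, the answer entropy is independent of the desired message index, and it will allow us to swap the ``desired'' index in later entropy manipulations at a multiplicative cost of $e^{\epsilon}$.

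Next, starting from correctness \eqref{eq:corre} I would peel off messages in the style of Sun--Jafar: write $L \leq I\!\bigl(W_k;A^{(k)}_{[1:N]},Q^{(k)}_{[1:N]}\,\bigm|\,W_{[1:k-1]},S\bigr)+o(L)L$, invoke DB privacy \eqref{eq:MPri} to introduce a $-\delta L$ slack per step, and then apply the interchangeability lemma to ``rename'' the desired message from $W_k$ to $W_{k+1}$ at a cost of $e^{\epsilon}$. Combining this with the symmetric contribution of the $N$ databases gives a recursion of the form
\[
\sum_{n=1}^{N} H\!\bigl(A_n^{(k)} \,\bigm|\, W_{[1:k-1]},Q_n^{(k)},S\bigr) \;\geq\; L - \delta L \;+\; \frac{1}{Ne^{\epsilon}}\sum_{n=1}^{N} H\!\bigl(A_n^{(k+1)} \,\bigm|\, W_{[1:k]},Q_n^{(k+1)},S\bigr),
\]
in which $Ne^{\epsilon}$ replaces the usual $N$ of the Sun--Jafar recursion. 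Unrolling from $k=1$ to $k=K$, summing the geometric series $\sum_{i=0}^{K-1}(Ne^{\epsilon})^{-i}$ and the $\delta$-weighted series, and finally using $LD \geq \sum_n H(A_n^{(1)})$ with the appropriate bookkeeping of $H(S)$, yields $D^{LB}(\epsilon,\delta)$ after dividing by $L$ and letting $L\to\infty$. The common-randomness lower bound $\alpha^{*}(\epsilon,\delta)\geq \alpha_2(\epsilon,\delta)$ drops out by isolating $H(S)/L$ in the same chain of inequalities.

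The hardest part will be the leaky interchangeability lemma in the first step: \eqref{eq:DP} is a pointwise ratio bound on joint query-answer distributions, whereas the recursion needs averaged conditional-entropy inequalities. Preserving the sharp multiplicative factor $e^{\epsilon}$ (rather than ending up with an additive slack like $\epsilon$ or a looser $e^{\epsilon}-1$) requires careful manipulation of the underlying distributions, and coordinating the resulting factor with the $\delta L$ slack from the DB-privacy step so that the two combine into the exact coefficient $\delta/((Ne^{\epsilon})^{K-1}-1)$ stated in the theorem. Tracking how $S$ interacts with both leakage budgets throughout the recursion, and in particular how it contributes to the separate lower bound on $\alpha^{*}$, is an additional source of bookkeeping subtlety.
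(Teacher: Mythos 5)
Your ``leaky interchangeability'' lemma is essentially the paper's Lemma~\ref{Ent_diff} and is the right first ingredient. The problem is in how you plan to use it: the recursion you write down, with a $-\delta L$ slack extracted at every one of the $K$ peeling steps and with the chain conditioned on $S$ throughout, would not produce the stated bound. Conditioning on $S$ makes the terminal term $H(A_{[1:N]}^{(K)}\mid W_{[1:K]},Q_{[1:N]}^{(K)},S)$ vanish (answers are deterministic given messages, queries and $S$), so unrolling your recursion gives
\begin{equation*}
\frac{H(A_{[1:N]}^{(1)}\mid Q_{[1:N]}^{(1)},S)}{L}\;\geq\;(1-\delta)\Big(1+\tfrac{1}{Ne^{\epsilon}}+\cdots+\tfrac{1}{(Ne^{\epsilon})^{K-1}}\Big),
\end{equation*}
whose $\delta$-coefficient is the full geometric sum, not $1/((Ne^{\epsilon})^{K-1}-1)$; and at $\delta=0$ this collapses to the plain PIR bound $\sum_{i=0}^{K-1}N^{-i}$ rather than the SPIR value $N/(N-1)$, so the SPIR corner case is already lost. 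The paper instead runs the Sun--Jafar recursion \emph{without} conditioning on $S$ and \emph{without} any per-step $\delta$ slack (Lemma~\ref{lem:Ind}), arriving at a residual term $H(A_{[1:N]}^{(K)}\mid W_{[1:K]},Q_{[1:N]}^{(K)})$ that does not vanish precisely because $S$ has not been conditioned away. It then invokes $\delta$-DB privacy \emph{once}, together with the symmetry $H(A_{[1:N]}^{(K)}\mid Q_{[1:N]}^{(K)})=H(A_{[1:N]}^{(1)}\mid Q_{[1:N]}^{(1)})$, to lower-bound this residual by $\max\bigl(0,\,H(A_{[1:N]}^{(1)}\mid Q_{[1:N]}^{(1)})-\delta L-L\bigr)$, yielding a self-referential inequality in $H(A_{[1:N]}^{(1)}\mid Q_{[1:N]}^{(1)})$ whose solution is exactly $d_2(\epsilon,\delta)$. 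That single substitution is where $\delta$ enters, and it is what produces the coefficient $1/((Ne^\epsilon)^{K-1}-1)$.

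A secondary error: you claim the clipping at $\delta\geq\delta_2(\epsilon)$ holds ``because $D^{*}\geq 1$ trivially and $d_2(\epsilon,\delta_2(\epsilon))=1$.'' In fact $d_2(\epsilon,\delta_2(\epsilon))=1+\frac{1}{Ne^{\epsilon}}+\cdots+\frac{1}{(Ne^{\epsilon})^{K-1}}>1$; the clipping comes from the $\max(0,\cdot)$ in the residual bound, whose zero branch reproduces the pure geometric-series lower bound and takes over exactly when $\delta\geq\delta_2(\epsilon)$. Your plan to read off $\alpha^{*}\geq\alpha_2$ ``from the same chain'' also does not land, because $H(S)$ never appears in your $S$-conditioned recursion; the paper derives the randomness bound from a separate, short chain that starts from the $\delta$-DB constraint, expands $I(A_{[1:N]}^{(k)},Q_{[1:N]}^{(k)};W_{[1:K]\setminus k})$, and isolates $H(S)$ before substituting the already-proved download-cost bound.
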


%{\color{red} \noindent (OLD TEXT) Next, we show the optimal download cost for the case of $N=1$ database.
%\begin{remark}
%Achieving any level of DB privacy is not feasible for the case of one DB, $N=1$. This is because it is impossible to simultaneously satisfy user privacy and correctness, along with DB privacy. From \eqref{eq:DP}, for any finite $\epsilon$, we notice that any query/answer pair has to be requested to retrieve each of the $K$ messages with non-zero probability. This implies that to guarantee satisfying the correctness property we have to include the $K$ messages in all possible answers. Consequently,  we cannot provide any level of DB privacy. 
%\end{remark}
%Interestingly, we note that even if we ignore the DB privacy, 
%the proposed relaxation of user privacy does not help in reducing the download cost 
%for a single database scenario.  We prove proposition \ref{prob:N=1} in Appendix \ref{pr3}.
%\begin{proposition}
%\label{prob:N=1}
%Let we ignore any requirements of DB privacy, i.e., $\delta\to \infty$. For $N=1$, the optimal download cost of an AL-PIR scheme, satisfying the $\epsilon-$user privacy definition for any finite $\epsilon$,  matches the download cost under perfect privacy conditions,
%\begin{equation}
 %   D^{*}(\epsilon, \delta \to\infty)=K. 
%\end{equation}
%\end{proposition}
%}

%{\color{blue} \noindent (NEW TEXT; do not delete; need to discuss) 
The proof of Theorem~\ref{Thm:LB} is presented in Section~\ref{Pr2}. We note that the results in Theorems \ref{Thm:UB} and \ref{Thm:LB} hold for $N\geq 2$ DBs. In the following proposition, we characterize the capacity for the case of one database.
\begin{proposition}
\label{prob:N=1}
The optimal download cost $D^{*}(\epsilon, \delta)$ for $N=1$ and for any $0\leq \epsilon<\infty$ is given by:
\begin{align}
    D^{*}(\epsilon, \delta)=
    \begin{cases}
\infty, \quad \delta < (K-1),\\
K, \quad \delta = (K-1).\\
\end{cases}
\end{align}
\end{proposition}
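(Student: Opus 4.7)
The plan is to establish Proposition~\ref{prob:N=1} by proving (i) achievability of download cost $K$ when $\delta = K-1$, and (ii) a converse showing that no scheme can simultaneously satisfy correctness, $\epsilon$-user privacy, and $\delta$-DB privacy when $\delta < K-1$, together with a matching lower bound $D^{*}\geq K$ for $\delta=K-1$.

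For the achievability at $\delta = K-1$, I would have the user issue a trivial query independent of $k$, to which DB$_1$ responds with $(W_{1},\dots,W_{K})$. Correctness is immediate; the ratio in \eqref{eq:DP} equals $1\leq e^{\epsilon}$ because the query and answer distributions do not depend on $k$; and the DB leakage is $I(\W_{[1:K]\setminus k}; Q^{(k)}_{1}, A^{(k)}_{1}) = H(\W_{[1:K]\setminus k}) = (K-1)L$, so \eqref{eq:MPri} is tight with $\delta = K-1$. The resulting cost is exactly $K$.

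For the converse, the key observation is that with $N=1$ the entire transcript is $(Q^{(k)}_{1}, A^{(k)}_{1})$. Using \eqref{eq:DP} with $\W_{\Omega} = (W_{1},\dots,W_{K})$, for any two indices $k_{1},k_{2}$ the conditional law $P_{k_{1}}(Q_{1},A_{1}\mid W_{1},\dots,W_{K})$ lies within a multiplicative factor $e^{\epsilon}$ of $P_{k_{2}}(Q_{1},A_{1}\mid W_{1},\dots,W_{K})$. Marginalizing over $\W_{[1:K]\setminus j}$ and combining both directions of the inequality shows that $P_{k_{1}}(W_{j}\mid Q_{1},A_{1})$ is within a factor $e^{2\epsilon}$ of $P_{k_{2}}(W_{j}\mid Q_{1},A_{1})$ pointwise, and a direct calculation then yields the entropy-continuity bound
\begin{equation*}
H_{P_{k_{1}}}(W_{j}\mid Q^{(k_{1})}_{1}, A^{(k_{1})}_{1}) \;\leq\; 2\epsilon + e^{3\epsilon}\, H_{P_{k_{2}}}(W_{j}\mid Q^{(k_{2})}_{1}, A^{(k_{2})}_{1}).
\end{equation*}
Setting $k_{2}=j$ and invoking decodability \eqref{eq:corre} makes the right-hand side $o(L)L$ for every fixed $\epsilon<\infty$, so $H(W_{j}\mid Q^{(k)}_{1}, A^{(k)}_{1}) = o(L)L$ for each $j\in[1:K]$; the chain rule then gives $H(W_{1},\dots,W_{K}\mid Q^{(k)}_{1}, A^{(k)}_{1}) \leq \sum_{j} H(W_{j}\mid Q^{(k)}_{1}, A^{(k)}_{1}) = o(L)L$.

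Substituting this into \eqref{eq:MPri} forces $\delta L \geq I(\W_{[1:K]\setminus k}; Q^{(k)}_{1}, A^{(k)}_{1}) \geq (K-1)L - o(L)L$, which is a contradiction as $L\to\infty$ whenever $\delta < K-1$; hence no feasible scheme exists and $D^{*}(\epsilon,\delta) = \infty$ by convention. For $\delta = K-1$ the same entropy bound yields $H(A^{(k)}_{1}) \geq I(W_{1},\dots,W_{K}; A^{(k)}_{1}\mid Q^{(k)}_{1}) \geq KL - o(L)L$, so $D^{*}\geq K$, matching the achievability. The main obstacle is the entropy-continuity step, which converts the multiplicative ratio constraint on conditional laws into an additive-plus-multiplicative bound on conditional entropies; this requires carefully tracking both directions of the $e^{\epsilon}$ inequality after marginalization, but is more technical than conceptual.
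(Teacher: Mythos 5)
Your proof reaches the right conclusions and is essentially correct, but it follows a genuinely different route from the paper's. For the converse, the paper works per query structure: it observes that under $\epsilon$-user privacy every query realization must occur with nonzero probability under every retrieval index (otherwise the ratio in \eqref{eq:DP} diverges), so correctness forces each structure's answer to be decodable for the requested message, and since the answer given a query is a deterministic function of $(\mathbf{W},S)$ whose distribution does not depend on the retrieval index $k$, the relevant posterior is identical across retrieval indices. Combined with message symmetry this yields the exact identity $H(A_1^{(1)}\mid W_1,Q_1^{(1)}) = H(A_1^{(j)}\mid W_1,Q_1^{(j)})$, and an induction over the $K$ messages accumulates the $K(1-o(L))L$ lower bound; the $\delta < K-1$ infeasibility is then argued only informally in the main text. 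You instead control $H(W_j\mid Q_1^{(k)},A_1^{(k)})$ across retrieval indices directly from \eqref{eq:DP}, sum over $j$ by the chain rule to obtain $H(\mathbf{W}\mid Q_1^{(k)},A_1^{(k)})=o(L)L$, and read both $\delta\geq K-1$ and $D^{*}\geq K$ off that identity; this has the virtue of turning the infeasibility claim into a genuine converse rather than an informal remark. However, your entropy-continuity step is heavier than it needs to be. Because $A_1$ is deterministic given $(Q_1,\mathbf{W},S)$ and $(\mathbf{W},S)$ is independent of the retrieval index given $Q_1=\pi$, the conditional law $P(W_j\mid Q_1=\pi,A_1=\gamma)$ does not depend on $k$ at all; only the query marginal $\Pr(Q_1^{(k)}=\pi)$ varies with $k$, and \eqref{eq:DP} bounds it within a factor $e^{\epsilon}$ across indices, which immediately gives $H(W_j\mid Q_1^{(k_1)},A_1^{(k_1)})\leq e^{\epsilon}H(W_j\mid Q_1^{(k_2)},A_1^{(k_2)})$ with no additive offset and no $e^{3\epsilon}$. (As a minor aside, if one does derive the bound through the pointwise posterior comparison the additive constant should be $2\epsilon e^{2\epsilon}$ rather than $2\epsilon$; this is inconsequential for the conclusion.)
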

The above result shows that the problem of AL-PIR for one database is degenerate. In particular, to satisfy the $\epsilon$-user privacy constraint,  any query/answer pair has to be requested to retrieve each of the $K$ messages with non-zero probability. Since $N=1$, the only solution is to download all messages, i.e., a download cost of $K$. However, upon downloading all $K$ messages, the leakage about the remaining $(K-1)$ messages is fixed and given by $\delta= K-1$. Hence, if the DB privacy budget is $\delta< (K-1)$, the AL-PIR problem is infeasible and the capacity is $0$, i.e., $D^*(\epsilon,\delta<K-1)=\infty$. We prove Proposition \ref{prob:N=1} in Appendix \ref{pr3}. 

%\textcolor{red}{What is the point of a delta greater than K-1? Should the second branch in 22 be simply equality to K-1?} \textcolor{blue}{Islam: the parameter $\delta$ can take any positive value. Although, it is clear that the maximum possible leakage is K-1, I think we should keep it like that}

In the next Corollary, we show that our proposed scheme in Theorem~\ref{Thm:UB} is information-theoretically optimal for perfect user privacy, i.e., $\epsilon = 0$, and is optimal within a maximum multiplicative gap ratio of $\frac{N-e^{-\epsilon}}{N-1}$ for any  $(\epsilon,\delta)$. The proof of the corollary is presented in Appendix  \ref{app_gap}.

\begin{corollary}
\label{Thm:Gap}
The multiplicative gap ratio between the upper and  lower bounds on the download cost of the AL-PIR, given by Theorems \ref{Thm:UB} and \ref{Thm:LB}, respectively, is bounded as follows:
\begin{align}
\label{eq:Gap}
    \frac{D^{\text{UB}}(\epsilon,\delta)}{D^{\text{LB}}(\epsilon,\delta)}\leq
\frac{N-e^{-\epsilon}}{N-1}.
\end{align}
\end{corollary}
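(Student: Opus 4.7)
The plan is to prove the ratio bound by showing that the auxiliary function
\[g(\delta) := \frac{N-e^{-\epsilon}}{N-1}\,D^{\text{LB}}(\epsilon,\delta) - D^{\text{UB}}(\epsilon,\delta)\]
is non-negative for every $\delta \geq 0$. For the base case $\delta = 0$, I substitute $D^{\text{UB}}(\epsilon,0) = N/(N-1)$ and $D^{\text{LB}}(\epsilon,0) = Ne^{\epsilon}/(Ne^{\epsilon}-1)$, and use the identity $(N - e^{-\epsilon})e^{\epsilon} = Ne^{\epsilon}-1$ to get $g(0) = 0$. This simultaneously establishes that the claimed bound is tight at $\delta = 0$, which is a reassuring sanity check.

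Next I handle the regime where both bounds are in their linear pieces, namely $\delta \in [0,\min(\delta_1(\epsilon),\delta_2(\epsilon)))$. Differentiating and applying the geometric-series factorizations $N^{K-1}-1 = (N-1)\sum_{j=0}^{K-2} N^j$ and $(Ne^{\epsilon})^{K-1}-1 = (Ne^{\epsilon}-1)\sum_{j=0}^{K-2}(Ne^{\epsilon})^j$, the condition $g'(\delta) \geq 0$ reduces to $e^{2\epsilon}\sum_{j=0}^{K-2}(Ne^{\epsilon})^j \geq \sum_{j=0}^{K-2}N^j$, which is immediate for $\epsilon \geq 0$. Combined with $g(0) = 0$, this yields $g \geq 0$ throughout this doubly-linear regime.

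The main remaining obstacle is the saturation regimes. The key auxiliary claim I will establish is $\delta_1(\epsilon) \geq \delta_2(\epsilon)$, i.e., the kink of $D^{\text{UB}}$ occurs no earlier than that of $D^{\text{LB}}$. Granting this, on $[\delta_2(\epsilon),\delta_1(\epsilon)]$ the denominator $D^{\text{LB}}$ is already saturated while $D^{\text{UB}}$ continues its linear decrease, so $g$ is strictly increasing and hence non-negative by the previous step. For $\delta \geq \delta_1(\epsilon)$, both bounds are constant (a short computation yields the clean identities $d_1(\epsilon,\delta_1(\epsilon)) = 1+\delta_1(\epsilon)$ and $d_2(\epsilon,\delta_2(\epsilon)) = 1+\delta_2(\epsilon)$), so $g$ is constant and remains non-negative.

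To verify $\delta_1(\epsilon) \geq \delta_2(\epsilon)$, I take reciprocals and use the same factorizations to write $1/\delta_1 = (N-1) + e^{\epsilon}/S_1$ and $1/\delta_2 = (Ne^{\epsilon}-1) + 1/S_2$, with $S_1 = \sum_{j=0}^{K-2}N^j$ and $S_2 = \sum_{j=0}^{K-2}(Ne^{\epsilon})^j$. The target then rearranges to $e^{\epsilon}/S_1 - 1/S_2 \leq N(e^{\epsilon}-1)$; clearing denominators and dividing out the factor $(e^{\epsilon}-1)$ (the case $\epsilon=0$ gives equality and is handled separately), it reduces to the polynomial comparison $\sum_{0 \leq k \leq j \leq K-2} N^j e^{k\epsilon} \leq N\, S_1 S_2 = \sum_{0 \leq i,k \leq K-2} N^{i+k+1} e^{k\epsilon}$. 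Each left-hand term with $k<j$ matches a right-hand term via the reindexing $i = j-k-1$, and the unmatched $k=j$ diagonal terms on the left are dominated by the remaining right-hand off-diagonal mass (already the $k=0$ row on the right exceeds the corresponding row on the left by the factor $N \geq 2$). This finishes the term-by-term comparison and thereby the case analysis.
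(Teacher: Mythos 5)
Your proof is correct, and its backbone coincides with the paper's: both bound the ratio (equivalently, your difference $g$) in the doubly-linear regime using the geometric-series factorizations $N^{K-1}-1=(N-1)S_1$ and $(Ne^{\epsilon})^{K-1}-1=(Ne^{\epsilon}-1)S_2$, both reduce the monotonicity condition to $e^{-2\epsilon}S_1/S_2\leq 1$, and both hinge on the ordering $\delta_1(\epsilon)\geq\delta_2(\epsilon)$ to organize the saturation regimes. Where you genuinely diverge is in how you establish $\delta_1(\epsilon)\geq\delta_2(\epsilon)$. The paper gets this almost for free from an information-theoretic sandwich: since $D^{\text{LB}}(\epsilon,\delta)\leq D^{*}(\epsilon,\delta)\leq D^{\text{UB}}(\epsilon,\delta)$ follows directly from Theorems~\ref{Thm:UB} and~\ref{Thm:LB}, evaluating at any $\delta\geq\max(\delta_1,\delta_2)$ gives $1+\delta_2(\epsilon)=d_2(\epsilon,\delta_2(\epsilon))\leq d_1(\epsilon,\delta_1(\epsilon))=1+\delta_2(\epsilon)$, and the ordering falls out in one line. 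You instead give a direct algebraic proof by clearing denominators and doing a term-by-term comparison of geometric sums. Your route is more self-contained and does not invoke the existence of a true optimum sandwiched between the bounds, but it costs considerably more work; and, as an exposition note, your injection $i=j-k-1$ forces you to treat the $k=j$ diagonal separately — the cleaner choice is $i=j-k$, which maps $\{(j,k):0\leq k\leq j\leq K-2\}$ injectively into $\{0,\dots,K-2\}^2$ and sends each left term $N^{j}e^{k\epsilon}$ to the strictly larger right term $N^{j+1}e^{k\epsilon}$, finishing the comparison with no residual cases.
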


In Figure~\ref{fig:UL_Bounds}, we show the upper and lower bounds on the download cost of the AL-PIR and the numerical multiplicative gap ratio, as a function of system parameters $(N,K,\epsilon, \delta)$.
Specifically, in Figure~\ref{fig:UL_Bounds}a, we set the allowed DB privacy leakage to the maximum leakage, i.e., $\delta >\max( \delta_{1}(\epsilon),\delta_{2}(\epsilon))$ as defined in \eqref{eq:delta-max-lb} (no shared randomness required for this case). This gives the results of the  L-PIR model considered in \cite{samy2019capacity}.  
As the number of messages increases, both upper and lower bounds increase, whereas both  decrease with $N$. This happens as increasing $N$ increases the number of bits that can be utilized as a side information to retrieve the desired message. On the other hand, increasing $K$ adds an overhead on any retrieval scheme to satisfy the privacy by considering the symmetry among downloaded bits from different messages.
We observe a similar trend for the multiplicative gap ratio as well. In Figure~\ref{fig:UL_Bounds}b, we fix the value of the DB privacy leakage to $\delta = 4\times 10^{-5}$. This choice  insures that $\delta < \min(\delta_{1}(\epsilon),\delta_{2}(\epsilon))$ for all $\epsilon \in[0:10]$ considered in the plots.  We note that while increasing $K$ does not have significant impact on the bounds, both the download cost and multiplicative gap ratio decrease with $N$. Moreover, we observe that the bounds match when $\epsilon =0$, i.e., when perfect user privacy is required, and when $\epsilon \to\infty$, i.e., no user privacy is required. 
% In the following remark, we show the relation between $\delta_1(\epsilon)$ and $\delta_1(\epsilon)$.
%  \begin{remark}
%  For any $\epsilon$, we have $\delta_1(\epsilon)\geq \delta_2(\epsilon)$. From Theorem \ref{Thm:UB} and \ref{Thm:LB}, we notice that we can express $d_1(\epsilon,\delta_1(\epsilon))$ as $d_1(\epsilon,\delta_1(\epsilon))=1+\delta_1(\epsilon)$. Similarly, we have $d_2(\epsilon,\delta_2(\epsilon))=1+\delta_2(\epsilon)$. For any $\delta\geq\max\left(\delta_1(\epsilon),\delta_2(\epsilon)\right)$, $D^*(\epsilon,\delta)$ can be bounded as follows
%  \begin{equation}
%  \label{eq:delta1>2}
%   1+\delta_2(\epsilon)=d_2(\epsilon,\delta_2(\epsilon))=D^{LB}(\epsilon,\delta)\leq D^*(\epsilon,\delta)\leq D^{UB}(\epsilon,\delta)=  d_1(\epsilon,\delta_1(\epsilon))=1+\delta_1(\epsilon).
%  \end{equation}
%  Following \eqref{eq:delta1>2}, $\delta_1(\epsilon)$ is always greater than or equal $\delta_2(\epsilon)$.
%  \end{remark}

\begin{figure*}[t]
\begin{center}
\includegraphics[width=0.95\linewidth]{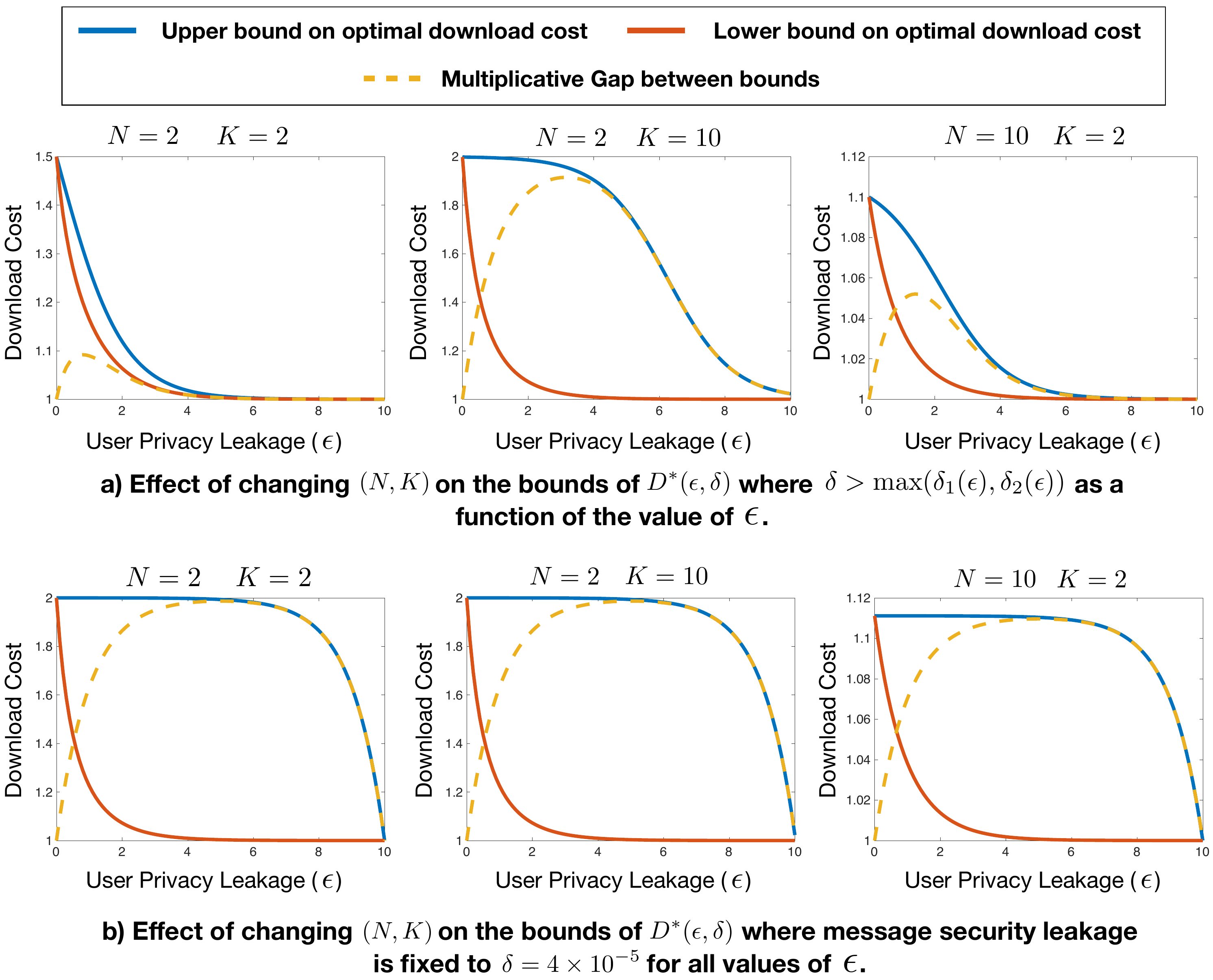}
\end{center}
\vspace{-0.15in}
\caption{ Lower and upper bounds of AL-PIR for different values of $N$, $K$ and $\delta$ as $\epsilon$ increases.} 
\label{fig:UL_Bounds}
\vspace{-0.15in}
\end{figure*}

The generality of the AL-PIR problem formulation allows us to recover several existing results on PIR as special cases of  Theorems \ref{Thm:UB} and \ref{Thm:LB}. These  cases are discussed in the following remark.

\begin{remark}[\textbf{Connections to state-of-the-art results}]
From Theorems \ref{Thm:UB} and \ref{Thm:LB}, the lower and upper bounds on the optimal download cost $D^{*}(\epsilon,\delta)$ for any  $(\epsilon,\delta)$ can be used to derive the following prior results.

\noindent $\bullet$ \hspace{5pt} \textbf{No user privacy and perfect DB privacy} $(\epsilon \to\infty,\delta = \delta_{1}(\epsilon \to\infty)=\delta_{2}(\epsilon \to\infty)= 0)$.
From the shared randomness bounds \eqref{eq:alpha-min} and \eqref{eq:delta-max-lb}, when $\epsilon \to \infty$ and $\delta =0$, we get that $ \alpha^*(\epsilon \to\infty,0)=\alpha_{1}(\epsilon \to\infty,0) =\alpha_{2}(\epsilon \to\infty,0)=0$, i.e., no shared randomness is needed.
Substituting the $\epsilon$ and $\delta$ values in the download cost bounds \eqref{eq:UB} and \eqref{eq:LB}, we get $D^{\text{UB}}(\epsilon\to\infty,\delta  =0) = D^{\text{LB}}(\epsilon\to\infty,\delta  =0)=1$, meaning that
the upper and lower bounds are matching  and give an optimal download cost of $D^{*}(\epsilon\to\infty,\delta  =0) = 1$. That is, AL-PIR is achieved by only downloading the requested file from any of the databases. 
%\textcolor{red}{LL:You need to be more precise. Do you use both Theorems everywhere? If the bounds coincide you should state so. }

\noindent $\bullet$ \hspace{5pt} \textbf{Perfect user privacy and maximum leakage on DB privacy \cite{sun2017capacity}} $(\epsilon = 0, \delta = \delta_{1}(\epsilon =0)= \delta_{2}(\epsilon =0)= \frac{N^{K-1}-1}{N^{K-1}(N-1)})$.
%We obtain the results of the original PIR formulation in \cite{sun2017capacity} using Theorems \ref{Thm:UB} and \ref{Thm:LB}.
We obtain the original PIR result in \cite{sun2017capacity} for perfect user privacy leakage $\epsilon=0$.
For this special case, we get the optimal required shared randomness characterized by $\alpha^*(\epsilon,\delta)= \alpha_{1}(\epsilon,\delta) =\alpha_{2}(\epsilon,\delta)= 0$, i.e., no shared randomness is needed. Using the bounds in \eqref{eq:UB} and \eqref{eq:LB}, we obtain matching upper and lower bounds, giving an optimal download cost of
\begin{align}
    D^{*}(\epsilon=0,\delta = \frac{N^{K-1}-1}{N^{K-1}(N-1)}) &=D^{\text{LB}}(\epsilon=0,\delta = \frac{N^{K-1}-1}{N^{K-1}(N-1)})=D^{\text{UB}}(\epsilon=0,\delta = \frac{N^{K-1}-1}{N^{K-1}(N-1)})\nonumber\\&=  
    1+ \frac{1}{N}+\dots+\frac{1}{N^{K-1}}.
\end{align}

 %\textcolor{red}{LL:You need to be more precise. Do you use both Theorems everywhere? If the bounds coincide you should state so. }

\noindent $\bullet$ \hspace{5pt} \textbf{Perfect user privacy and DB privacy \cite{sun2019capacity}} $(\epsilon = 0,\delta = 0)$. By setting $\epsilon = 0$, $\delta  =0$ in Theorems \ref{Thm:UB} and \ref{Thm:LB}, we obtain the SPIR results in \cite{sun2019capacity} where the optimal required shared randomness is given by $ \alpha^*(0,0)=\alpha_{1}(0,0) = \alpha_{2}(0,0)=\frac{1}{N-1}$ and the optimal download cost is obtained using the bounds in \eqref{eq:UB} and \eqref{eq:LB} as
\begin{align}
    D^{*}(\epsilon=0,\delta = 0)=D^{\text{LB}}(\epsilon=0,\delta = 0)=D^{\text{UB}}(\epsilon=0,\delta = 0)=
    1+ \frac{1}{N-1}.
\end{align}

\noindent $\bullet$ \hspace{5pt} \textbf{Leaky user privacy and maximum leakage on DB privacy \cite{samy2019capacity}}. %$(\epsilon,\delta = \delta_{1} = \frac{N^{K-1}-1}{(N-1)(e^{\epsilon}+N^{K-1}-1)})$.
We obtain the L-PIR results in \cite{samy2019capacity} for any level of user privacy leakage $\epsilon$ and a DB privacy leakage  $\delta\geq\max\left(\delta_{1}(\epsilon),\delta_{2}(\epsilon)\right)$, where the optimal required shared randomness is given by $\alpha^*(\epsilon,\delta)= \alpha_{1}(\epsilon,\delta) =\alpha_{2}(\epsilon,\delta) = 0$ and the bounds on the optimal download cost are obtained using \eqref{eq:UB} and \eqref{eq:LB} as
\begin{align}
    &D^{*}(\epsilon,\delta)\geq D^{\text{LB}}(\epsilon, \delta_{2}(\epsilon))=
1+ \frac{1}{Ne^{\epsilon}}+\dots+\frac{1}{(Ne^{\epsilon})^{K-1}}, \nonumber\\
    &D^{*}(\epsilon,\delta )\leq D^{\text{UB}}(\epsilon, \delta_{1}(\epsilon))=
    1+ \frac{N^{K-1}-1}{(N-1)(e^{\epsilon}+N^{K-1}-1)}.
\end{align}

\noindent $\bullet$ \hspace{5pt} \textbf{Perfect user privacy and Leaky DB privacy \cite{guo2019information}} $(\epsilon=0,\delta)$. 
For perfect user privacy $\epsilon=0$ and DB privacy leakage characterized by $\delta$, we obtain the results in \cite{guo2019information}, where the optimal required shared randomness is characterized by $ \alpha^*(0,\delta)=\alpha_{1}(0,\delta) =\alpha_{2}(0,\delta)= \frac{1}{N-1} + \frac{N^{K-1}}{N^{K-1}-1}\delta$ and the optimal download cost is obtained using the bounds in \eqref{eq:UB} and \eqref{eq:LB} as
\begin{align}
    D^{*}(\epsilon=0,\delta)=D^{\text{LB}}(\epsilon=0,\delta)=D^{\text{UB}}(\epsilon=0,\delta)=
    \frac{N}{N-1}- \frac{\delta}{N^{K-1}-1}.
\end{align}

\end{remark}

     %Figure~\ref{fig:UL}(a) shows the upper and lower bounds of AL-PIR for $N=K=2$. One can indeed observe the decrease of the download cost (increase in capacity) with $\epsilon$, quickly approaching the minimum possible value of 1. Fig.~\ref{fig:UL}(b) compares the download cost of perfect PIR with AL-PIR when $\epsilon=1.$ 

%\section{L-PIR Construction via path biasing with maximum DB Privacy Leakage}
%\label{L-PIR}

\section{Proof of Theorem \ref{Thm:UB} : Upper Bound on  $D^*(\epsilon,\delta)$ for the AL-PIR}
\label{Pr1}

%Before describing the AL-PIR scheme, we revisit the L-PIR scheme introduced originally in our previous work \cite{samy2019capacity} for user privacy leakage and maximum DB Privacy leakage.
The leakage in user privacy is achieved using the path-based approach introduced  in our previous work \cite{samy2019capacity}.  A retrieval path is equivalent to a set of queries across databases that guarantee decodability. Possible retrieval paths have different download costs. The probability of selecting each path is chosen to minimize the download cost while satisfying the privacy budget, measured by $\epsilon$, which is a process referred to as {\em path biasing.}
First, we give the following example for $N=K=2$ to describe the idea of path biasing to achieve $\epsilon-$user privacy leakage with DB Privacy leakage $(\delta \geq \delta_{1}(\epsilon))$.

\subsection{AL-PIR Example for $N=2$, $K=2$, and privacy leakage $(\epsilon, \delta \geq \delta_{1}(\epsilon))$}

 Consider the simplest non-trivial PIR setting with $N=2$ DBs and $K=2$ messages denoted by $W_1$ and $W_2$. To motivate the construction of AL-PIR, we first recall the perfect PIR scheme proposed by Sun and Jafar in \cite{sun2017capacity}. Assume that the messages $W_1=\{a_1,\dots,a_4\}$ and $W_2=\{b_1,\dots,b_4\}$, are each $L=4$ bits long.
  Figure~\ref{fig:ex1} shows a retrieval structure for $W_1$ using the scheme in \cite{sun2017capacity}. The main idea is that one can use coding and leverage side information from the other database to reduce the download cost to $3/2$. We highlight that the shown bit indices represent one possible permutation of the real indices. Thus, $W_1$ retrieval can be obtained through multiple bit structures that are selected uniformly and have an equal download cost of $3/2$. 

In Figure~\ref{fig:ex2}, we  show an alternative PIR scheme in which the requested message can be downloaded via sequences of structures that give unequal download cost. In particular, when the user wants to retrieve message $W_1$, it picks one of the four possible queries/paths: 
\begin{itemize}
\item Path $\mathcal{P}_1$:($\emptyset,W_1$): Send no request to DB$_1$ and request $W_1$ from DB$_2$. This  path/query has a download cost of $L$ bits. 
\item Path $\mathcal{P}_2$:($W_1,\emptyset)$: Request $W_1$ from DB$_1$ and send no request to DB$_2$. This path has a download cost of $L$ bits. 
\item Path $\mathcal{P}_3$:($W_2,W_1\oplus W_2)$: Request $W_2$ from DB$_1$ and $W_1 \oplus W_2$ from DB$_2$. This path has a download cost of $2L$ bits. 
\item Path $\mathcal{P}_4$:($W_1\oplus W_2,W_2)$: Request $W_1\oplus W_2$ from DB$_1$ and $W_2$ from DB$_2$. This path has a download cost of $2L$ bits. 
\end{itemize}
Paths $\mathcal{P}_1$ and $\mathcal{P}_2$, which have lower download cost, are selected with probability $p$, whereas  higher download cost paths $\mathcal{P}_3$ and $\mathcal{P}_4$ are selected with probability $q$. From the total probability theorem, we have
\begin{equation}
\label{eq:pq11}
    2p+2q=1.
\end{equation}

The answer of DB$_n$ can take four different structures, $\pi_{n,1},\dots,\pi_{n,4}$. These structures represent the element addition of all possible subsets of $\{W_1,W_2\}$. 
 Note that the selection probability of any structure $\pi_{n,j}$, $j\in[1:4]$ equals the selection probability of all paths containing that structure. Also, there is one path per message that contains each structure $\pi_{n,j}$. For example, $\pi_{1,2}=\{W_1\}$ is paired with $\pi_{2,2}=\{\emptyset\}$ to retrieve $W_1$, or it can be paired with $\pi_{2,3}=\{W_1\oplus W_2\}$ for $W_2$ retrieval.   Let the path selection probabilities be uniform, i.e., $p=q=\frac{1}{4}$. Thus, each structure is selected with probability $\frac{1}{4}$, irrespective of the requested message index. It is straightforward to show that this probability assignment satisfies the perfect privacy definition in \eqref{eq:DP0}. Moreover, although the cost varies per path, the uniform path selection yields an optimal average download cost of $3/2$. Therefore, this path-based PIR scheme is also optimal and matches the result of Sun and Jafar \cite{sun2017capacity} for perfect privacy.

\begin{figure}[t]
\begin{center}
\includegraphics[width=0.4\linewidth]{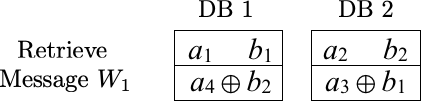}
\end{center}
%\vspace{-0.1in}
\caption{The original PIR scheme in \cite{sun2017capacity} for $N=2, K=2,$ and $L=4.$}
\label{fig:ex1}
%\vspace{-0.1in}
\end{figure}
\begin{figure}[t]
\begin{center}
\includegraphics[width=0.8\linewidth]{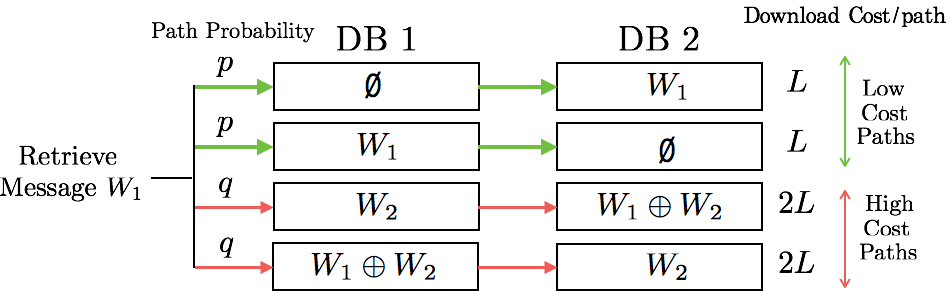}
\end{center}
%\vspace{-0.1in}
\caption{AL-PIR scheme for $N=2, K=2$, general $\epsilon$, and $\delta\geq\delta_{1}(\epsilon)$.}
\label{fig:ex2}
%\vspace{-0.2in}
\end{figure}

 {\bf Improving the download cost via path biasing (achieving $\epsilon-$user privacy).} The leaky privacy definition in \eqref{eq:DP} together with the path-based scheme described above, lead us to consider schemes that bias the path selection process for retrieving desired messages. We next show that this helps reduce the average download cost for any non-zero $\epsilon$. Intuitively, if we assign higher selection probability to paths with  lower download cost than the average (for example $L$), an overall lower cost can be achieved at the expense of some bounded loss of privacy due to the biasing. 
The question we pose is whether there are values $p \neq q$ that yield an average download cost less than $\frac{3}{2}$ and simultaneously satisfy the $\epsilon-$user privacy definition in \eqref{eq:DP}. The probability $Pr\{Q^{(i)}_{n}=\pi,A^{(i)}_{n}=\gamma| \mathbf{W}_{\Omega}\}$ can be expressed as
\begin{equation}
\label{eq:20}
    Pr\{Q^{(i)}_{n}=\pi,A^{(i)}_{n}=\gamma| \mathbf{W}_{\Omega}\}=Pr\{Q^{(i)}_{n}=\pi| \mathbf{W}_{\Omega}\}Pr\{A^{(i)}_{n}=\gamma|Q^{(i)}_{n}=\pi, \mathbf{W}_{\Omega}\}.
\end{equation}
The term $Pr\{Q^{(i)}_{n}=\pi| \mathbf{W}_{\Omega}\}$ depends on the path selection probability. To provide privacy, for any answer to a specific structure  $\pi$, the term $Pr\{A^{(i)}_{n}=\gamma|Q^{(i)}_{n}=\pi, \mathbf{W}_{\Omega}\}$ should be constant independently of the requested message. 
To meet the privacy definition in \eqref{eq:DP}, it is sufficient to show that the possible structures to each query satisfy:
\begin{equation}
\label{eq:DP_p-ex}
  \frac{\Pr(\pi_{n,j}|i=1)}{\Pr(\pi_{n,j}|i=2)} <e^{\epsilon}, \quad \forall n \in \{1,2\},  \ j\in [1:4],
\end{equation}
where $\Pr(\pi_{n,j}|i=k),$ is the probability of retrieving structure $\pi_{n,j}$ when the desired message is $k$.  Based on the scheme in Figure \ref{fig:ex2}, there are two cases for each structure $\pi_{n,j}$: \begin{enumerate}[(i)]
    \item $\pi_{n,j}$ is  used to recover $W_1$ and $W_2$ with the same probability either $p$ or $q$, then
\begin{equation}
 \frac{P(\pi_{n,j}|i=1)}{P(\pi_{n,j}|i=2)} = 1,   
\end{equation}
 which clearly satisfies \eqref{eq:DP}.
 \item  $\pi_{n,j}$ is selected with different probabilities $p$ and $q$ to retrieve $W_1$ and $W_2$, respectively, and vice versa. Then, $p$ and $q$ must satisfy
    \begin{equation}
    \label{eq:11}
        e^{-\epsilon}\leq \frac{\Pr(\pi_{n,j}|i=1)}{\Pr(\pi_{n,j}|i=2)} = \frac{p}{q} \leq  e^{\epsilon}.
    \end{equation}
\end{enumerate}  
Invoking the fact that the sum of path probabilities must equal one, we use \eqref{eq:pq11} to  substitute by  $q=0.5-p$ and rewrite  \eqref{eq:11}  as
\begin{equation}
    \frac{p}{0.5-p}\leq e^{\epsilon}.
\end{equation}
This gives us the following inequality,
 \begin{equation}
\label{eq:DP_p2}
    p \leq \frac{e^{\epsilon}}{2(1+e^{\epsilon})}.
\end{equation}
Therefore, we can pick $p$ that satisfies \eqref{eq:DP_p2} with equality, and then select  $q=0.5 -p$, as a valid choice of path selection probabilities which satisfy the $\epsilon-$user privacy constraint. 
 
{\bf Computing the download cost $D(\epsilon,\delta\geq \delta(\epsilon))$.}
Since our scheme is symmetric with respect to messages, the same download cost is obtained for the retrieval of message $W_1$ or message $W_2$. Then, the average download cost can be written as
\begin{equation}
    D(\epsilon,\delta\geq \delta(\epsilon))=\frac{\sum_{j=1}^{4}\Pr\{\mathcal{P}=\mathcal{P}_j\}\cdot D_{\mathcal{P}_j}}{ L},
\end{equation}
where $\Pr\{\mathcal{P}=\mathcal{P}_j\}\in\{p,q\}$ is the probability that path $\mathcal{P}_j$ is chosen and $D_{\mathcal{P}_j}$ is the cost of path $\mathcal{P}_j$.
From Figure \ref{fig:ex2}, we know that $D_{\mathcal{P}_1}=D_{\mathcal{P}_2} = L,$ and $D_{\mathcal{P}_3}=D_{\mathcal{P}_4}=2L.$ Hence, $D(\epsilon,\delta\geq \delta(\epsilon))$ equals
 \begin{align}
 \label{eq:Dcost_p1_p2}
      D(\epsilon,\delta\geq \delta(\epsilon))&=\frac{2\times p\times L +2 \times q\times (2L)}{L}\nonumber\\
      &=2p+4q\nonumber\\
      &\overset{(a)}{=}2-2p\nonumber\\
      &\overset{(b)}{\geq} 2-\frac{e^{\epsilon}}{(1+e^{\epsilon})}, 
       \end{align}
where $(a)$ follows from \eqref{eq:pq11}, and $(b)$ follows from \eqref{eq:DP_p2}. Hence, the download cost of this scheme (when $p=\nicefrac{e^{\epsilon}}{2(1+e^{\epsilon})}$),  can be rewritten as 
 \begin{equation}
 \label{eq:Dcost_eps}
 \begin{split}
     D^*(\epsilon,\delta\geq\delta_1(\epsilon))=\frac{3}{2}-\frac{e^{\epsilon}-1}{2(e^{\epsilon}+1)},
 \end{split}
 \end{equation}
 which is lower than $\frac{3}{2}$, the optimal download cost under perfect privacy. Note that a lower cost cab be achieved for any $\epsilon.$
 
 {\bf Computing DB privacy leakage $\delta$.} We have shown in the above example that the biased selection probability of the path-based scheme can trade user privacy for lower download cost. We now calculate the DB privacy leakage. From the above leaky construction, we can show that
\begin{equation}
    H(A_{[1:2]}^{(1)})=D(\epsilon,\delta\geq\delta_1(\epsilon))\times L\geq\frac{3}{2} L-\frac{e^{\epsilon}-1}{2(e^{\epsilon}+1)} L.
\end{equation}
Similarly, the average size $H(A_{[1:2]}^{(1)}|W_2)$ of answers given $W_2$ is known and can be expressed as
\begin{align}
 H(A_{[1:2]}^{(1)}|W_2)&=\sum_{j=1}^{4}\Pr\{\mathcal{P}=\mathcal{P}_j\}\cdot D_{\mathcal{P}_j|W_2} \nonumber\\
 &=2\times p\times L +2 \times q\times L=L,
\end{align}
where $D_{\mathcal{P}_j|W_2}$ is the cost of path $\mathcal{P}_j$ when $W_2$ is given.
This makes the DB privacy leakage, or the information revealed about $W_2$, equal to
\begin{equation}
    I(W_2;A_{[1:2]}^{(1)})= H(A_{[1:2]}^{(1)})- H(A_{[1:2]}^{(1)}|W_2)\geq\frac{1}{2} L-\frac{e^{\epsilon}-1}{2(e^{\epsilon}+1)} L=\delta_1(\epsilon) L.
\end{equation}
We highlight that this construction can achieve a lower DB privacy leakage compared to the perfect privacy scenario in \cite{sun2017capacity} where $I(W_2;A_{[1:2]}^{(1)})=\nicefrac{L}{2}$, without the need for any shared randomness. However, this construction cannot fulfill the DB privacy constraint if $\delta<\delta_1(\epsilon)$. In the following example, we introduce a construction that can satisfy any DB privacy requirement with the utilization of the common randomness.
 %Here, we only consider user leakage and assume that any DB privacy leakage budget is tolerable (no certain $\delta$ is required). 

 \begin{figure}[t]
\begin{center}
\includegraphics[width=0.9\linewidth]{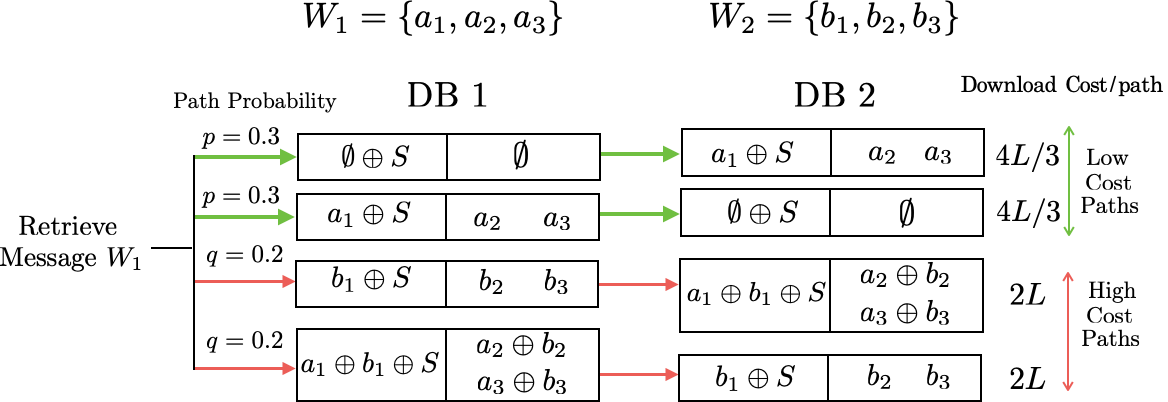}
\end{center}
%\vspace{-0.1in}
\caption{AL-PIR scheme for $N=2$, $K=2$, $\epsilon =\ln(1.5)$, and $\delta =\nicefrac{4}{15}$.}
\label{fig:ex3}
%\vspace{-0.1in}
\end{figure}

\subsection{AL-PIR example with $N=2$, $K=2$, $\epsilon=\ln{(1.5)}$, and $\delta=\nicefrac{4}{15}$}
Figure~\ref{fig:ex3} shows an example of a possible AL-PIR scheme with $N=2$, $K=2$, $\epsilon=\ln{(1.5)}$, and $\delta=\nicefrac{4}{15}$. We observe that the allowed DB privacy leakage $\delta$ is less than $\delta_1(\epsilon)$,
\begin{equation}
    \frac{4}{15}=\delta<\delta_1(\epsilon)=\frac{1}{2} -\frac{e^{\epsilon}-1}{2(e^{\epsilon}+1)}=0.4.
\end{equation}
Now, assume that each of the two messages is of size $L=3$ bits, $W_1=\{a_1,a_2,a_3\}$ and $W_2=\{b_1,b_2,b_3\}$. To  satisfy the $\delta-$DB privacy condition, we include the least required amount of shared randomness $S$ that has a size of $ \alpha_{1}(\epsilon,\delta)L$, where $ \alpha_{1}(\epsilon,\delta)$ is computed from \eqref{eq:alpha-min}:
\begin{equation}
    \alpha_{1}(\epsilon,\delta)L= (\frac{1}{N-1}-\frac{e^{\epsilon}+N^{K-1}-1}{N^{K-1}-1} \ \delta)L=\frac{L}{3}=1 \ \text{bit}.
\end{equation}
%$\nicefrac{L}{3}=1$ bit.
Each message  is divided into two parts as follows: $W_1$ is divided into  $W_1^{(1)}=\{a_1\}$ (size of $S=\nicefrac{L}{3}$), and $W_1^{(2)}=\{a_2,a_3\}$; and  
 $W_2$ is divided into  $W_2^{(1)}=\{b_1\}$, and $W_2^{(2)}=\{b_2,b_3\}$. 
 
 Suppose that the user wants to retrieve $W_1.$ The user can use any of the four possible paths shown in Figure~\ref{fig:ex2}, where a path is defined as a query set $Q^{(k)}_{[1:N]}$ which satisfies, together with its corresponding answer, the correctness and privacy constraints. %\textcolor{red}{Is this the right definition, or we have to define via the answers?}.  
 However, these paths have different download costs. The first two paths have a cost of $\nicefrac{4L}{3}$ bits, whereas the other two paths have a cost of $2L$ bits. The correctness of the scheme is straightforward, the XOR addition of the two structures forming each path results in getting $a_1$, $a_2$, and $a_3$.
 To reduce the download cost by trading user privacy, similar to the previous example, we select  the lower cost paths with probability $p=0.3$, whereas the higher cost paths are assigned  a probability $q=0.2$.
 These selection probabilities are chosen such that both $\epsilon-$user privacy and $\delta-$DB privacy conditions are satisfied. As we will discuss later in more details, the ratio describing $\epsilon-$user privacy leakage in \eqref{eq:DP} is given by the maximum ratio between the probabilities of selecting different paths, represented here as $p/q= 1.5 = e^{\epsilon}$.

  When $W_1$ is requested by the user, the DB privacy leakage is described as the information user can decode about $W_2$. We notice that the first two paths in Figure~\ref{fig:ex2} do not reveal any information about $W_2$, while using the other two paths, the user can decode the two bits $W_2^{(2)}= \{b_2,b_3\}$. This gives the average DB privacy leakage as 
  \begin{equation}
   I(W_2;A_{[1:2]}^{(1)})= 2\times0.3\times0 + 2\times0.2\times2= 0.8= 0.8\times\frac{L}{3}=\frac{4}{15}L=\delta L.   
  \end{equation}
Hence, this achieves the $\delta-$DB privacy condition. 
The average number of downloaded bits for this scheme is  
 \begin{equation}
     D_{\epsilon,\delta}= 2\times0.3\times4+2\times0.2\times 6 =4.8 \ \text{bits},
 \end{equation} which yields a download cost of \begin{equation}
     D\big(\epsilon=\ln{(1.5),\delta=\frac{4}{15}}\big)=\nicefrac{4.8}{3}=1.6.
 \end{equation} We highlight that this scheme clearly  improves the download cost in comparison to the perfect SPIR, which has a download cost of $\nicefrac{N}{N-1}=2$, at the expense of some loss in user and DB privacy.

\subsection{General $(\epsilon, \delta)$ AL-PIR Construction}

In this section, we generalize the AL-PIR scheme in the previous examples for arbitrary values of $N$, $K$, and asymmetric privacy leakage characterized by the pair $(\epsilon,\delta)$.
Assume there are $K\geq 2$ messages, $W_1,\dots, W_K$. 
Consider a random permutation of the databases indices.
Let each message $W_k$ be divided into two parts $W_k=\{W_{k}^{(1)}, W_{k}^{(2)}\}$ such that 
\begin{equation}
    H(W_{k}^{(1)})=(N-1) \alpha_{1}(\epsilon,\delta) L, \end{equation}
\begin{equation}
H(W_{k}^{(2)})=L-(N-1) \alpha_{1}(\epsilon,\delta) L,
\end{equation}
where $ \alpha_{1}(\epsilon,\delta)$
is the minimum required amount of shared randomness for the AL-PIR scheme to ensure the $\delta-$DB privacy and computed  as
\begin{align}
     \alpha_{1}(\epsilon,\delta)&=\max\left(0,\ \frac{1}{N-1}-\frac{e^{\epsilon}+N^{K-1}-1}{N^{K-1}-1} \ \delta\right)
     \nonumber\\
     &=\begin{cases}
      \frac{1}{N-1}-\frac{e^{\epsilon}+N^{K-1}-1}{N^{K-1}-1} \ \delta, & 0\leq \delta < \delta_{1}(\epsilon),\\
      0, &  \delta > \delta_{1}(\epsilon).
      \end{cases}
     \label{eq:alpha}
\end{align}
Furthermore, let each $W_{k}^{(1)}$ and $W_{k}^{(2)}$ be divided into $N-1$ equal sub-packets,
\begin{equation}
    W_{k}^{(1)}=\{W_{k,1}^{(1)}, \dots, W_{k,N-1}^{(1)}\},
\end{equation}
\begin{equation}
    W_{k}^{(2)}=\{W_{k,1}^{(2)}, \dots, W_{k,N-1}^{(2)}\},
\end{equation}
such that for all $\ell\in[1:N-1]$,
$
    H(W_{k,\ell}^{(1)})= \alpha_{1}(\epsilon,\delta) L,\quad
    H(W_{k,\ell}^{(2)})=(\frac{1}{N-1}- \alpha_{1}(\epsilon,\delta))L.
$
For instance, in the example of Figure~\ref{fig:ex3} where $ \alpha_{1}(\epsilon,\delta)=\nicefrac{1}{3}$ and $L=3$ bits,  $W_1$ is divided into $W_{1}^{(1)}=\{a_1\}$ of size 1 bit, and $W_{1}^{(2)}=\{a_2,a_3\}$ of size 2 bits.   
%Each message $W_k$ has a size of  $L=N-1$ bits, $W_k=\{W_{k,1}, \dots, W_{k,N-1}\}$. 

For a requested message $W_i$, the DBs mask $W_{k}^{(1)}$'s, $k\in[1:K]\setminus i$, with the secret key $S$. The content of $W_{k}^{(2)}$'s may be allowed to leak to the user.  To retrieve a required message $W_i$, the user first selects one of the possible retrieval paths across the $N$ DBs. Any path is formed by a set of $N$ queries, $Q^{(k)}_{[1:N]}$, which are submitted to the respective DBs. %\textcolor{red}{It becomes more and more obvious to me that a path is nothing more than a query. No? Why the new terminology} 
The selected path has to fulfill two requirements: (i) the path correctly recovers $W_i$; (ii) the $N$ submitted queries satisfy both the $\epsilon-$user privacy and $\delta-$DB privacy conditions. 
The user sends the following query vector  to  DB$_n$
\begin{equation}
\pi_{n,i}=(x_1,\ldots,x_{i-1},(x_{i}+n)_N,x_{i+1},\ldots, x_K),\quad x_k\in[0:N-1],\ k\in[1:K],  
\end{equation}
where $(x_{i}+n)_N$ denotes $(x_{i}+n) \ (\text{mod} \ N)$. This $K\times 1$ vector gives the indices of the $K$ message bits, one bit for each message,  that should be included in the answers. The design of $\pi_{n,i}$ makes sure that all submitted queries $\pi_{n,i}$'s include the same indices of all undesired messages, and different indices of the required $W_i$. Then, the identical undesired bits, within the $N$ collected answers, can be utilized to decode the desired bits.
After DB$_n$ receives the query $\pi_{n,i}$, it responds with answer $\gamma_{n}(\pi_{n,i})$,
{
\begin{align}
\label{eq:paths}
  &\gamma_{n}(\pi_{n,i})=\big\{\underset{k\in[1:K]\setminus i}{\bigoplus} W_{k,x_k}^{(1)}\oplus S \oplus W_{i,(x_i+n)_N}^{(1)}, \underset{k\in[1:K]\setminus i}{\bigoplus} W_{k,x_k}^{(2)}\oplus W_{i,(x_i+n)_N}^{(2)}\big\},
\end{align}}
\normalsize
where $\bigoplus$ represents the summation via XOR operation. We denote  by $W_{k,0}^{(1)}$ and $W_{k,0}^{(2)}$ the null or the empty set $\emptyset$. 
This response ensures protecting all $W^{(1)}_{k,\ell}$'s by encoding them with $S$. 
We observe that once one of the $N$ queries is designed, i.e., the indices $x_k$'s are chosen, the remaining $N-1$ queries are  deterministic functions of these chosen indices. As $x_k\in[0:N-1]$ for any $k\in[1:K]$, each  $\pi_{n,i}$ can be represented by $N^K$ different vectors. Each of these vectors creates one possible path to retrieve $W_i$. Thus, for a specific permutation of DBs indices, we have $N^K$ possible paths in general. 
For the example in Figure~\ref{fig:ex3}, there are $N^K=4$ paths for the retrieval of $W_1$. The first retrieval path is created from the queries $\pi_{1,1}=(0,0)$, and $\pi_{2,1}=(1,0)$ with corresponding answers  $\gamma_{1}(\pi_{1,1})=\{S,\emptyset\}$, and   $\gamma_{2}(\pi_{2,1})=\big\{a_1\oplus S,\{a_2,a_3\} \big\}$. 
A general form for one possible path  is shown in Figure~\ref{fig:op}.

{\bf Analysis of Correctness:}  The user can  decode the sub-messages $W_{i,\ell}^{(1)}$ and $W_{i,\ell}^{(2)},$  $\forall \ \ell\in[1:N-1]$, of the requested message ($W_{i,\ell}$)  using the information retrieved from DBs $N-x_i$ and $(N+\ell-x_i)_N$  as follows %\textcolor{red}{The notation is so convoluted here, it is impossible to understand and the construction bears no intuition.}
\begin{align}
      \{W_{i,\ell}^{(1)},W_{i,\ell}^{(2)}\}&=
      \gamma_{N-x_i}(\pi_{N-x_i,i}) \oplus \gamma_{N+\ell-x_i}(\pi_{(N+\ell-x_i)_N,i})\nonumber\\
      &=\big\{\underset{k\in[1:K]\setminus i}{\bigoplus} W_{k,x_k}^{(1)}\oplus S \oplus W_{i,0}^{(1)},\underset{k\in[1:K]\setminus i}{\bigoplus} W_{k,x_k}^{(2)} \oplus W_{i,0}^{(2)} \big\}\nonumber\\& \quad \oplus \big\{\underset{k\in[1:K]\setminus i}{\bigoplus} W_{k,x_k}^{(1)}\oplus S \oplus  W_{i,\ell}^{(1)},\underset{k\in[1:K]\setminus i}{\bigoplus} W_{k,x_k}^{(2)} \oplus  W_{i,\ell}^{(2)}\big\}\nonumber\\
       &=  \big\{W_{i,0}^{(1)}\oplus W_{i,\ell}^{(1)}, \ W_{i,0}^{(2)}\oplus W_{i,\ell}^{(2)}\big\}\nonumber\\&=\big\{W_{i,\ell}^{(1)}, \ W_{i,\ell}^{(2)}\big\}.
\end{align}

% Similar to the previous construction, we can divide the paths into two categories according to the values $x_k$'s, $k\neq i$, and their download costs.
% The selection probabilities are also $p$ and $q$, where $p\geq q$, for structures that initiate paths of cost $(1+\alpha)L$ and $\frac{N}{N-1}L$, respectively. 

{\bf Proof of $\epsilon-$user privacy:} We note that the total download cost for each path is not fixed, but it depends on the choice of $x_k$'s, $k\in[1:K]$. Then, we have two types of paths:
\begin{itemize}
    \item {\textbf{Lower cost paths} ($\mathbf{\ \forall k\in[1:K]\setminus i, \ x_k= 0}$):}\\
    Generally, $N$ possible paths belong to this case, those created from queries $\pi_{n,i}$ where $x_i=0,1,\dots,N-1$. In this case, we have
    \begin{equation}
     \gamma_{N-x_i}(\pi_{N-x_i,i})= \big\{\underset{k\in[1:K]\setminus i}{\bigoplus} W_{k,0}^{(1)}\oplus S \oplus W_{i,0}^{(1)},\underset{k\in[1:K]\setminus i}{\bigoplus} W_{k,0}^{(2)} \oplus W_{i,0}^{(2)} \big\}=\big\{S,\emptyset\big\},   
    \end{equation}
     i.e., we only download the secret key $S$ of size $ \alpha_{1}(\epsilon,\delta)L$ from DB$_{N-x_i}$. Whereas, for other databases, all structures download data of the form $\big\{W_{i,\ell}^{(1)}\oplus S,W_{i,\ell}^{(2)}\big\}$, each  of size $\frac{L}{N-1}$ bits. In total for these type of paths, the user needs to download $(N-1) \frac{L}{N-1}+ \alpha_{1}(\epsilon,\delta)L =(1+ \alpha_{1}(\epsilon,\delta))L$ bits from all DBs. %All paths comes through these structures have a cost of $L$ bits. 
    \item {\textbf{Higher cost paths} ($  \mathbf{\exists \  k\in[1:K]\setminus i, x_k\neq 0}$):}\\
    For this case, there are $N^K-N$ possible paths. 
   Here,  all requested query structures are of size $\frac{L}{N-1}$ bits, and  the user needs to download $\frac{N}{N-1}L$  bits in total from all DBs. 
\end{itemize}
%We observe from the construction that the user has $N^K$ choices of structures, per database, to select among them.  For each structure, there are $N-1\times N-2\times \dots\times 1= N-1!$ possible choices for  the remaining $N-1$ requested structures. Thus, for each requested message, there are $N-1!$ paths that can pass through any structure. These paths have the same cost, either $L$ or $N\frac{L}{N-1}$, as they share the same $x_j$'s.  Hence, for each requested message, we have: 
%\begin{enumerate}[(i)]
 %   \item $N\cdot (N-1)!$ paths of cost L that are initiated from a structure that has all $x_j$'s equal zero.
  %  \item $(N^K-N)\cdot (N-1)!$ paths of cost $N\frac{L}{N-1}$ that are initiated from any of the $N^K-N$ structures that has  $x_j\neq 0$ for at least one $j\neq i$.
%\end{enumerate} 

Without loss of generality, we assign probabilities $p$ and $q$, where $p\geq q$, to higher cost and lower cost paths, respectively\footnote{Due to symmetry, paths belonging to the same type are assigned the same probability. Assigning different probabilities does not improve the download cost or the privacy.}, such that
\begin{equation}
\label{eq:sum_prob_N}
 N\times p+(N^K-N)\times q=1.   
\end{equation}
%Taking into account the possible $N-1!$ paths coming through each structure, each structure is picked with a probability of either $p$ or $q$.
%Figure~\ref{fig:ex3} shows the AL-PIR scheme for $K=3$.
%\vspace{-0.1in}
\begin{figure}[t]
\begin{center}
\includegraphics[width=0.85\linewidth]{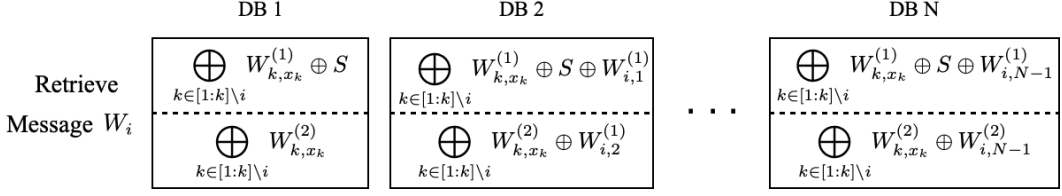}
\end{center}
%\vspace{-0.1in}
\caption{One path of the AL-PIR scheme that retrieves $W_i$ with $x_i=N-1$.}% \textcolor{red}{Also change the queries to much your initial notation.}}
\label{fig:op}
%\vspace{-0.1in}
\end{figure}

We can see that any query $\pi_{n,i}$ with  certain $x_k$'s can be used to recover any desired message. This is obtained by requesting that query with other $N-1$ queries that share the same $x_k$'s of the $K-1$ remaining messages. This is crucial to satisfy the $\epsilon-$user privacy requirements because accessing a structure does not eliminate any of the message possibilities. Furthermore, each structure is selected to retrieve $W_i$ with the same probability of selecting the path coming through it, either $p$ or $q$. Similar to \eqref{eq:20}, $\Pr\{Q^{(i)}_{n}=\pi,A^{(i)}_{n}=\gamma| \mathbf{W}_{\Omega}\}$ can be expressed as
\begin{equation}
\label{eq:20'}
    \Pr\{Q^{(i)}_{n}=\pi,A^{(i)}_{n}=\gamma| \mathbf{W}_{\Omega}\}=\Pr\{Q^{(i)}_{n}=\pi| \mathbf{W}_{\Omega}\}\Pr\{A^{(i)}_{n}=\gamma|Q^{(i)}_{n}=\pi, \mathbf{W}_{\Omega}\}.
\end{equation}
The term $\Pr\{A^{(i)}_{n}=\gamma|Q^{(i)}_{n}=\pi, \mathbf{W}_{\Omega}\}$ is also a constant, independent of the requested message. Thus, to meet the definition in \eqref{eq:DP}, we show that possible structures of each query satisfy:
\begin{equation}
\label{eq:DP_p}
  \frac{\Pr(\pi_{n,i}|i=k_1)}{\Pr(\pi_{n,i}|i=k_2)} \leq e^{\epsilon} , \forall n, \ k_1,k_2\in[1:K],
\end{equation}
where $\Pr(\pi_{n,i}|i=k_1),$ is the probability of selecting structure $\pi_{n,i}$ when the desired message is $W_{k_1}$. 
The following lemma generalizes the condition in \eqref{eq:DP_p2} to satisfy $\epsilon-$user privacy for any $K$. It states the upper bound on the path biasing that does not violate the $\epsilon-$user privacy.

%From \eqref{eq:pi_Ns}, we notice that any structure with a certain $x_k$'s can be used to recover any desired message. This is obtained by requesting that structure with other $N-1$ structures that share the same $x_k$'s of the $K-1$ unrequested messages. This is crucial to satisfy the $\epsilon-$user privacy requirements because accessing a structure does not eliminate any of message possibilities. To prove $\epsilon-$user privacy, we  express probability $Pr\{Q^{i}_{n}=\pi,A^{i}_{n}=\alpha| \mathbf{W}_{\Omega}\}$  as
%\begin{equation}
%\label{eq:20}
%\begin{split}
 %   Pr\{Q^{i}_{n}=\pi,A^{i}_{n}&=\alpha| \mathbf{W}_{\Omega}\}=Pr\{Q^{i}_{n}=\pi| \mathbf{W}_{\Omega}\} \\ &\quad \quad \times Pr\{A^{i}_{n}=\alpha|Q^{i}_{n}=\pi, \mathbf{W}_{\Omega}\}.
  %  \end{split}
%\end{equation}
%The term $Pr\{Q^{i}_{n}=\pi| \mathbf{W}_{\Omega}\}$ depends on the path selection probability. However, for any answer to a specific structure  $\pi_{n}(\bar{x})$, the term $Pr\{A^{i}_{n}=\alpha|Q^{i}_{n}=\pi, \mathbf{W}_{\Omega}\}$ should equal a constant independent of the requested message. 
%Thus to meet this definition in \eqref{eq:DP}, it is sufficient to show that possible structures to each query must satisfy:
%\begin{align}
%  &\frac{\Pr(\pi_{n}(\bar{x})|i_1)}{\Pr(\pi_{n}(\bar{x})|i_2)} <e^{\epsilon} ,\nonumber\\
 % &\quad\quad \forall n\in[1:N],  \ \bar{x}\in[0:N-1]^K, \ i_1,i_2\in[1:K].
%\end{align}
%where $\Pr(\pi_{n}(\bar{x})|i),$ is the probability of selecting structure $\pi_{n}(\bar{x})$ when the desired message is $W_i$.  
\begin{lemma}
\label{lem:p<2}
To preserve $\epsilon-$user privacy definition of the AL-PIR, the biased probability $p$ has to satisfy the following inequality 
\begin{equation}
\label{eq:lemma-2}
    p\leq \frac{e^{\epsilon}}{Ne^{\epsilon}+N^{K}-N}.
\end{equation}
\end{lemma}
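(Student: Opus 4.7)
The plan is to exploit the symmetric structure of the scheme so that the likelihood ratio in \eqref{eq:DP_p} depends only on the sparsity pattern of a query observed at a single database. For a fixed query realization $\pi=(y_1,\ldots,y_K)$ received by $\text{DB}_n$ and any candidate desired index $k$, the scheme uniquely determines the full retrieval path: the path-level vector $(x_1,\ldots,x_K)$ must satisfy $x_j=y_j$ for all $j\neq k$ and $x_k=(y_k-n)_N$. Hence $\Pr\{\pi_{n,i}=\pi\mid i=k\}$ equals the selection probability of this unique path, which by construction is the probability assigned to a lower-cost path if $y_j=0$ for every $j\neq k$, and the probability assigned to a higher-cost path otherwise.

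I would then do a case analysis on the sparsity of $\pi$. If $\pi$ is identically zero, every candidate $k$ yields a lower-cost path, so the ratio in \eqref{eq:DP_p} equals $1$. If $\pi$ has two or more non-zero entries, then for every choice of $k$ there is at least one non-zero $y_j$ with $j\neq k$, so every associated path is higher-cost and the ratio again equals $1$. The only non-trivial case is when $\pi$ has exactly one non-zero entry at some position $j$: the associated path is lower-cost precisely when $k=j$ and higher-cost for every $k\neq j$, so the worst-case ratio over pairs $(k_1,k_2)$ equals the ratio of the lower-cost to the higher-cost path probability, which under the labeling implicit in \eqref{eq:sum_prob_N} is $p/q$.

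Combining these observations, \eqref{eq:DP_p} collapses to the single binding constraint $p/q\leq e^{\epsilon}$. To finish, I would eliminate $q$ using \eqref{eq:sum_prob_N} to obtain $q=(1-Np)/(N^K-N)$, substitute into the binding constraint, and solve the resulting linear inequality $p(N^K-N)\leq e^{\epsilon}(1-Np)$ for $p$. Rearranging gives $p\bigl(Ne^{\epsilon}+N^K-N\bigr)\leq e^{\epsilon}$, which is exactly \eqref{eq:lemma-2}.

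The main subtlety, rather than a deep obstacle, is making the query-to-path correspondence precise: one must verify that, conditioned on the requested index $k$, the map from path-level index vectors $(x_1,\ldots,x_K)$ to the query observed at $\text{DB}_n$ is a bijection, so that the conditional probability of seeing $\pi$ at $\text{DB}_n$ equals the selection probability of the unique associated path, and that the answer-given-query factor in \eqref{eq:20'} does not vary with the requested index. Once this bijection and the symmetry across answer structures are established, the three cases above exhaust all possibilities and the algebraic step is routine.
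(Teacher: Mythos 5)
Your proof is correct and follows the same route as the paper's: both reduce the $\epsilon$-user privacy constraint to the single binding likelihood ratio $p/q\leq e^{\epsilon}$ and then eliminate $q$ via the total-probability constraint \eqref{eq:sum_prob_N}. The paper simply asserts that the ratio lies in $\{1,\,p/q,\,q/p\}$, whereas you justify this with the query-to-path bijection and the three-case sparsity analysis---a careful filling-in of a step the paper treats as evident, not a genuinely different approach.
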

\begin{proof}
 Based on the proposed scheme,
 each structure $\pi_{n,i}$ can be selected with probability $p$ or $q$, then
\begin{equation}
  \frac{\Pr(\pi_{n,i}|i=k_1)}{\Pr(\pi_{n,i}|i=k_2)} \in\bigg\{\frac{p}{p},\frac{q}{q},\frac{p}{q},\frac{q}{p}\bigg\}\leq e^{\epsilon}.
\end{equation}
As $p\geq q$, we only need to guarantee that
 \begin{equation}
     \frac{p}{q} \leq  e^{\epsilon}.
 \end{equation}
   Substituting  \eqref{eq:sum_prob_N} in the inequality, we get
    \begin{equation}
    \begin{split}
     e^{\epsilon}\ &\geq\frac{p}{q}=\frac{(N^K-N) p}{(N^K-N) q}=\frac{(N^K-N) p}{1-Np}.
    \end{split}
 \end{equation}
 By rearranging the above inequality, we get the following:
    \begin{equation}
    \label{eq:p33}
    p\leq   \frac{e^{\epsilon}}{Ne^{\epsilon}+N^{K}-N}.  
   \end{equation}
   %\textcolor{red}{do we need both sides of the inequality?}
Equation \eqref{eq:sum_prob_N} can be used to find the following equivalent condition:
\begin{equation}
\label{eq:qcalc}
    q\geq\frac{1}{Ne^{\epsilon}+N^{K}-N}.
\end{equation}
\end{proof}

{\bf Analysis of $\delta-$DB privacy:}
We show that the proposed AL-PIR scheme satisfies the DB privacy leakage constraint in \eqref{eq:MPri}. From the previous construction, we categorized the paths into two groups: (a) $N$ paths of size $(1+ \alpha_{1}(\epsilon,\delta))L$ bits; and
(b) $N^K-N$ paths of size $\frac{N}{N-1}L$ bits. Then, the expected size of the answers, $H(A_{[1:N]}^{(i)})$ can be expressed as follows:
\begin{align}
    \label{eq:A1:N}
    H(A_{[1:N]}^{(i)})&=pN(1+ \alpha_{1}(\epsilon,\delta))L+q (N^K-N)\frac{N}{N-1}L\nonumber\\
    &\overset{(a)}{=} pN \alpha_{1}(\epsilon,\delta) L+L+\frac{N^K-N}{N-1}qL\nonumber\\
    &\overset{(b)}{=} L+pN \alpha_{1}(\epsilon,\delta) L+\frac{1-pN}{N-1}L\nonumber\\
     &= L+\frac{L}{N-1}-pN\left(\frac{1}{N-1}- \alpha_{1}(\epsilon,\delta)\right)L,
\end{align}
where $(a)$ and $(b)$ follow from \eqref{eq:sum_prob_N}. We then calculate $H(A_{[1:N]}^{(i)}|\W_{[1:K]\setminus i})$ as follows:
\begin{align}
    \label{eq:A1:N|W}
        H(A_{[1:N]}^{(i)}|\W_{[1:K]\setminus i})&\overset{(a)}{=}H(W_i,A_{[1:N]}^{(i)}|\W_{[1:K]\setminus i})\nonumber\\
        &\overset{(b)}{=}H(W_i)+H(S,A_{[1:N]}^{(i)}|\W_{[1:K]\setminus i},W_i)\nonumber\\
        &=H(W_i)+H(S)+H(A_{[1:N]}^{(i)}|\W_{[1:K]\setminus i},W_i,S)\nonumber\\
        &\overset{(c)}{=}H(W_i)+H(S)=(1+ \alpha_{1}(\epsilon,\delta))L,
\end{align}
where $(a)$ follows the correctness property in \eqref{eq:corre} whereas $(b)$ and $(c)$ hold from the fact that answers are function of messages and the shared randomness.
\begin{lemma}
\label{lem:p>}
To preserve $\delta-$DB privacy for $\delta<\delta_{1}(\epsilon)$, the biased probability $p$ has to satisfy the following inequality 
\begin{equation}
\label{eq:lemma-3}
    p\geq \frac{e^{\epsilon}}{Ne^{\epsilon}+N^{K}-N}.
\end{equation}
\end{lemma}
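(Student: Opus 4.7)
The plan is to derive the lower bound on $p$ directly from the $\delta$-DB privacy constraint in \eqref{eq:MPri}, using the entropy computations already established in \eqref{eq:A1:N} and \eqref{eq:A1:N|W}. Since the queries $Q^{(i)}_{[1:N]}$ are generated by the user as a function of the desired index $i$ and local randomness (and are independent of the messages), we have $I(\W_{[1:K]\setminus i};Q^{(i)}_{[1:N]})=0$, so
\[
I\!\left(\W_{[1:K]\setminus i};Q^{(i)}_{[1:N]},A^{(i)}_{[1:N]}\right)
\leq H\!\left(A^{(i)}_{[1:N]}\right)-H\!\left(A^{(i)}_{[1:N]}\,\big|\,\W_{[1:K]\setminus i}\right).
\]
Thus it suffices to bound this entropy difference by $\delta L$ and then extract the resulting constraint on $p$.

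Next, I would substitute the two closed-form expressions already derived: by \eqref{eq:A1:N} we have $H(A^{(i)}_{[1:N]})=L+\tfrac{L}{N-1}-pN\bigl(\tfrac{1}{N-1}-\alpha_1(\epsilon,\delta)\bigr)L$, while by \eqref{eq:A1:N|W} we have $H(A^{(i)}_{[1:N]}\,|\,\W_{[1:K]\setminus i})=(1+\alpha_1(\epsilon,\delta))L$. Subtracting and collecting terms yields the compact expression
\[
I\!\left(\W_{[1:K]\setminus i};Q^{(i)}_{[1:N]},A^{(i)}_{[1:N]}\right)\leq (1-pN)\left(\tfrac{1}{N-1}-\alpha_1(\epsilon,\delta)\right)L.
\]
Requiring the right-hand side to be at most $\delta L$ gives a linear inequality in $p$ involving $\alpha_1(\epsilon,\delta)$.

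Finally, for the regime $\delta<\delta_1(\epsilon)$, I substitute the definition $\alpha_1(\epsilon,\delta)=\tfrac{1}{N-1}-\tfrac{e^\epsilon+N^{K-1}-1}{N^{K-1}-1}\delta$ from \eqref{eq:alpha-min}, so that $\tfrac{1}{N-1}-\alpha_1(\epsilon,\delta)=\tfrac{e^\epsilon+N^{K-1}-1}{N^{K-1}-1}\delta$. The factor of $\delta$ cancels on both sides, leaving
\[
(1-pN)\leq \frac{N^{K-1}-1}{e^\epsilon+N^{K-1}-1},
\]
which rearranges (after multiplying through by $N$ and simplifying $N(e^\epsilon+N^{K-1}-1)=Ne^\epsilon+N^K-N$) into the claimed bound $p\geq \tfrac{e^\epsilon}{Ne^\epsilon+N^K-N}$.

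The main obstacle is essentially verification rather than discovery: one needs to justify the step in which queries are dropped from the mutual information (appealing to their independence from messages) so that the entropy identities of \eqref{eq:A1:N}--\eqref{eq:A1:N|W} can be used without modification, and then perform the algebraic cancellation carefully to confirm that $\delta$ factors out cleanly. Once this cancellation is observed, the lemma follows by direct inspection and, pleasingly, matches the upper bound from Lemma~\ref{lem:p<2} with equality, so that the probability $p=\tfrac{e^\epsilon}{Ne^\epsilon+N^K-N}$ is the unique choice simultaneously satisfying both the $\epsilon$-user and the $\delta$-DB privacy constraints in the regime $\delta<\delta_1(\epsilon)$.
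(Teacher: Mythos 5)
Your route mirrors the paper's almost exactly: evaluate the leakage via \eqref{eq:A1:N} and \eqref{eq:A1:N|W} to get $(1-pN)\bigl(\tfrac{1}{N-1}-\alpha_1(\epsilon,\delta)\bigr)L$, substitute $\alpha_1(\epsilon,\delta)=\tfrac{1}{N-1}-\tfrac{e^\epsilon+N^{K-1}-1}{N^{K-1}-1}\delta$ in the regime $\delta<\delta_1(\epsilon)$ so that $\delta$ cancels, and rearrange $(1-pN)\leq\tfrac{N^{K-1}-1}{e^\epsilon+N^{K-1}-1}$ to the stated bound.

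One step is pointed the wrong way, though. You claim that $I(\W_{[1:K]\setminus i};Q^{(i)}_{[1:N]})=0$ yields
\begin{equation*}
I\bigl(\W_{[1:K]\setminus i};Q^{(i)}_{[1:N]},A^{(i)}_{[1:N]}\bigr)\;\leq\; H\bigl(A^{(i)}_{[1:N]}\bigr)-H\bigl(A^{(i)}_{[1:N]}\mid \W_{[1:K]\setminus i}\bigr),
\end{equation*}
but adding a variable to the right argument of mutual information can only increase it:
$I(\W;Q,A)=I(\W;A)+I(\W;Q\mid A)\geq H(A)-H(A\mid\W)$. With ``$\leq$'' as written, the subsequent ``it suffices to bound this entropy difference by $\delta L$'' does not extract a \emph{necessary} constraint on $p$ from the $\delta$-DB requirement, which is what the lemma asserts. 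Flip the inequality to ``$\geq$'' (for which $I(\W;Q)=0$ is not even needed) and the chain $\delta L\geq I(\W;Q,A)\geq H(A)-H(A\mid\W)=(1-pN)\bigl(\tfrac{1}{N-1}-\alpha_1\bigr)L$ is precisely what the paper uses; your remaining algebra then closes the proof.
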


\begin{proof}
According to \eqref{eq:A1:N} and \eqref{eq:A1:N|W}, we can express the DB privacy leakage as
\begin{align}
         \delta L&\geq I(\W_{[1:K]\setminus i};A_{[1:N]}^{(i)})\nonumber\\
         &=H(A_{[1:N]}^{(i)})-H(A_{[1:N]}^{(i)}|\W_{[1:K]\setminus i})\nonumber\\
        &=L+\frac{L}{N-1}-pN\left(\frac{1}{N-1}- \alpha_{1}(\epsilon,\delta)\right)L-(1+ \alpha_{1}(\epsilon,\delta))L\nonumber\\
        &\overset{(a)}{=}(pN-1) \alpha_{1}(\epsilon,\delta) L+\frac{1-pN}{N-1}L\nonumber\\
        &=(1-pN)(\frac{1}{N-1}- \alpha_{1}(\epsilon,\delta))L\nonumber\\
       & \overset{(b)}{=} (1-pN)  \min\left(\frac{1}{N-1},\ \frac{e^{\epsilon}+N^{K-1}-1}{N^{K-1}-1} \ \delta\right) L,\label{eq:DB-priv}
\end{align}
where $(a)$ follows from \eqref{eq:sum_prob_N} and $(b)$ follows from \eqref{eq:alpha}. For the commonly shared randomness $S$, we have one of the following two cases:
\begin{itemize}
    \item {\textbf{No shared randomness is needed} ($\mathbf{ \alpha_{1}(\epsilon,\delta)=0}$):}\\ 
    In this case, the condition in \eqref{eq:DB-priv} can be written as follows,
    \begin{align}
    \label{eq:68''}
         \delta L&\geq  \frac{1-pN}{N-1} L \overset{(a)}{\geq} \frac{N^{K-1}-1}{(N-1)(e^{\epsilon}+N^{K-1}-1)}L =  \delta_{1}(\epsilon) L,
\end{align}
where step $(a)$ follows by applying the $\epsilon-$user privacy condition obtained in Lemma~\ref{lem:p<2}. Therefore, we obtain the bound on the DB privacy leakage $\delta\geq \delta_{1}(\epsilon)$, i.e., DB privacy leakage is maximized which covers the L-PIR model previously considered in \cite{samy2019capacity}. This case requires no condition on the biased probability $p$ as the inequality in \eqref{eq:68''} is achieved for any $p$.  We highlight that this scheme obtains a better  DB privacy compared to the perfect PIR scheme proposed in \cite{sun2017capacity}, without the need to any shared amount of randomness. The latter scheme causes a leakage of  $\frac{N^{K-1}-1}{(N-1)N^{K-1}}L$ bits.
    
    \item {\textbf{Shared randomness is needed} ($\mathbf{ \alpha_{1}(\epsilon,\delta)>0}$):}\\
    For any $ \alpha_{1}(\epsilon,\delta)>0$, we always have
    \begin{align}
        \frac{1}{N-1}>\ \frac{e^{\epsilon}+N^{K-1}-1}{N^{K-1}-1} \ \delta.
    \end{align}
    From \eqref{eq:DB-priv}, we get the following relation on $p$: %\textcolor{red}{Rephrase. Not sure what you mean by according to alpha1, we can achieve delta privacy.}
\begin{align}
    (1-pN)  \leq \frac{N^{K-1}-1}{e^{\epsilon}+N^{K-1}-1},
\end{align}
which leads to the proof of Lemma~\ref{lem:p>}.

\end{itemize}

\end{proof}

Lemmas~\ref{lem:p<2} and \ref{lem:p>} lead to the following necessary condition on $p$ to simultaneously satisfy the $\epsilon-$user privacy and $\delta-$DB privacy definitions.
\begin{lemma}
\label{lem:p=}
To preserve $\epsilon-$user privacy and $\delta-$DB privacy, the biased probability $p$ has to satisfy the following condition with equality 
\begin{equation}
\label{eq:lemma-1}
    p= \frac{e^{\epsilon}}{Ne^{\epsilon}+N^{K}-N}.
\end{equation}
\end{lemma}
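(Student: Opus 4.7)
The plan is to obtain the equality by sandwiching the two inequalities already established in Lemmas \ref{lem:p<2} and \ref{lem:p>}. I would first recall that Lemma \ref{lem:p<2} gives the upper bound $p \leq \frac{e^{\epsilon}}{Ne^{\epsilon}+N^{K}-N}$ as a necessary condition to satisfy the $\epsilon$-user privacy constraint in \eqref{eq:DP}, coming from the requirement that the worst-case likelihood ratio $p/q$ between the higher-cost and lower-cost paths not exceed $e^{\epsilon}$. Then I would invoke Lemma \ref{lem:p>} to obtain the matching lower bound $p \geq \frac{e^{\epsilon}}{Ne^{\epsilon}+N^{K}-N}$, which is forced in the regime $\delta < \delta_1(\epsilon)$ by the $\delta$-DB privacy constraint together with \eqref{eq:A1:N} and \eqref{eq:A1:N|W}.

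Intersecting these two inequalities immediately yields the equality
\begin{equation*}
p = \frac{e^{\epsilon}}{Ne^{\epsilon}+N^{K}-N},
\end{equation*}
which simultaneously saturates both privacy constraints.

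The one subtlety I would address is the complementary regime $\delta \geq \delta_1(\epsilon)$, in which $\alpha_1(\epsilon,\delta)=0$ and Lemma \ref{lem:p>} does not directly provide a lower bound on $p$. Here I would appeal to the optimality of the construction: the download cost expression derived in \eqref{eq:A1:N}, namely $H(A_{[1:N]}^{(i)})=L+\tfrac{L}{N-1}-pN\bigl(\tfrac{1}{N-1}-\alpha_1(\epsilon,\delta)\bigr)L$, is strictly decreasing in $p$ whenever $\alpha_1(\epsilon,\delta)<\tfrac{1}{N-1}$. Hence minimizing the download cost forces $p$ to be pushed to the maximum value allowed by Lemma \ref{lem:p<2}, again yielding the claimed equality. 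I would conclude by combining this value of $p$ with the total probability constraint \eqref{eq:sum_prob_N} to recover the dual identity $q = \tfrac{1}{Ne^{\epsilon}+N^{K}-N}$ already stated in \eqref{eq:qcalc}.

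Since Lemmas \ref{lem:p<2} and \ref{lem:p>} do the real work, I do not anticipate any serious obstacle here; the only point requiring care is making sure the two bounds provably coincide at the same numerical value (by direct algebraic inspection) and explaining why $p$ is pushed to its upper extreme in the no-randomness regime where the DB-privacy inequality is slack.
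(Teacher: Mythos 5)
Your proposal is correct and follows essentially the same route as the paper: for $\delta < \delta_1(\epsilon)$ the equality follows by intersecting Lemmas~\ref{lem:p<2} and~\ref{lem:p>}, and for $\delta \geq \delta_1(\epsilon)$ one picks the largest $p$ allowed by Lemma~\ref{lem:p<2} because the download cost in \eqref{eq:A1:N} is decreasing in $p$. Your added justification that the cost is strictly decreasing when $\alpha_1(\epsilon,\delta)<\tfrac{1}{N-1}$ is a slight elaboration of the paper's one-line remark but does not change the argument.
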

\begin{proof}
For $\delta<\delta_{1}(\epsilon)$, the proof follows directly by applying Lemmas~\ref{lem:p<2} and \ref{lem:p>}. For $\delta\geq \delta_{1}(\epsilon)$, the proof follows from Lemma~\ref{lem:p<2} where we pick the maximum value of the biasing probability $p$ in order to minimize the download cost, i.e., maximize the probability of picking the paths of lower cost.
\end{proof}

{\bf Analysis of download cost:} Given that all messages are requested equiprobably, the download cost can be written as, 
\begin{align}
    D(\epsilon,\delta)&=\frac{H(A_{[1:N]}^{(i)})}{ L} \overset{(a)}{=} 1+\frac{1}{N-1}-pN\left(\frac{1}{N-1}- \alpha_{1}(\epsilon,\delta)\right)\nonumber\\
     &\overset{(c)}{=} 1+\frac{1}{N-1}-pN\min\left(\frac{1}{N-1},\ \frac{e^{\epsilon}+N^{K-1}-1}{N^{K-1}-1} \ \delta\right)\nonumber\\
     &\overset{(d)}{=} 1+\frac{1}{N-1}-\frac{e^{\epsilon}}{e^{\epsilon}+N^{K-1}-1}\times\min\left(\frac{1}{N-1},\ \frac{e^{\epsilon}+N^{K-1}-1}{N^{K-1}-1} \ \delta\right),\label{eq:Download-cost}
     %&= 1+\frac{1}{N-1}-\frac{\delta e^{\epsilon}}{N^{K-1}-1} ,\\
    \end{align}
where $(a)$ follows from \eqref{eq:A1:N}, $(b)$ comes from \eqref{eq:sum_prob_N}, $(c)$ follows \eqref{eq:alpha}, and $(d)$ is due to Lemma \ref{lem:p=}. 
According to the size of the available randomness $S$, we have one of the following two cases:
\begin{itemize}
    \item {\textbf{No shared randomness is needed} ($\mathbf{ \alpha_{1}(\epsilon,\delta)=0}$):}\\ 
    This case corresponds to $\delta \geq \delta_{1}(\epsilon)$. The download cost in \eqref{eq:DB-priv} can be written as follows,
\begin{align}
\label{eq:download-cost-1}
D(\epsilon,\delta)&= 1+\frac{1}{N-1}-\frac{e^{\epsilon}}{e^{\epsilon}+N^{K-1}-1}\times\ \frac{1}{N-1} = 1+\frac{N^{K-1}-1}{(N-1)(e^{\epsilon}+N^{K-1}-1)}\nonumber\\
&= 1+\frac{N^{K-1}}{e^{\epsilon}+N^{K-1}-1}\Bigg (\frac{1}{N}+\dots+\frac{1}{N^{K-1}} \Bigg ) =d_1(\epsilon,\delta_{1}(\epsilon)).
\end{align}
%where $d_1(\epsilon,\delta_{1}(\epsilon))$ is defined in \eqref{eq:d1_e-d}.
    
    \item {\textbf{Shared randomness is needed} ($\mathbf{ \alpha_{1}(\epsilon,\delta)>0}$):}\\
     For any $ \alpha_{1}(\epsilon,\delta)>0$, we  have
    \begin{align}
        \frac{1}{N-1}>\ \frac{e^{\epsilon}+N^{K-1}-1}{N^{K-1}-1} \ \delta.
    \end{align}
    Then, the download cost in \eqref{eq:Download-cost} can be re-expressed as
\begin{align}
\label{eq:download-cost-2}
D(\epsilon,\delta)&= 1+\frac{1}{N-1}-\frac{e^{\epsilon}}{e^{\epsilon}+N^{K-1}-1}\times\ \frac{e^{\epsilon}+N^{K-1}-1}{N^{K-1}-1}\delta = 1+\frac{N^{K-1}-1}{(N-1)(e^{\epsilon}+N^{K-1}-1)}\nonumber\\
&= 1+\frac{1}{N-1}-\frac{\delta e^{\epsilon}}{N^{K-1}-1} =d_1(\epsilon,\delta).
\end{align}
\end{itemize}
Both cases in \eqref{eq:download-cost-1} and \eqref{eq:download-cost-2}  yield  the upper bound in \eqref{eq:UB} for the  download cost of AL-PIR and prove Theorem~\ref{Thm:UB}.

\section{Proof of Theorem \ref{Thm:LB} : Lower Bound on  $D^*(\epsilon,\delta)$}
\label{Pr2}

Without loss of generality, assume the requested message is $W_1$. We can bound   $D^*(\epsilon,\delta)$ as follows
\begin{align}
        D^*(\epsilon,\delta)&=\frac{\sum_{n=1}^N H(A_n^{(1)})}{L}\nonumber\\
        &\geq \frac{H(A_{[1:N]}^{(1)})}{L}\nonumber\\
        \label{eq:D*_boundA}
        &\geq \frac{H(A_{[1:N]}^{(1)}|Q_{[1:N]}^{(1)})}{L}.
\end{align}

To further bound $D^*(\epsilon,\delta)$, we first state the following two lemmas. Proofs of both lemmas can be found in the appendices. In Lemma \ref{Ent_diff}, we introduce the relation between the entropy of answers downloaded to retrieve different messages given a certain message. We emphasize that, under perfect privacy definitions, the entropy should be exactly the same regardless of the requested message,
\begin{equation}
H(A_n^{(k_1)}|W_{k_1},Q_n^{(k_1)})=H(A_n^{(k_2)}|W_{k_1},Q_n^{(k_2)}), \quad \forall k_1\neq k_2, \ n\in[1:N].  
\end{equation}
However, this does not hold under the $\epsilon-$user privacy definition.
\begin{lemma}
\label{Ent_diff}
Under the $\epsilon-$user privacy definition, for any $k_1$ and $k_2\in [1:K]$ and a non-negative constant $\epsilon$, we have the following inequality
\begin{equation}
\label{eq:DXY}
H(A_n^{(k_1)}|W_{k_1},Q_n^{(k_1)})\geq \frac{1}{e^{\epsilon}}H(A_n^{(k_2)}|W_{k_1},Q_n^{(k_2)}), \quad \forall k_1\neq k_2, \ n\in[1:N].  
\end{equation}
\end{lemma}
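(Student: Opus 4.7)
The plan is to reduce the claim to a pointwise inequality between the two query distributions, after first eliminating the apparent index-dependence of the conditional answer entropy. The key structural observation is that $A_n^{(k)}$ is a deterministic function of $(Q_n^{(k)}, W_1,\ldots, W_K, S)$ via \eqref{answerfunction}, and that $(W_{[1:K]}, S)$ is independent of the user's queries (the user generates $Q_n^{(k)}$ from her own private randomness together with the requested index $k$, with no access to the DBs' stored content or to the shared randomness $S$). I would use this to show that, conditional on $Q_n^{(k)}=\pi$ and $W_{k_1}=w$, the distribution of $A_n^{(k)}$ is a function of $(\pi, w)$ alone and does not depend on $k$. Writing $H_n(\pi, w)$ for the resulting common conditional entropy, this yields the identity
\[
H(A_n^{(k)}\mid W_{k_1}, Q_n^{(k)}) \;=\; \sum_{\pi,w}\Pr(W_{k_1}=w)\,\Pr(Q_n^{(k)}=\pi\mid W_{k_1}=w)\, H_n(\pi, w),
\]
in which the only $k$-dependence sits in the query marginal, and $H_n(\pi, w)\geq 0$ for every $(\pi,w)$.

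With this reduction in place, I would invoke the $\epsilon$-user privacy constraint \eqref{eq:DP} with $\mathbf{W}_\Omega = W_{k_1}$ and sum its likelihood-ratio bound over all answer realizations $\gamma$. This marginalizes out the answer and produces the pointwise inequality $\Pr(Q_n^{(k_1)}=\pi\mid W_{k_1}=w)\geq e^{-\epsilon}\Pr(Q_n^{(k_2)}=\pi\mid W_{k_1}=w)$ for every $\pi$ and $w$. Substituting this into the decomposition above, term by term, and using $H_n(\pi,w)\geq 0$, immediately yields the claim $H(A_n^{(k_1)}\mid W_{k_1}, Q_n^{(k_1)})\geq e^{-\epsilon}H(A_n^{(k_2)}\mid W_{k_1}, Q_n^{(k_2)})$.

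The main obstacle is the first step, the index-independence of the conditional answer distribution. This is not a privacy property but a structural fact about the protocol, so I would be careful to state and use the Markov relation $(W_{[1:K]\setminus k_1}, S)\perp Q_n^{(k)}\mid W_{k_1}$ explicitly before invoking the determinism of the DB's response function. Once that step is cleanly established, everything else is a one-line substitution of the privacy ratio into a non-negative weighted sum, so no delicate entropy manipulation is required and no Jensen- or log-sum-type inequality is needed.
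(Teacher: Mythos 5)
Your proposal is correct and follows essentially the same route as the paper: the paper likewise expands $H(A_n^{(k)}|W_{k_1},Q_n^{(k)})$ as a nonnegative combination of per-query-realization entropies, observes that those entropies do not depend on the requested index (its step $(a)$), and then substitutes the $\epsilon$-privacy ratio bound on the query probabilities (its step $(b)$). The only cosmetic difference is that you condition explicitly on the realization of $W_{k_1}$ and spell out the Markov/determinism argument behind index-independence, whereas the paper leaves $W_{k_1}$ averaged inside the conditional entropy and states index-independence without derivation; the substance is identical.
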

%\begin{proof}
%The proof can be found in Appendix \ref{app_lem5}.
%\end{proof}
Using Lemma \ref{Ent_diff}, we get the following recursion lemma.
\begin{lemma}
\label{lem:Ind}
For $k\in [2,K]$, we have  
\begin{equation}
\begin{split}
    H(A_{[1:N]}^{(k)}|W_{[1:k-1]},Q_{[1:N]}^{(k)})
    &\geq (1-o(L))L+\frac{1}{Ne^{\epsilon}}H(A_{[1:N]}^{(k+1)}|W_{[1:k]},Q_{[1:N]}^{(k+1)}).
    \end{split}
\end{equation}
\end{lemma}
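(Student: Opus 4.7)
\medskip
\noindent\textbf{Proof plan for Lemma~\ref{lem:Ind}.}

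The plan is to mimic the two-step Sun--Jafar style recursion: first use decodability to pull an $L$-term out of the LHS, and then use the $\epsilon$-user privacy (via Lemma~\ref{Ent_diff}) to relate answer entropies for index $k$ to answer entropies for index $k{+}1$, paying the multiplicative price $e^{\epsilon}$ and an additional factor $N$ from passing between the joint entropy of all $N$ answers and per-database entropies.

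First, I would rewrite the left-hand side by adjoining $W_k$ to the conditioning and then releasing it using the chain rule:
\begin{align*}
H\!\left(A_{[1:N]}^{(k)}\big|W_{[1:k-1]},Q_{[1:N]}^{(k)}\right)
&= H\!\left(A_{[1:N]}^{(k)},W_k\big|W_{[1:k-1]},Q_{[1:N]}^{(k)}\right)
 - H\!\left(W_k\big|A_{[1:N]}^{(k)},W_{[1:k-1]},Q_{[1:N]}^{(k)}\right)\\
&\geq H\!\left(W_k\big|W_{[1:k-1]},Q_{[1:N]}^{(k)}\right)
 + H\!\left(A_{[1:N]}^{(k)}\big|W_{[1:k]},Q_{[1:N]}^{(k)}\right) - o(L)L.
\end{align*}
The $o(L)L$ slack comes from the correctness condition~\eqref{eq:corre} applied after dropping the extra conditioning on $W_{[1:k-1]}$, and $H(W_k\mid W_{[1:k-1]},Q_{[1:N]}^{(k)})=L$ follows because messages are mutually independent by~\eqref{eq:indepp}--\eqref{eq:size} and the queries are generated by the user without access to the messages, hence $W_k$ is independent of $(W_{[1:k-1]},Q_{[1:N]}^{(k)})$.

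Second, to pass from index $k$ to index $k{+}1$, I would bound the joint answer entropy on both sides by per-database entropies. On the $k$-side, using ``joint $\geq$ max $\geq$ average'',
\[
H\!\left(A_{[1:N]}^{(k)}\big|W_{[1:k]},Q_{[1:N]}^{(k)}\right)
\;\geq\; \frac{1}{N}\sum_{n=1}^{N} H\!\left(A_n^{(k)}\big|W_{[1:k]},Q_n^{(k)}\right),
\]
and on the $(k{+}1)$-side, using sub-additivity,
\[
H\!\left(A_{[1:N]}^{(k+1)}\big|W_{[1:k]},Q_{[1:N]}^{(k+1)}\right)
\;\leq\; \sum_{n=1}^{N} H\!\left(A_n^{(k+1)}\big|W_{[1:k]},Q_n^{(k+1)}\right).
\]
Applying Lemma~\ref{Ent_diff} termwise (invoked with the subset of messages $W_{[1:k]}$ rather than the single message $W_{k_1}$ — see the remark below) gives, for every $n\in[1{:}N]$,
\[
H\!\left(A_n^{(k)}\big|W_{[1:k]},Q_n^{(k)}\right)\;\geq\; \frac{1}{e^{\epsilon}}\,H\!\left(A_n^{(k+1)}\big|W_{[1:k]},Q_n^{(k+1)}\right).
\]
Chaining the three inequalities yields
\[
H\!\left(A_{[1:N]}^{(k)}\big|W_{[1:k]},Q_{[1:N]}^{(k)}\right)\;\geq\; \frac{1}{Ne^{\epsilon}}\,H\!\left(A_{[1:N]}^{(k+1)}\big|W_{[1:k]},Q_{[1:N]}^{(k+1)}\right),
\]
which combined with the first step produces the claimed recursion.

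The only delicate point — and the step I would flag as the main obstacle — is justifying the version of Lemma~\ref{Ent_diff} in which the conditioning set is the entire prefix $W_{[1:k]}$ instead of the single message $W_{k_1}$. Fortunately this is immediate from the $\epsilon$-user privacy definition in \eqref{eq:DP}, which is stated uniformly over \emph{every} subset $\mathbf{W}_\Omega\subseteq\{W_1,\dots,W_K\}$: the proof of Lemma~\ref{Ent_diff} goes through verbatim with $\mathbf{W}_\Omega=W_{[1:k]}$ in place of the singleton $\{W_{k_1}\}$, because it only manipulates the per-realization probability ratio controlled by $e^{\epsilon}$ and then takes entropies. Once this generalized version of Lemma~\ref{Ent_diff} is in hand, the rest of the argument is bookkeeping and the recursion follows.
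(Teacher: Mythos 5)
Your proposal is correct and follows essentially the same route as the paper: peel off an $L$ via correctness and message independence, drop the joint entropy to a per-database average using conditional independence of $A_n^{(k)}$ from the other queries given $Q_n^{(k)}$, apply Lemma~\ref{Ent_diff} termwise, and reassemble with sub-additivity to recover the joint entropy on the $(k+1)$-side. Your explicit remark that Lemma~\ref{Ent_diff} must be invoked with $\mathbf{W}_\Omega = W_{[1:k]}$ rather than a single message is a point the paper leaves implicit (it writes only ``follows using similar steps as in the proof of Lemma~\ref{Ent_diff}''), and your justification via the uniformity of~\eqref{eq:DP} over all subsets $\mathbf{W}_\Omega$ is exactly the right reason.
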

%\begin{proof} The proof of the lemma is given in Appendix \ref{app_lem6}.
%\end{proof}
Using Lemmas \ref{Ent_diff}, and \ref{lem:Ind}, we bound $H(A_{[1:N]}^{(1)}|Q_{[1:N]}^{(1)})$ as follows
\begin{align}
H(A_{[1:N]}^{(1)}|Q_{[1:N]}^{(1)})&=
H(W_1,A_{[1:N]}^{(1)}| Q_{[1:N]}^{(1)})-H(W_1|A_{[1:N]}^{(1)}, Q_{[1:N]}^{(1)})\nonumber\\
&\overset{(a)}{=} H(W_1,A_{[1:N]}^{(1)}| Q_{[1:N]}^{(1)})-o(L)L\nonumber\\
&= H(W_1|Q_{[1:N]}^{(1)}) + H(A_{[1:N]}^{(1)}|W_1,Q_{[1:N]}^{(1)})-o(L)L\nonumber\\
&\overset{(b)}{=}(1-o(L))L+H(A_{[1:N]}^{(1)}|W_1, Q_{[1:N]}^{(1)})\nonumber\\ 
&\geq (1-o(L))L+H(A_{n}^{(1)}|W_1, Q_{[1:N]}^{(1)})\nonumber\\
&\overset{(c)}{=} (1-o(L))L+H(A_{n}^{(1)}|W_1, Q_{n}^{(1)})\nonumber\\
&\overset{(d)}{\geq} (1-o(L))L+\frac{1}{e^{\epsilon}}H(A_{n}^{(2)}|W_1, Q_{n}^{(2)})
\label{eq:converse},
\end{align}
where  $(a)$ is due to the correctness property in \eqref{eq:corre}, $(b)$ follows from the fact that the message content is independent of queries, $(c)$ comes from the fact that the answer $A_n^{(1)}$ is conditionally independent of the queries submitted to other DBs given the query $Q_n^{(1)}$, whereas $(d)$ comes from Lemma \ref{Ent_diff}.
%Based on the previous inequality, we have
%\begin{equation}
%\label{eq:33}
 %  H(A_{[1:N]}^{(1)})\geq L+e^{-\epsilon}H(A_n^{(2)}|W_1), \quad \forall n. 
%\end{equation}
The addition of the previous relation over all possible $n$'s gives us the following
\begin{equation}
   NH(A_{[1:N]}^{(1)}|Q_{[1:N]}^{(1)})\geq  N (1-o(L))L+\frac{1}{e^{\epsilon}} \sum_{n=1}^{N} H(A_n^{(2)}|W_1,Q_{n}^{(2)}). 
\end{equation}
%Before we bound the download cost, we propose the following lemma.
Dividing by $N$,
\begin{align}
   &H(A_{[1:N]}^{(1)}|Q_{[1:N]}^{(1)})\nonumber\\&\geq (1-o(L))L+ \frac{1}{Ne^{\epsilon}} \sum_{n=1}^{N} H(A_n^{(2)}|W_1,Q_{n}^{(2)})\nonumber\\ 
   &\geq (1-o(L))L+ \frac{1}{Ne^{\epsilon}} \sum_{n=1}^{N} H(A_n^{(2)}|W_1,Q_{[1:N]}^{(2)})\nonumber\\ 
   &\geq (1-o(L))L+ \frac{1}{Ne^{\epsilon}}  H(A_{[1:N]}^{(2)}|W_1,Q_{[1:N]}^{(2)})\nonumber\\ 
   &\overset{(a)}{=} (1-o(L))L+\frac{1}{Ne^{\epsilon}}  H(A_{[1:N]}^{(2)}|W_1,Q_{[1:N]}^{(2)})+ \frac{1}{Ne^{\epsilon}}  H(W_2|A_{[1:N]}^{(2)},W_1,Q_{[1:N]}^{(2)})-\frac{o(L)L}{Ne^{\epsilon}}\nonumber\\
   &{=} (1-o(L))L+ \frac{1}{Ne^{\epsilon}}  H(W_2,A_{[1:N]}^{(2)}|W_1,Q_{[1:N]}^{(2)})-\frac{o(L)L}{Ne^{\epsilon}}\nonumber\\
   &= (1-o(L))L+\frac{1}{Ne^{\epsilon}}  H(W_2|W_1,Q_{[1:N]}^{(2)})+\frac{1}{Ne^{\epsilon}}H(A_{[1:N]}^{(2)}|W_1,W_2,Q_{[1:N]}^{(2)})-\frac{o(L)L}{Ne^{\epsilon}}\nonumber\\
   &\overset{(b)}{=} (1-o(L))L+\frac{1}{Ne^{\epsilon}}  L+\frac{1}{Ne^{\epsilon}}H(A_{[1:N]}^{(2)}|W_1,W_2,Q_{[1:N]}^{(2)})-\frac{o(L)L}{Ne^{\epsilon}}\nonumber\\
   \label{eq:34}
   &= (1-o(L))L+\frac{1}{Ne^{\epsilon}}(1-o(L))  L+\frac{1}{Ne^{\epsilon}}H(A_{[1:N]}^{(2)}|W_1,W_2,Q_{[1:N]}^{(2)}),
\end{align}
where $(a)$ comes from the correctness property in \eqref{eq:corre}, and $(b)$ is due to the message independence. Following the same iterative process used in \cite{sun2017capacity}, and invoking the recursion property in Lemma \ref{lem:Ind}, we get
%Next, we apply the recursion relation in Lemma \ref{lem:Ind}  to get a bound on $H(A_{[1:N]}^{(1)})$.
\begin{align}
\label{eq:50}
    &H(A_{[1:N]}^{(1)}|Q_{[1:N]}^{(1)}) \nonumber\\&\geq (1+ \frac{1}{Ne^{\epsilon}}+\dots+\frac{1}{(Ne^{\epsilon})^{K-1}})(1-o(L))L+\frac{1}{(Ne^{\epsilon})^{K-1}} \ H(A_{[1:N]}^{(K)}|W_{[1:K]},Q_{[1:N]}^{(K)})\nonumber\\
    &\overset{(a)}{=}(1-o(L))L+\frac{(Ne^{\epsilon})^{K-1}-1}{(Ne^{\epsilon})^{K-1}(Ne^{\epsilon}-1)} \ (1-o(L))L+\frac{1}{(Ne^{\epsilon})^{K-1}} \ H(A_{[1:N]}^{(K)}|W_{[1:K]},Q_{[1:N]}^{(K)}),
\end{align}
where $(a)$ follows from the rule of finite sum of geometric series. Under the L-PIR model presented in \cite{samy2019capacity}, the term $H(A_{[1:N]}^{(K)}|W_{[1:K]},Q_{[1:N]}^{(K)})$ is replaced by zero as answers are functions of only the $K$ messages. However, this does not hold in the presence of common randomness. From the $\delta-$DB privacy definition in \eqref{eq:MPri}, we get the following:
    \begin{align}
        H(A_{[1:N]}^{(K)}|W_{[1:K]},Q_{[1:N]}^{(K)})&=H(A_{[1:N]}^{(K)}|W_{[1:K-1]},Q_{[1:N]}^{(K)})-I(W_K;A_{[1:N]}^{(K)}|W_{[1:K-1]},Q_{[1:N]}^{(K)})\nonumber\\
        &\overset{(a)}{=}H(A_{[1:N]}^{(K)}|W_{[1:K-1]},Q_{[1:N]}^{(K)})-H(W_K)\nonumber\\
        &=H(A_{[1:N]}^{(K)}|Q_{[1:N]}^{(K)})-I(A_{[1:N]}^{(K)};W_{[1:K-1]}|Q_{[1:N]}^{(K)})-H(W_K)\nonumber\\
        &= H(A_{[1:N]}^{(K)}|Q_{[1:N]}^{(K)})-I(A_{[1:N]}^{(K)},Q_{[1:N]}^{(K)};W_{[1:K-1]})-H(W_K)\nonumber\\
        &\geq H(A_{[1:N]}^{(K)}|Q_{[1:N]}^{(K)})-\delta L-L,
    \end{align}
    where $(a)$ follows since all messages are independent and $W_K$ is a deterministic function of $A_{[1:N]}^{(K)}$.
By symmetry, we can assume that $
    H(A_{[1:N]}^{(1)}|Q_{[1:N]}^{(1)})=H(A_{[1:N]}^{(K)}|Q_{[1:N]}^{(K)}).
$
Then,
\begin{equation}
    \begin{split}
        H(A_{[1:N]}^{(K)}|W_{[1:K]},Q_{[1:N]}^{(K)})
        &\geq H(A_{[1:N]}^{(1)}|Q_{[1:N]}^{(1)})-\delta L-L.
    \end{split}
\end{equation}
Since $H(A_{[1:N]}^{(K)}|W_{[1:K]},Q_{[1:N]}^{(K)}) \geq 0 $, we obtain
\begin{equation}
\label{eq:delta-lb}
    \begin{split}
        H(A_{[1:N]}^{(K)}|W_{[1:K]},Q_{[1:N]}^{(K)})
        &\geq \max\left(0,\ H(A_{[1:N]}^{(1)}|Q_{[1:N]}^{(1)})-\delta L-L\right).
    \end{split}
\end{equation}
Next, we can express \eqref{eq:50} using \eqref{eq:delta-lb} as
 \begin{align}
    &H(A_{[1:N]}^{(1)}|Q_{[1:N]}^{(1)}) \geq \nonumber\\ \label{eq:53}
    &  \ (1+\frac{(Ne^{\epsilon})^{K-1}-1}{(Ne^{\epsilon})^{K-1}(Ne^{\epsilon}-1)})(1-o(L))L+\frac{1}{(Ne^{\epsilon})^{K-1}}\max\left(0, H(A_{[1:N]}^{(1)}|Q_{[1:N]}^{(1)})-\delta L-L\right).
    \end{align}
Dividing by $L$ and allowing it to approach $\infty$, we get
 \begin{align}
   & \frac{H(A_{[1:N]}^{(1)}|Q_{[1:N]}^{(1)})}{L} \geq
 1+\frac{(Ne^{\epsilon})^{K-1}-1}{(Ne^{\epsilon})^{K-1}(Ne^{\epsilon}-1)}+\frac{1}{(Ne^{\epsilon})^{K-1}}\max\left(0, \frac{H(A_{[1:N]}^{(1)}|Q_{[1:N]}^{(1)})}{L}-\delta -1\right).
   \label{eq:91''}
%   &=\max\left(1+\frac{(Ne^{\epsilon})^{K-1}-1}{(Ne^{\epsilon})^{K-1}(Ne^{\epsilon}-1)}, 1+\frac{(Ne^{\epsilon})^{K-1}-1}{(Ne^{\epsilon})^{K-1}(Ne^{\epsilon}-1)}+\frac{1}{(Ne^{\epsilon})^{K-1}} (\frac{H(A_{[1:N]}^{(1)}|Q_{[1:N]}^{(1)})}{L}-\delta -1)\right).
    \end{align}
   Following \eqref{eq:91''}, the following two inequalities are true
         \begin{align}
    \frac{H(A_{[1:N]}^{(1)}|Q_{[1:N]}^{(1)})}{L}&\geq 1+\frac{(Ne^{\epsilon})^{K-1}-1}{(Ne^{\epsilon})^{K-1}(Ne^{\epsilon}-1)}\quad\quad\quad\quad\quad\quad\quad\quad\text{and,} \label{eq:92''}\\
   \label{eq:93''}
   \frac{H(A_{[1:N]}^{(1)}|Q_{[1:N]}^{(1)})}{L}&\geq 1+\frac{(Ne^{\epsilon})^{K-1}-1}{(Ne^{\epsilon})^{K-1}(Ne^{\epsilon}-1)}+\frac{1}{(Ne^{\epsilon})^{K-1}} (\frac{H(A_{[1:N]}^{(1)}|Q_{[1:N]}^{(1)})}{L}-\delta -1).
    \end{align}
The inequality in  \eqref{eq:93''} can be rearranged as

\begin{align}
    \frac{(Ne^{\epsilon})^{K-1}-1}{(Ne^{\epsilon})^{K-1}} \ \frac{H(A_{[1:N]}^{(1)}|Q_{[1:N]}^{(1)})}{L} &\geq\ \ 1+\frac{(Ne^{\epsilon})^{K-1}-1}{(Ne^{\epsilon})^{K-1}(Ne^{\epsilon}-1)} \ -\frac{1}{(Ne^{\epsilon})^{K-1}} \ ( \delta +1)\nonumber\\
    &=\frac{(Ne^{\epsilon})^{K-1}-1}{(Ne^{\epsilon})^{K-1}} +\frac{(Ne^{\epsilon})^{K-1}-1}{(Ne^{\epsilon})^{K-1}(Ne^{\epsilon}-1)} \ -\frac{\delta}{(Ne^{\epsilon})^{K-1}},\nonumber\\
    \frac{H(A_{[1:N]}^{(1)}|Q_{[1:N]}^{(1)})}{L} &\geq \ 1+\frac{1}{Ne^{\epsilon}-1} -\frac{\delta}{(Ne^{\epsilon})^{K-1}-1}.\label{eq:88'}
    \end{align}
%Rearranging the inequality, we get
 %\begin{equation}
 %\label{eq:88'}
%\begin{split}
 %  \frac{H(A_{[1:N]}^{(1)}|Q_{[1:N]}^{(1)})}{L} \geq \ 1+\frac{1}{Ne^{\epsilon}-1} -\frac{\delta}{(Ne^{\epsilon})^{K-1}-1}.
  %  \end{split}
%\end{equation}
From \eqref{eq:92''} and \eqref{eq:88'}, we get

 \begin{equation}
 \label{eq:96'}
\begin{split}
   \frac{H(A_{[1:N]}^{(1)}|Q_{[1:N]}^{(1)})}{L} \geq \max\left(1+\frac{(Ne^{\epsilon})^{K-1}-1}{(Ne^{\epsilon})^{K-1}(Ne^{\epsilon}-1)}, 1+\frac{1}{Ne^{\epsilon}-1} -\frac{\delta}{(Ne^{\epsilon})^{K-1}-1}\right).
    \end{split}
\end{equation}
Substituting by \eqref{eq:96'} in \eqref{eq:D*_boundA}, we can lower bound $D^*(\epsilon,\delta)$ as
\begin{align}
\label{eq:89'}
     D^*(\epsilon,\delta)&\geq \frac{H(A_{[1:N]}^{(1)}|Q_{[1:N]}^{(1)})}{L} \geq D^{\text{LB}}(\epsilon,\delta)\nonumber\\
        &= \max\left(1+\frac{(Ne^{\epsilon})^{K-1}-1}{(Ne^{\epsilon})^{K-1}(Ne^{\epsilon}-1)},\  1+\frac{1}{Ne^{\epsilon}-1} -\frac{\delta}{(Ne^{\epsilon})^{K-1}-1}\right).
    \end{align}

For a fixed $\epsilon$, $D^{\text{LB}}(\epsilon,\delta)$ is monotonically decreasing in $\delta$ until we reach $\delta=\delta_2(\epsilon)=\frac{(Ne^{\epsilon})^{K-1}-1}{(Ne^{\epsilon}-1)(Ne^{\epsilon})^{K-1}}$ at which
\begin{equation}
1+\frac{(Ne^{\epsilon})^{K-1}-1}{(Ne^{\epsilon})^{K-1}(Ne^{\epsilon}-1)}= 1+\frac{1}{Ne^{\epsilon}-1} -\frac{\delta}{(Ne^{\epsilon})^{K-1}-1}.    
\end{equation}
After this point, $D^{\text{LB}}(\epsilon,\delta)$ is fixed at the value $1+\frac{(Ne^{\epsilon})^{K-1}-1}{(Ne^{\epsilon})^{K-1}(Ne^{\epsilon}-1)}$. Then, we can alternatively represent $D^{\text{LB}}(\epsilon,\delta)$ as 
\begin{equation}
    D^{*}(\epsilon,\delta)\geq D^{\text{LB}}(\epsilon,\delta)= \begin{cases} 1+\frac{1}{Ne^{\epsilon}-1} -\frac{\delta}{(Ne^{\epsilon})^{K-1}-1}= d_2(\epsilon,\delta),&  0\leq \delta < \delta_2(\epsilon),\\
    1+\frac{(Ne^{\epsilon})^{K-1}-1}{(Ne^{\epsilon})^{K-1}(Ne^{\epsilon}-1)}= d_2(\epsilon,\delta_2(\epsilon)), & \delta\geq \delta_2(\epsilon).
    \end{cases}
\end{equation}
This proves the lower bound on $D^{*}(\epsilon,\delta)$  in Theorem~\ref{Thm:LB}.

\subsection{Required amount of shared randomness}
In this section, we prove the lower bound in Theorem~\ref{Thm:LB} on the required amount of shared randomness to achieve the minimum download cost derived in \eqref{eq:89'}. From the $\delta-$DB privacy in \eqref{eq:MPri}, given a requested message $W_k$, we get
\begin{align}
\label{eq:90'}
    \delta L&\geq I(A_{[1:N]}^{(k)},Q_{[1:N]}^{(k)};W_{[1:K]\setminus k})\nonumber\\ 
    &= I(A_{[1:N]}^{(k)};W_{[1:K]\setminus k}|Q_{[1:N]}^{(k)})\nonumber\\
    &= H(A_{[1:N]}^{(k)}|Q_{[1:N]}^{(k)})- H(A_{[1:N]}^{(k)}|W_{[1:K]\setminus k},Q_{[1:N]}^{(k)})\nonumber\\
    &= H(A_{[1:N]}^{(k)}|Q_{[1:N]}^{(k)})- H(W_k,A_{[1:N]}^{(k)}|W_{[1:K]\setminus k},Q_{[1:N]}^{(k)})+H(W_k|A_{[1:N]}^{(k)},W_{[1:K]\setminus k},Q_{[1:N]}^{(k)})\nonumber\\
    &\overset{(a)}{=}H(A_{[1:N]}^{(k)}|Q_{[1:N]}^{(k)})- H(W_k,A_{[1:N]}^{(k)}|W_{[1:K]\setminus k},Q_{[1:N]}^{(k)})+o(L)L\nonumber\\
    &=H(A_{[1:N]}^{(k)}|Q_{[1:N]}^{(k)})- H(W_k|W_{[1:K]\setminus k},Q_{[1:N]}^{(k)})-H(A_{[1:N]}^{(k)}|W_{[1:K]},Q_{[1:N]}^{(k)})+o(L)L\nonumber\\
    &=H(A_{[1:N]}^{(k)}|Q_{[1:N]}^{(k)})- L-H(A_{[1:N]}^{(k)}|W_{[1:K]},Q_{[1:N]}^{(k)})+o(L)L\nonumber\\
    &=H(A_{[1:N]}^{(k)}|Q_{[1:N]}^{(k)})- (1-o(L))L-H(A_{[1:N]}^{(k)}|W_{[1:K]},S,Q_{[1:N]}^{(k)})-I(S;A_{[1:N]}^{(k)}|W_{[1:K]},Q_{[1:N]}^{(k)})\nonumber\\
    &\overset{(b)}{=}H(A_{[1:N]}^{(k)}|Q_{[1:N]}^{(k)})- (1-o(L))L-I(S;A_{[1:N]}^{(k)}|W_{[1:K]},Q_{[1:N]}^{(k)})\nonumber\\
    &=H(A_{[1:N]}^{(k)}|Q_{[1:N]}^{(k)})- (1-o(L))L-H(S|W_{[1:K]},Q_{[1:N]}^{(k)})+H(S|A_{[1:N]}^{(k)},W_{[1:K]},Q_{[1:N]}^{(k)})\nonumber\\
     &\overset{(c)}{=}H(A_{[1:N]}^{(k)}|Q_{[1:N]}^{(k)})- (1-o(L))L-H(S)+H(S|A_{[1:N]}^{(k)},W_{[1:K]},Q_{[1:N]}^{(k)})\nonumber\\
    &\geq H(A_{[1:N]}^{(k)}|Q_{[1:N]}^{(k)})- (1-o(L))L-H(S),
\end{align}
where $(a)$ follows from the correctness property, $(b)$ comes from the fact that answers are function of the $K$ messages and the common randomness  $S$, and $(c)$ is because the common randomness $S$ is independent of the $K$ messages. Dividing by $L$, allowing it to approach $\infty$, and substituting by \eqref{eq:88'} in \eqref{eq:90'}, we get
\begin{align}
\label{eq:91'}
    \delta  &\geq  1+\frac{1}{Ne^{\epsilon}-1} -\frac{\delta}{(Ne^{\epsilon})^{K-1}-1}- 1-\frac{H(S)}{L}\nonumber\\
    &=\frac{1}{Ne^{\epsilon}-1} \ -\frac{\delta}{(Ne^{\epsilon})^{K-1}-1}-\frac{H(S)}{L}.
\end{align}
Rearranging the inequality, we get the following bound on $\alpha(\epsilon,\delta)$,
\begin{align}
\label{eq:92'}
    \frac{H(S)}{L}=\alpha(\epsilon,\delta) \geq \frac{1}{Ne^{\epsilon}-1} \ -\frac{\delta}{(Ne^{\epsilon})^{K-1}-1} -\delta,
\end{align}
which is also a valid bound on the optimal common randomness $\alpha^*(\epsilon,\delta)$.
Then, following that $\alpha^*(\epsilon,\delta)\geq 0$, we obtain the following bound,
\begin{align}
    \alpha^*(\epsilon,\delta) \geq \alpha_2(\epsilon,\delta) &=\max\left(0,\  \frac{1}{Ne^{\epsilon}-1} \ -\frac{(Ne^{\epsilon})^{K-1}}{(Ne^{\epsilon})^{K-1}-1} \delta\right)\nonumber\\
           &=\begin{cases}
     \frac{1}{Ne^{\epsilon}-1} \ -\frac{(Ne^{\epsilon})^{K-1}}{(Ne^{\epsilon})^{K-1}-1} \ \delta, & 0\leq \delta < \delta_{2}(\epsilon),\\
      0, &  \delta > \delta_{2}(\epsilon),
      \end{cases}\label{eq:93'}
\end{align}
which completes the proof of the lower bound on the optimal common randomness size in Theorem~\ref{Thm:LB}.
%\textcolor{red}{Note (Islam): Here, the amount of common randomness is different than that used in the scheme. However, if we plug the upper bound instead of $H(A_{[1:N]}^{(k)}|Q_{[1:N]}^{(k)})$ in \eqref{eq:90'}, we get the same amount. Should we state that?}

\vspace{-10pt}
\section{Conclusions}
\vspace{-5pt}
\label{Conc}
We studied the AL-PIR problem that relaxes the perfect privacy requirements for both user and DB privacy. The allowed leakage is asymmetric allowing for different privacy leakage in each direction. We showed that allowing privacy leakage provides an opportunity to improve the optimal download cost. We introduced an AL-PIR scheme that gives an upper bound on the optimal download cost for arbitrary leakage budgets. We investigated possible tradeoffs that stem by adjusting the level of privacy at both user and DB sides. We further obtained a lower bound on the download cost and showed that the multiplicative gap between the upper and lower bounds is bounded by $\frac{N-e^{-\epsilon}}{N-1}$, i.e., our AL-PIR scheme is optimal for perfect user privacy, $\epsilon =0$, and is optimal within a gap of at most $\frac{N}{N-1}$ for any $\epsilon$.

%\textcolor{red}{Go over all references and fix them. Present all conference papers as in Proceeddings of the IEEE International Symposium on... Or the in Proceedings of the GLOBECOM Conference. Remove the year from within the conference name, and the trailing IEEE after the conference.}

\bibliographystyle
{IEEEtran}
{\small  \bibliography{IEEEabrv,paper_arxiv}}

\begin{appendices}

\section{Proof of Corollary \ref{Thm:Gap}}
\label{app_gap}

We first notice that for any $\epsilon>0$, we have $\delta_1(\epsilon)\geq \delta_2(\epsilon)$. This follows as we can express  $d_1(\epsilon,\delta_1(\epsilon))=1+\delta_1(\epsilon)$ and $d_2(\epsilon,\delta_2(\epsilon))=1+\delta_2(\epsilon)$. Then, from Theorems \ref{Thm:UB} and \ref{Thm:LB} and for any $\delta\geq\max\left(\delta_1(\epsilon),\delta_2(\epsilon)\right)$, $D^*(\epsilon,\delta)$ can be bounded as follows
 \begin{equation}
 \label{eq:delta1>2}
   1+\delta_2(\epsilon)=d_2(\epsilon,\delta_2(\epsilon))=D^{LB}(\epsilon,\delta)\leq D^*(\epsilon,\delta)\leq D^{UB}(\epsilon,\delta)=  d_1(\epsilon,\delta_1(\epsilon))=1+\delta_1(\epsilon),
 \end{equation}
which proves that $\delta_1(\epsilon)$ must be greater than or equal $\delta_2(\epsilon)$ for any value of $\epsilon\geq 0$.

Following that, we can write the multiplicative gap ratio between the upper and lower bounds on $D^*(\epsilon,\delta)$ given in \eqref{eq:UB} and \eqref{eq:LB} as follows:
\begin{align}
\label{eq:proof-gap}
\frac{D^{\text{UB}}(\epsilon,\delta)}{D^{\text{LB}}(\epsilon,\delta)} &%= \frac{1+\frac{1}{N-1}-\frac{\delta e^{\epsilon}}{N^{K-1}-1}}{1+\frac{1}{Ne^{\epsilon}-1}-\frac{\delta }{(Ne^{\epsilon})^{K-1}-1}}
%\overset{(a)}{=}
=\begin{cases} \frac{\gamma_1(\epsilon)-\gamma_2(\epsilon)\delta}{\gamma_3(\epsilon)-\gamma_4(\epsilon)\delta}, \quad\quad \delta<\delta_2(\epsilon), \\ \\
\frac{\gamma_1(\epsilon)-\gamma_2(\epsilon)\delta}{\gamma_3(\epsilon)-\gamma_4(\epsilon)\delta_2(\epsilon)}, \quad \delta_2(\epsilon)\leq\delta<\delta_1(\epsilon), \\ \\
\frac{\gamma_1(\epsilon)-\gamma_2(\epsilon)\delta_1(\epsilon)}{\gamma_3(\epsilon)-\gamma_4(\epsilon)\delta_2(\epsilon)},\quad \delta\geq\delta_1(\epsilon),\\
\end{cases}
%\nonumber\\
%&\overset{(b)}{\leq}\frac{c_1}{c_3} =  \frac{1+\frac{1}{N-1}}{1+\frac{1}{Ne^{\epsilon}-1}}
%=\frac{N-e^{-\epsilon}}{N-1},
\end{align}
where  we have $\gamma_1(\epsilon) = 1+\frac{1}{N-1}$, $\gamma_2(\epsilon) = \frac{e^{\epsilon}}{N^{K-1}-1}$, $\gamma_3(\epsilon) = 1+\frac{1}{Ne^{\epsilon}-1}$, and $\gamma_4(\epsilon)= \frac{1 }{(Ne^{\epsilon})^{K-1}-1}$. 
Then, we can upper bound \eqref{eq:proof-gap} as follows,
\begin{align}
\label{eq:proof-gap-ub1}
\frac{D^{\text{UB}}(\epsilon,\delta)}{D^{\text{LB}}(\epsilon,\delta)} &
\leq \begin{cases} \frac{\gamma_1(\epsilon)-\gamma_2(\epsilon)\delta}{\gamma_3(\epsilon)-\gamma_4(\epsilon)\delta}, \quad\quad \delta<\delta_2(\epsilon), \\ \\
\frac{\gamma_1(\epsilon)-\gamma_2(\epsilon)\delta_2(\epsilon)}{\gamma_3(\epsilon)-\gamma_4(\epsilon)\delta_2(\epsilon)}, \quad \delta\geq \delta_2(\epsilon).
\end{cases}
\end{align}

For any $\delta$, we have the bound $\frac{\gamma_1(\epsilon)-\gamma_2(\epsilon)\delta}{\gamma_3(\epsilon)-\gamma_4(\epsilon)\delta}\leq \frac{\gamma_1(\epsilon)}{\gamma_3(\epsilon)}$ valid  when $\frac{\gamma_1(\epsilon)\gamma_4(\epsilon)}{\gamma_2(\epsilon)\gamma_3(\epsilon)}\leq 1$.
We can prove that $\frac{\gamma_1(\epsilon)\gamma_4(\epsilon)}{\gamma_2(\epsilon)\gamma_3(\epsilon)}\leq 1$ in the following:
\begin{align}
\label{eq:c1c4/c2c3}
\frac{\gamma_1(\epsilon)\gamma_4(\epsilon)}{\gamma_2(\epsilon)\gamma_3(\epsilon)} = \frac{(1+\frac{1}{N-1})\frac{1}{(Ne^{\epsilon})^{K-1}-1}}{(1+\frac{1}{Ne^{\epsilon}-1})\frac{e^{\epsilon}}{N^{K-1}-1}} 
= \frac{\frac{N^{K-1}-1}{N-1}}{\frac{(Ne^{\epsilon})^{K-1}-1}{Ne^{\epsilon}-1}}.{e^{-2\epsilon}} = \frac{\sum_{k=0}^{K-2}N^k}{\sum_{k=0}^{K-2}(Ne^{\epsilon})^k}.{e^{-2\epsilon}}\leq {e^{-2\epsilon}} \leq 1,
\end{align}
for any $\epsilon\geq 0$ with equality when $\epsilon = 0$.
Eventually, we can bound the multiplicative gap ratio for any value of $\delta$ as 
\begin{align}
\frac{D^{\text{UB}}(\epsilon,\delta)}{D^{\text{LB}}(\epsilon,\delta)} \leq\frac{\gamma_1(\epsilon)}{\gamma_3(\epsilon)} =  \frac{1+\frac{1}{N-1}}{1+\frac{1}{Ne^{\epsilon}-1}}
=\frac{N-e^{-\epsilon}}{N-1}.
\end{align}

\section{Proof of Lemma \ref{Ent_diff}}
\label{app_lem5}
Assume that $A_n^{(k_1)}$ the answer of any DB$_n$, given any requested message $k_1\in[1:K]$, can take one of  $T$ different structures. Each of them is requested by a certain query, i.e., $Q_n^{(k_1)}$ also takes $T$ different forms. Let $\pi_t$ and $\gamma(\pi_t)$ be the $t^{th}$ form that $Q_n^{(k_1)}$ and $A_n^{(k_1)}$ can take, respectively.   Then, $H(A_n^{(k_1)}|W_{k_1},Q_n^{(k_1)})$ can be written as 
\begin{align}
    H(A_n^{(k_1)}|W_{k_1},Q_n^{(k_1)})&=\sum_{t=1}^{T}\Pr(Q_n^{(k_1)}=\pi_t)H(A_n^{(k_1)}|W_{k_1},Q_n^{(k_1)}=\pi_t)\nonumber\\
    &=\sum_{t=1}^{T}\Pr(Q_n^{(k_1)}=\pi_t)H(A_n^{(k_1)}=\gamma(\pi_t)|W_{k_1})\nonumber\\
        &\overset{(a)}{=}\sum_{t=1}^{T}\Pr(Q_n^{(k_1)}=\pi_t)H(A_n^{(k_2)}=\gamma(\pi_t)|W_{k_1})\nonumber\\
        &\overset{(b)}{\geq}\sum_{t=1}^{T}e^{-\epsilon}\Pr(Q_n^{(k_2)}=\pi_t)H(A_n^{(k_2)}=\gamma(\pi_t)|W_{k_1})\nonumber\\
        &=\frac{1}{e^{\epsilon}}\sum_{t=1}^{T}\Pr(Q_n^{(k_2)}=\pi_t)H(A_n^{(k_2)}=\gamma(\pi_t)|W_{k_1})\nonumber\\
         &=\frac{1}{e^{\epsilon}}\sum_{t=1}^{T}\Pr(Q_n^{(k_2)}=\pi_t)H(A_n^{(k_2)}=\gamma(\pi_t)|W_{k_1},Q_n^{(k_2)}=\pi_t)\nonumber\\
        &=\frac{1}{e^{\epsilon}}H(A_n^{(k_2)}|W_{k_1},Q_n^{(k_2)}),
    \end{align}
where $(a)$ follows from the fact that the entropy of certain answer structure $\gamma(\pi_t)$ is independent of the requested message, it only depends on the query form $\pi_t$. Whereas, $(b)$ comes from the definition in \eqref{eq:DP} and the corresponding interpretation in \eqref{eq:DP_p}.

\vspace{-10pt}
\section{Proof of Lemma \ref{lem:Ind}}
\label{app_lem6} \vspace{-10pt}
We can bound $H(A_{[1:N]}^{(k)}|W_{[1:k-1]},Q_{[1:N]}^{(k)})$ as follows:
\begin{align}
    &H(A_{[1:N]}^{(k)}|W_{[1:k-1]},Q_{[1:N]}^{(k)})\nonumber
    \\&\overset{(a)}{=}H(A_{[1:N]}^{(k)}|W_{[1:k-1]},Q_{[1:N]}^{(k)})+H(W_k|A_{[1:N]}^{(k)},W_{[1:k-1]},Q_{[1:N]}^{(k)})-o(L)L \nonumber\\
    &=H(W_k,A_{[1:N]}^{(k)}|W_{[1:k-1]},Q_{[1:N]}^{(k)})-o(L)L  \nonumber\\
    &=H(W_k|W_{[1:k-1]},Q_{[1:N]}^{(k)})+H(A_{[1:N]}^{(k)}|W_{[1:k]},Q_{[1:N]}^{(k)})-o(L)L \nonumber\\
    &  =L+H(A_{[1:N]}^{(k)}|W_{[1:k]},Q_{[1:N]}^{(k)})-o(L)L \nonumber\\
    &  =(1-o(L))L+H(A_{[1:N]}^{(k)}|W_{[1:k]},Q_{[1:N]}^{(k)}) \nonumber\\
    &\geq (1-o(L))L+H(A_n^{(k)}|W_{[1:k]},Q_{[1:N]}^{(k)})\nonumber\\
    \label{eq:97}
    &\overset{(b)}{=} (1-o(L))L+H(A_n^{(k)}|W_{[1:k]},Q_{n}^{(k)}), \quad \forall n\in[1:N],
    \end{align}
where $(a)$ is due to the correctness property in \eqref{eq:corre}, and (b) comes from the fact that the answer $A_n^{(k)}$ is conditionally independent of the queries submitted to other DBs given the query $Q_n^{(k)}$. By adding the relation in \eqref{eq:97} over all possible $n$'s and dividing by $N$, we get the following:
\begin{align}
    H(A_{[1:N]}^{(k)}|W_{[1:k-1]},Q_{[1:N]}^{(k)})&
    \geq (1-o(L))L+\frac{1}{N}\sum _{n=1}^{N} H(A_n^{(k)}|W_{[1:k]},Q_{n}^{(k)})\nonumber\\
        &\overset{(a)}{\geq} \  (1-o(L))L+\frac{1}{Ne^{\epsilon}}\sum _{n=1}^{N} H(A_n^{(k+1)}|W_{[1:k]},Q_{n}^{(k+1)})\nonumber\\
         &= \  (1-o(L))L+\frac{1}{Ne^{\epsilon}}\sum _{n=1}^{N} H(A_n^{(k+1)}|W_{[1:k]},Q_{[1:N]}^{(k+1)})\nonumber\\
        &\ {\geq} \  (1-o(L))L+\frac{1}{Ne^{\epsilon}} H(A_{[1:N]}^{(k+1)}|W_{[1:k]},Q_{[1:N]}^{(k+1)}),
    \end{align}
    where (a) follows using similar steps as in the proof of Lemma \ref{Ent_diff}.
    
    \vspace{-10pt}
    \section{Proof of Proposition \ref{prob:N=1}}
\label{pr3}
\vspace{-10pt}
Here, we prove the bound in proposition \ref{prob:N=1} for $N=1$. We show that the relaxed privacy conditions have no benefits when there is only one database even if we ignore the DB privacy leakage constraint ($\delta= K-1$). Assuming that the requested message is $W_1$,  we lower bound $D_{\epsilon,\delta}$ as follows:
\begin{align}
       D_{\epsilon,\delta}= H(A_1^{(1)})&\geq H(A_1^{(1)}|Q_1^{(1)})\nonumber\\
    &= H(W_1,A_1^{(1)}|Q_1^{(1)}) -H(W_1|A_1^{(1)},Q_1^{(1)})\nonumber\\
    &\overset{(a)}{=}H(W_1|Q_1^{(1)})+H(A_1^{(1)}|W_1,Q_1^{(1)})-o(L)L\nonumber\\
    \label{eq:converse,N=1} &=(1-o(L))L+H(A_1^{(1)}|W_1,Q_1^{(1)}),
\end{align}
where $(a)$ follows the correctness property in \eqref{eq:corre}.
%Next, we show that 
%\begin{equation}
 %  H(A_1^{(1)}|W_1)=H(A_1^{(j)}|W_1), \quad \forall i,j\in [2:K], 
%\end{equation}
%even under the relaxed privacy conditions. 
Let there be $T$ different structures, $\pi_1,\dots,\pi_T$, the query sent to the databases can take. For each structure $\pi_t$, $t\in[1:T]$, the answer is on the form of $\gamma(\pi_t)$ then we get, for $j\in[2:K]$,
\begin{align}
  &H(A_1^{(1)}|W_1,Q_1^{(1)})-H(A_1^{(j)}|W_1,Q_1^{(j)})\nonumber\\&=\Big(H(A_1^{(1)},W_1|Q_1^{(1)})-H(W_1|Q_1^{(1)})\Big)-\Big(H(A_1^{(j)},W_1|Q_1^{(j)})-H(W_1|Q_1^{(j)})\Big)\nonumber\\
  &\Big(H(A_1^{(1)},W_1|Q_1^{(1)})-H(W_1)\Big)-\Big(H(A_1^{(j)},W_1|Q_1^{(j)})-H(W_1)\Big)\nonumber\\
  &=H(A_1^{(1)},W_1|Q_1^{(1)})-H(A_1^{(j)},W_1|Q_1^{(j)})\nonumber\\
  &=H(A_1^{(1)}|Q_1^{(1)})+H(W_1|A_1^{(1)},Q_1^{(1)})-H(A_1^{(j)}|Q_1^{(j)})-H(W_1|A_1^{(j)},Q_1^{(j)})\nonumber\\
  &\overset{(a)}{=}H(A_1^{(1)}|Q_1^{(1)})+o(L)-H(A_1^{(j)}|Q_1^{(j)})-\sum_{t=1}^{T}\Pr(Q_1^{(j)}=\pi_t)H(W_1|A_1^{(j)},Q_1^{(j)}=\pi_t)\nonumber\\
   \label{eq:47}
   &=H(A_1^{(1)}|Q_1^{(1)})+o(L)-H(A_1^{(j)}|Q_1^{(j)})-\sum_{t=1}^{T}\Pr(Q_1^{(j)}=\pi_t)H(W_1|A_1^{(j)}=\gamma(\pi_t)),
\end{align}
where $(a)$ also comes from  \eqref{eq:corre}.
We emphasize from the user privacy constraint in \eqref{eq:DP} that all queries or structures must be requested with non-zero probability, otherwise the constraint in \eqref{eq:DP} can not be met. This dictates that $\gamma(\pi_t)$, the answer of any structure $\pi_t$, has to fulfill the decodability conditions, i.e.,
\begin{equation}
  H(W_1|A_1^{(1)}=\gamma(\pi_t))=o(L).  
\end{equation}
As the form of the answer $\gamma(\pi_t)$ is the same regardless of the requested message, this implies that
\begin{equation}
\label{eq:49}
  H(W_1|A_1^{(j)}=\gamma(\pi_t))=H(W_1|A_1^{(1)}=\gamma(\pi_t))=o(L).  
\end{equation}
From \eqref{eq:47} and \eqref{eq:49}, we get the following
\begin{equation}
 H(A_1^{(1)}|W_1,Q_1^{(1)})=H(A_1^{(j)}|W_1,Q_1^{(j)})+H(A_1^{(1)}|Q_1^{(1)})-H(A_1^{(j)}|Q_1^{(j)}).   
\end{equation}
Assuming the symmetry across all messages, we have 
\begin{equation}
    H(A_1^{(1)}|Q_1^{(1)})=H(A_1^{(j)}|Q_1^{(j)}), \quad
\forall j\in[2:K].
\end{equation}
Using this fact, we have
\begin{equation}
\label{eq:51}
    H(A_1^{(1)}|W_1,Q_1^{(1)})=H(A_1^{(j)}|W_1,Q_1^{(j)}), \quad
\forall j\in[2:K].
\end{equation}
This allows us to write $D_{\epsilon,\delta}$ as follows 
\begin{align}
       D_{\epsilon,\delta}&\geq (1-o(L))L+H(A_1^{(2)}|W_1,Q_1^{(2)})\nonumber\\
       &=(1-o(L))L+H(W_2,A_1^{(2)}|W_1,Q_1^{(2)})-H(W_2|A_1^{(2)},W_1,Q_1^{(2)})\nonumber\\
       &=(1-o(L))L+H(W_2,A_1^{(2)}|W_1,Q_1^{(2)})-o(L)L\nonumber\\
       &=(1-2o(L))L+H(W_2|Q_1^{(2)})+H(A_1^{(2)}|W_1,W_2,Q_1^{(2)})\nonumber\\
       &=2(1-o(L))L+H(A_1^{(2)}|W_1,W_2,Q_1^{(2)}).
    \end{align}
Completing the proof inductively using  equations \eqref{eq:47} to \eqref{eq:51}, we get
\begin{align}
       D_{\epsilon,\delta}&\geq(K-1)(1-o(L))L+H(A_1^{(K)}|W_{[1:K-1]},Q_1^{(K)})\nonumber\\
       &=(K-1)(1-o(L))L+H(W_K,A_1^{(K)}|W_{[1:K-1]},Q_1^{(K)})-H(W_K|A_1^{(K)},W_{[1:K-1]},Q_1^{(K)})\nonumber\\
       &=(K-1)(1-o(L))L+H(W_K,A_1^{(K)}|W_{[1:K-1]},Q_1^{(K)})-o(L)\nonumber\\
       &\overset{(a)}{=}(K-1)(1-o(L))L+H(W_K|W_{[1:K-1]},Q_1^{(K)})+H(A_1^{(K)}|W_{[1:K-1]},Q_1^{(K)})-o(L)\nonumber\\
       &\geq K(1-o(L))L,
    \end{align}
where $(a)$ comes from the fact that the answer must be a function of the $K$ messages.
%Now, assume that the requested message is $W_i$, than we can repeat the same inductive process and get
%\begin{equation}
 %   \begin{split}
  %     D^*(\epsilon)&=KL+H(A_1^{(i)})-H(A_1^{(i+K-1 \ \text{mod} \ K)}),\\
   % \end{split}
%\end{equation}
%where ($\text{mod}$) represents the module operation. 
Dividing by $L$ and taking the limit $L\rightarrow \infty$,  we arrive at the desired lower bound:
\begin{equation}
    \begin{split}
       D^*(\epsilon,\delta)\geq K.\\
    \end{split}
\end{equation}

\end{appendices}
\end{document}